\newif\ifSTOC
\STOCfalse

\ifSTOC
\documentclass{sig-alternate}

\else
\documentclass{article}
\usepackage{fullpage}
\fi

\usepackage{url}
\usepackage{amsmath}
\usepackage{amssymb}
\usepackage{graphicx}
\usepackage{stmaryrd}

\usepackage[colorlinks,urlcolor=blue,citecolor=blue,linkcolor=blue]{hyperref}
\newcommand{\normF}[1]{{\| #1 \|}_F}
\newcommand{\norm}[1]{{\| #1 \|}}

\newcommand{\cE}{\mathcal{E}}
\DeclareMathOperator{\nnz}{\mathtt{nnz}}
\DeclareMathOperator{\tr}{\mathtt{tr}} 
\DeclareMathOperator{\rank}{\mathtt{rank}}
\DeclareMathOperator{\argmin}{\mathrm{argmin}}
\DeclareMathOperator{\E}{\mathbf{E}}
\DeclareMathOperator{\Var}{\mathbf{Var}}
\newcommand{\ZZ}{S}

\DeclareMathOperator{\cond}{\mathrm{cond}}
\newcommand{\Ibr}[1]{\llbracket#1\rrbracket}
\newcommand\scn{_{s:n}}
\newcommand\vk{1/a}
\newcommand\kv{a}
\newcommand\fvk{\frac{1}{a}}
\newcommand\gdelta{\delta}
\newcommand\tO{\tilde{O}}

\newcommand\STOComitedproof[1] {%
\ifSTOC%
\begin{proof} Omitted in this version\end{proof}%
\else #1 \fi%
}

\newcommand{\poly}{{\mathrm{poly}}}
\newcommand{\eps}{\varepsilon}
\newcommand{\R}{{\mathbb R}}
\newcommand{\polylog}{{\mathrm{polylog}}}

\newtheorem{theorem}{Theorem}

\newtheorem{lemma}[theorem]{Lemma}

\newtheorem{corollary}[theorem]{Corollary}

\newtheorem{fact}[theorem]{Fact}

\newtheorem{remk}[theorem]{Remark}
\newtheorem{exmp}[theorem]{Example}


\def\FullBox{\hbox{\vrule width 8pt height 8pt depth 0pt}}

\def\qed{\ifmmode\qquad\FullBox\else{\unskip\nobreak\hfil
\penalty50\hskip1em\null\nobreak\hfil\FullBox
\parfillskip=0pt\finalhyphendemerits=0\endgraf}\fi}

\def\qedsketch{\ifmmode\Box\else{\unskip\nobreak\hfil
\penalty50\hskip1em\null\nobreak\hfil$\Box$
\parfillskip=0pt\finalhyphendemerits=0\endgraf}\fi}

\ifSTOC\else
\newenvironment{proof}{\begin{trivlist} \item {\bf Proof:~~}}
  {\qed\end{trivlist}}
\fi

\newenvironment{proofof}[1]{\begin{trivlist} \item {\bf Proof
#1:~~}}
  {\qed\end{trivlist}}



\begin{document}

\ifSTOC
\conferenceinfo{STOC'13,} {June 14, 2013, Palo Alto, California, USA.}
\CopyrightYear{2013}
\crdata{978-1-4503-2029-0/13/06}
\clubpenalty=10000
\widowpenalty = 10000
\fi

\title{Low Rank Approximation and Regression in \\ Input Sparsity Time}
\ifSTOC
\numberofauthors{1} 
\author{\alignauthor Kenneth L. Clarkson and David P. Woodruff \\
\affaddr{IBM Research - Almaden} \\ \affaddr{ San Jose, CA} \\
\email{klclarks@us.ibm.com, dpwoodru@us.ibm.com}
} 
\else 
\author{Kenneth L. Clarkson\\IBM Almaden \and David P. Woodruff\\IBM Almaden}
\fi
\maketitle
\begin{abstract}
We design a new distribution over $\poly(r \eps^{-1}) \times n$ matrices $S$ so that 
for any fixed $n \times d$ matrix $A$ of rank $r$, with probability at least $9/10$, 
$\norm{SAx}_2 = (1 \pm \eps)\norm{Ax}_2$ simultaneously for all $x \in \mathbb{R}^d$.
Such a matrix $S$ is called a \emph{subspace embedding}.
Furthermore, $SA$ can be computed in 
$O(\nnz(A))$
time, where $\nnz(A)$ is the number of non-zero entries of 
$A$. 
This improves over all previous subspace embeddings, which required at least $\Omega(nd \log d)$ time
to achieve this property. We call our matrices $S$ \emph{sparse embedding matrices}. 

Using our sparse embedding matrices, we obtain the fastest known algorithms for 
overconstrained least-squares
regression, low-rank approximation, approximating all leverage scores, and $\ell_p$-regression:
\begin{itemize}
\item to output an $x'$ for which 
\[
\norm{Ax'-b}_2 \leq (1+\eps)\min_x \norm{Ax-b}_2
\]
for an $n \times d$
matrix $A$ and an $n \times 1$ column vector $b$, 
we obtain an algorithm running in $O(\nnz(A)) + \tO(d^3\eps^{-2})$ time,
and another in $O(\nnz(A)\log(1/\eps)) + \tO(d^3\log(1/\eps))$ time.
(Here $\tO(f) = f \cdot \log^{O(1)}(f)$.)
\item to obtain a decomposition of an $n \times n$ matrix $A$ into a product of
an $n \times k$ matrix $L$, a $k \times k$ diagonal matrix $D$, and an $n \times k$ matrix $W$,
for which
$$\normF{A - L D  W^\top} \leq (1+\eps)\normF{A-A_k},$$ 
where $A_k$ is the best rank-$k$ approximation, our algorithm runs 
in
\[
O(\nnz(A)) + \tilde O(nk^2\eps^{-4} + k^3\eps^{-5})
\]
time.
\item to output an approximation to all leverage scores of an $n \times d$ 
input matrix $A$ simultaneously, with constant relative error,
our algorithms run in $O(\nnz(A) \log n) + \tO(r^3)$ time.
\item to output an $x'$ for which
\[
\norm{Ax'-b}_p \leq (1+\eps)\min_x \norm{Ax-b}_p
\]
for an $n \times d$
matrix $A$ and an $n \times 1$ column vector $b$, 
we obtain an algorithm running in $O(\nnz(A) \log n) + \poly(r \eps^{-1})$ time, for any 
constant $1 \leq p < \infty$. 
\end{itemize}
We optimize the polynomial factors in the above stated running times, and show various tradeoffs. 
Finally, we provide preliminary experimental results which suggest that our algorithms are of interest in practice. 
\end{abstract}

\ifSTOC
\category{F.2.1}{Numerical Algorithms and Problems}{Computations on matrices}
\terms{Algorithms, Theory }
\fi

\section{Introduction}
A large body of work has been devoted to the study of fast randomized 
approximation algorithms for problems in numerical linear algebra. Several well-studied
problems in this area include
least squares regression, low rank approximation, 
and approximate computation of leverage scores. These problems have many applications
in data mining \cite{afkms01}, recommendation systems \cite{dkr02}, 
information retrieval \cite{prtv00},
web search \cite{afkm01,k99}, clustering \cite{dfkvv04,m01}, and learning mixtures of distributions \cite{ksv08,am05}. 
The use of randomization
and approximation allows one to solve these problems much faster than with deterministic
methods. 

For example, in the overconstrained least-squares regression problem, we are given an $n \times d$
matrix $A$ of rank $r$ as input, $n \gg d$, together with an $n \times 1$ column vector $b$. The goal is to 
output a vector $x'$ so that with high probability, 
$\|Ax'-b\|_2 \leq (1+\eps)\min_x \|Ax-b\|_2$.
The minimizing vector $x^*$ can be expressed in terms of the Moore-Penrose pseudoinverse $A^-$ of $A$, 
namely, $x^* = A^-b$. If $A$ has full column rank, this simplifies to $x^* = (A^\top A)^{-1}A^\top b$. 
This minimizer can be computed deterministically in $O(nd^2)$ time, but
with randomization and approximation, 
this problem can be solved in $O(nd \log d) + \poly(d \eps^{-1})$ time \cite{s06,dmms11}, 
which is much faster for $d \ll n$ and $\epsilon$ not too small. The generalization of this problem
to $\ell_p$-regression is to output a vector $x'$ so that with high probability
$\|Ax'-b\|_p \leq (1+\eps)\min_x \|Ax-b\|_p$. This can be solved exactly using convex programming,
though with randomization and approximation it is possible to achieve $O(nd \log n) + \poly(d \eps^{-1})$
time \cite{CDMMMW} for any constant $p$, $1 \leq p < \infty$. 

Another example is low rank approximation. Here we are given an $n \times n$ matrix (which can
be generalized to $n \times d$) and an input parameter $k$, and the goal is to find
an $n \times n$ matrix $A'$ of rank at most $k$ for which $\|A'-A\|_F \leq (1+\eps)\|A-A_k\|_F$,
where for an $n \times n$ matrix $B$, $\|B\|^2_F \equiv \sum_{i=1}^n \sum_{j=1}^n B_{i,j}^2$ is the
squared Frobenius norm, and $A_k\equiv \argmin_{\rank B \le k }\|A-B\|_F$.
Here $A_k$ can be computed deterministically using the singular value decomposition in $O(n^3)$ time. 
However, using randomization and approximation, this problem can be solved in  $O(\nnz(A) \cdot (k/\eps + k \log k) +
n \cdot \poly(k/\eps))$ time\cite{s06, cw09}, where  $\nnz(A)$ denotes the number of non-zero entries of $A$. 
The problem can also be solved using randomization and approximation in
$O(n^2 \log n) + n \cdot \poly(k/\eps)$ time \cite{s06}, 
which may be faster than the former for dense matrices and large $k$. 

Another problem we consider is approximating the {\it leverage scores}. 
Given an $n \times d$ matrix $A$ with $n \gg d$, one can write
$A = U \Sigma V^\top $ in its singular value decomposition, where the columns of $U$ are the left singular vectors,
$\Sigma$ is a diagonal matrix, and the columns of $V$ are the right singular vectors. Although $U$
has orthonormal columns, not much can be immediately said about the squared lengths $\|U_i\|_2^2$ of its rows.
These values are known as the leverage scores, and measure the extent to which the singular vectors of $A$ 
are correlated with the standard basis. The leverage scores are basis-independent, since
they are equal to the diagonal elements of the projection matrix onto the
span of the columns of $A$; see \cite{DMMW12} for background on leverage scores 
as well as a list of applications. 
The leverage scores will also play a crucial role in our work, as we shall see. 
The goal of approximating the leverage scores is to, 
simultaneously for each $i \in [n]$, output a constant factor approximation to $\|U_i\|_2^2$. 
Using randomization, this can be solved in $O(nd \log n + d^3 \log d \log n)$ time \cite{DMMW12}.

There are also solutions for these problems based on sampling. They either get a weaker additive error
\cite{fkv04,prtv00,am07,dkm06,dkm06a,dkm06b,dm05,rv07,drvw06},
or they get bounded relative error but are slow \cite{dv06,dmm06,dmm06b,dmm06c}.
Many of the latter algorithms
were improved independently by Deshpande and Vempala \cite{dv06} and Sarl\'os \cite{s06}, and in followup work
\cite{dmms11,ndt09,mz11}. 
%
There are also solutions based on iterative and conjugate-gradient methods, see, e.g., \cite{tb_nla},
or \cite{zf12} as recent examples. These methods repeatedly compute matrix-vector products
$Ax$ for various vectors $x$; in the most common setting, such products require $\Theta(\nnz(A))$
time. Thus the work per iteration of these methods is $\Theta(\nnz(A))$,
and the number of iterations $N$ that are performed depends on the desired accuracy,
spectral properties of $A$, numerical stability issues, and other concerns,
and can be large. A recent survey suggests that $N$ is typically $\Theta(k)$
for Krylov methods (such as Arnoldi
and Lanczos iterations) to approximate
the $k$ leading singular vectors \cite{HMT}.
One can also use some of these techniques together, for example by first obtaining a preconditioner using the
Johnson-Lindenstrauss (JL) transform, and then running an iterative method. 


While these results illustrate the power of randomization and approximation, their main drawback
is that they are not optimal. For example, for regression, ideally we could hope for $O(\nnz(A)) + \poly(d/\eps)$
time. While the $O(nd \log d) + \poly(d/\eps)$ time algorithm for least squares regression
is almost optimal for {\it dense} matrices, if $\nnz(A)\ll nd$, say $\nnz(A)=O(n)$, as commonly
occurs, this could be much worse than an $O(\nnz(A)) + \poly(d/\eps)$ time algorithm.
Similarly, for low rank approximation, the best known algorithms that are condition-independent run in 
$O(\nnz(A) (k/\eps + k \log k) + n \cdot \poly(k/\eps))$ time, while we could hope
for $O(\nnz(A)) + \poly(k/\eps)$ time. 

\subsection{Results}
We resolve the above gaps by achieving algorithms
for least squares regression, low rank approximation,
and approximate leverage scores, whose time complexities have 
a leading order term that is
$O(\nnz(A))$, sometimes up to a log factor,
with constant factors that are independent of any numerical properties of $A$.
Our results are as follows:
\begin{itemize}
\item {\bf Least Squares Regression:} We present several algorithms 
for an $n \times d$ matrix $A$ with rank $r$ and given
$\eps > 0$. One 
has running time bound of
$O(\nnz(A)\log (n/\eps) + r^3 \log^2 r + r^2\log(1/\eps))$,
stated at Theorem~\ref{thm:it reg}. (Note the logarithmic dependence on $\eps$;
a variation of this algorithm has $O(\nnz(A)\log(1/\eps)+ d^3 \log^2 d + d^2\log(1/\eps))$
running time.)
Another has 
running time $O(\nnz(A)) + \tO(d^3 \eps^{-2})$, stated at
Theorem~\ref{thm:lin reg}; note that the dependence on $\nnz(A)$ is
is linear.
We also give an algorithm for generalized (multiple-response)
regression, where $\min_X \norm{AX-B}$ is found for $B\in\R^{n\times d'}$,
in time 
\[
O(\nnz(A)\log n + r^2((r+d')\eps^{-1} + rd' + r\log^2 r + \log n));
\]
see Theorem~\ref{thm:renRegAlg}.
We also note improved results for constrained regression,
\S\ref{subsec:constrained}.
\item {\bf Low Rank Approximation:}
We achieve running time
$O(\nnz(A)) + n \cdot \poly(k(\log n)/\eps)$
to find an orthonormal $L,W\in\R^{n\times k}$ and diagonal $D\in\R^{k\times k}$ matrix
with $\norm{A-LDW^\top}_F$ within $1+\eps$ of the error of the best rank-$k$
approximation. More specifically,
Theorem~\ref{thm:SVD} gives a time bound
of
\[
O(\nnz(A)) + \tilde O(nk^2\eps^{-4} + k^3\eps^{-5}).
\]
\item {\bf Approximate Leverage Scores:} For any fixed constant $\eps > 0$, we simultaneously
$(1+\eps)$-approximate all $n$ leverage scores in 
$O(\nnz(A) \log n + r^3 \log^2 r + r^2 \log n)$ time. 
This can be generalized to sub-constant $\eps$ to achieve $O(\nnz(A) \log n) + \poly(r/\eps)$ time, 
though in the applications we are aware of, such
as coresets for regression \cite{ddhkm09}, $\eps$ is typically constant 
(in the applications of this, a general $\eps > 0$ can be achieved
by over-sampling \cite{dmm06,ddhkm09}).
\item {\bf $\ell_p$-Regression:}
For $p \in [1,\infty)$
we achieve
running time $O(\nnz(A) \log n) + \poly(r \eps^{-1})$
in 
Theorem \ref{thm:lp-running} as an immediate corollary of our results and a recent connection between 
$\ell_2$ and $\ell_p$-regression given in \cite{CDMMMW}
(for $p = 2$, the $\nnz(A) \log n$ term can be improved to $\nnz(A)$ as 
stated above).
\end{itemize}
\subsection{Techniques}
All of our results are achieved by improving the time complexity of computing what is known 
as a {\it subspace embedding}. For a given $n\times d$ matrix $A$, call $S:\R^n\mapsto\R^t$
a \emph{subspace embedding matrix} for $A$ if, for all $x\in\R^d$, $\|SAx\|_2 = (1 \pm \eps) \|Ax\|_2$.
That is, $S$ embeds the column space $C(A)\equiv \{Ax \mid x\in \R^d\}$ into $\R^t$
while approximately preserving the norms of all vectors in that subspace.

The \emph{subspace embedding problem} is to find such an embedding matrix obliviously, that is,
 to design a distribution $\pi$ over linear maps $S:\R^n\mapsto\R^t$ such that for any
 fixed $n \times d$ matrix $A$, if we choose $S \sim \pi$ then with large probability,
 $S$ is an embedding matrix for $A$. The goal is to minimize
$t$ as a function of $n, d,$ and $\eps$, while also allowing the matrix-matrix product
$S \cdot A$ to be computed quickly.

(A closely related construction, easily derived from a subspace embedding,
is an \emph{affine embedding}, involving an additional matrix $B\in\R^{n\times d'}$,
such that
\ifSTOC
$\norm{AX-B}_F\approx \norm{S(AX-B)}_F$,
\else
\[\norm{AX-B}_F\approx \norm{S(AX-B)}_F,\]
\fi
for all $X\in\R^{d\times d'}$;
see \S\ref{subsec:genAff}. These affine embeddings are used for our low-rank
approximation results, and immediately imply approximation algorithms
for constrained regression.)

By taking $S$ to be a Fast Johnson Lindenstrauss transform, one can set $t = O(d/\eps^2)$ and achieve $O(nd \log t)$ time
for $d < n^{1/2-\gamma}$ for any constant $\gamma > 0$. One can also take $S$ to be a subsampled randomized Hadamard transform, or SRHT
(see, e.g., Lemma 6 of \cite{BG}) and set $t = O(\eps^{-2} (\log d)(\sqrt{d}+\sqrt{\log n})^2)$,
to achieve $O(nd \log t)$ time.
These were the fastest known subspace embeddings achieving any value of $t$
not depending polynomially on $n$. Our main result improves this to achieve $t= \poly(d/\eps)$ for matrices $S$
for which $SA$ can be computed in $\nnz(A)$ time! Given our new subspace embedding, we plug it into known methods of solving
the above linear algebra problems given a subspace embedding as a black box.

In fact, our subspace embedding is nothing other than the {\sf CountSketch} matrix in the data stream literature \cite{ccf04}, see also
\cite{tz04}. This matrix was also studied by Dasgupta, Kumar, and Sarl\'os \cite{dks10}. Formally, $S$ has a single randomly chosen non-zero
entry $S_{h(j), j}$ in each column $j$, for a random mapping $h : [n] \mapsto [t]$.
With probability $1/2$, $S_{h(j), j} = 1$, and with probability $1/2$, $S_{h(j), j} = -1$. 

While such matrices $S$ have been studied before, the surprising fact is that they actually provide subspace embeddings.
Indeed, the usual way of proving that a random $S \sim \pi$ is a subspace embedding is to show that for any fixed vector $y \in \mathbb{R}^d$,
$\Pr[\|Sy\|_2 = (1 \pm \eps) \|y\|_2] \geq 1-\exp(-d)$. One then puts a net (see, e.g., \cite{ahk06}) on the unit vectors in the column
space  $C(A)$, and argues by a union bound that $\|Sy\|_2 = (1 \pm \eps)\|y\|_2$ for all net points $y$. This then implies, for a net that
is sufficiently fine, and using the linearity of the mapping, 
that $\|Sy\|_2 = (1 \pm \eps)\norm{y}_2$ for all vectors $y\in C(A)$. 

We stress that our choice of matrices $S$ does not preserve the norms of 
an arbitrary
set of $\exp(d)$ vectors with high probability, and so
the above approach cannot work for our choice of matrices $S$.
We instead critically use that these $\exp(d)$ vectors all come
from a $d$-dimensional subspace (namely, $C(A)$), and
therefore have a very special structure.
The structural fact we use is that there is a fixed set $H$ of size $d/\alpha$ which depends only on the
subspace, such that for any unit vector $y\in C(A)$, $H$ contains the indices
of all coordinates of $y$ larger than $\sqrt{\alpha}$ in magnitude.
The key property here is that the set $H$
is independent of $y$, or in other words, only a small set of coordinates could ever be large
as we range over all unit vectors in the subspace.
The set $H$ selects exactly the set of large leverage scores of the columns space $C(A)$!

Given this observation, by setting $t \geq K |H|^2$ for a large enough constant $K$, we have that with probability $1-1/K$, there are no
two distinct $j \neq j'$ with $j, j' \in H$ for which $h(j) = h(j')$. That is, we avoid the birthday paradox,
and the coordinates in $H$ are ``perfectly hashed'' with large
probability. Call this event $\cE$, which we condition on. 

Given a unit vector $y$ in the subspace, we can write it as $y^H + y^L$, where $y^H$ consists of $y$ with
the coordinates in $[n] \setminus H$ replaced with $0$, while $y^L$ consists of $y$ with the coordinates in $H$ replaced with $0$. We seek
to bound $$\|Sy\|_2^2 = \|Sy^H\|_2^2 + \|Sy^L\|_2^2 + 2\langle Sy^H, Sy^L \rangle.$$

Since $\cE$ occurs,
we have the isometry $\|Sy^H\|_2^2 = \|y^H\|_2^2$. Now, $\|y^L\|^2_{\infty} < \alpha $, and so we can apply Theorem 2 of 
\cite{dks10} which shows that for mappings of our form, if the input vector has small infinity norm, then $S$ preserves the norm of the
vector up to an additive $O(\eps)$ factor with high probability. Here, it suffices to set $\alpha = 1/\poly(d/\eps)$. 

Finally, we can bound
$\langle Sy^H, Sy^L \rangle$ as follows. Define $G \subseteq [n] \setminus H$ to be the set of coordinates $j$ for
which $h(j) = h(j')$ for a coordinate $j' \in H$, that is, those coordinates in $[n] \setminus H$ which ``collide'' with an element of~$H$.
Then, $\langle Sy^H, Sy^L \rangle = \langle Sy^H, Sy^{L'} \rangle$, where $y^{L'}$ is a vector which agrees with $y^L$ on coordinates
$j \in G$, and is $0$ on the remaining coordinates. By Cauchy-Schwarz, this is at most $\|Sy^H\|_2 \cdot \|Sy^{L'}\|_2$.
We have already argued that $\|Sy^H\|_2 = \|y^H\|_2 \leq 1$ for unit vectors $y$. Moreover, we can again apply Theorem 2 of \cite{dks10}
to bound $\|Sy^{L'}\|_2$, since, conditioned on the coordinates of $y^{L'}$ hashing to the set of items that the coordinates of $y^H$ hash to, they
are otherwise random, and so we again have a mapping of our form (with a smaller $t$ and applied to a smaller $n$) applied to a vector
with small infinity-norm. Therefore, $\|Sy^{L'}\|_2 \leq O(\eps) + \|y^{L'}\|_2$ with high probability. Finally, by Bernstein bounds, since the
coordinates of $y^L$ are small and $t$ is sufficiently large, $\|y^{L'}\|_2 \leq \eps$ with high probability. Hence, conditioned on event $\cE$,
$\|Sy\|_2 = (1 \pm \eps)\|y\|_2$ with probability $1-\exp(-d)$, and we can complete the argument by union-bounding over a sufficiently
fine net. 

We note that an inspiration for this work comes from work on estimating norms in a data stream with efficient update time 
by designing separate data structures for the heavy and the light components of a vector \cite{nw10,knpw11}. A key concept here is to
characterize the heaviness of coordinates in a vector space in terms of its leverage scores. 
\\\\
{\bf Optimizing the additive term:}
The above approach already illustrates the main idea behind our subspace embedding, providing the first known subspace
embedding that can be implemented in $\nnz(A)$ time. This is sufficient
to achieve our numerical linear algebra results in time $O(\nnz(A)) + \poly(d/\eps)$
for regression and $O(\nnz(A)) + n \cdot \poly(k\log(n)/\eps)$ for low rank approximation. However, 
for some applications $d, k,$ or $1/\eps$ may also be large, 
and so it is important to achieve a small degree 
in the additive $\poly(d/\eps)$ and $n \cdot \poly(k\log(n)/\eps)$ factors. 
The number of rows of the matrix $S$ is $t = \poly(d/\eps)$,
and the simplest analysis described above would give roughly 
$t = (d/\eps)^8$. We now show how to optimize this.

The first idea
for bringing this down is that the analysis of \cite{dks10} can itself 
be tightened by using that we are applying it on vectors coming 
from a subspace instead of on a set of arbitrary vectors. This involves observing that in the analysis of \cite{dks10}, if on
input vector $y$ and for every $i \in [t]$, $\sum_{j \mid h(j) = i} y_j^2$ is small then the remainder of the analysis of \cite{dks10}
does not require that $\|y\|_{\infty}$ be small. Since our vectors come from a subspace, it suffices to show that for every $i \in [t]$, 
$\sum_{j \mid h(j) = i} \|U_j\|_2^2$ is small, where $\|U_j\|_2^2$ is the $j$-th leverage score of $A$. Therefore we do not
need to perform this analysis for each $y$, but can condition on a single event, and this effectively allows us to 
increase $\alpha$ in the outline above, thereby reducing the size of $H$, and also the size of $t$ since we have
$t = \Omega(|H|^2)$. In fact, we instead follow a simpler and slightly tighter analysis of \cite{KN12} based on the Hanson-Wright
inequality. 

Another idea is that the estimation of $\|y^H\|_2$, the contribution from the 
``heavy coordinates'', is inefficient since it
requires a perfect hashing of the coordinates, which can be optimized to reduce the
additive term to $d^2 \eps^{-2} \polylog(d/\eps)$. In the worst case, there are $d$ leverage scores of value
about $1$, $2d$ of value about $1/2$, $4d$ of value about $1/4$, etc. While the top $d$ leverage
scores need to be perfectly hashed (e.g., if $A$ contains the $d \times d$ identity matrix as a
submatrix), it is not necessary that the leverage scores of smaller value, yet still larger than $1/d$,
be perfectly hashed. Allowing a small number of collisions is okay provided all vectors in the subspace have
small norm on these collisions, which just corresponds to the spectral norm of a submatrix of $A$. This 
gives an additive term of $d^2 \eps^{-2} \polylog(d/\eps)$ instead of $O(d^4 \eps^{-4})$. 
This refinement is discussed in Section \S\ref{sec:partition}.

There is yet another way to optimize the additive term to roughly $d^2 (\log n)/\eps^4$, which is useful in its own
right since the error probability of the mapping can now be made very low, namely, $1/\poly(n)$. This low error
probability bound is needed for our application to $\ell_p$-regression, see Section \ref{sec: ell_p}.
By standard balls-and-bins analyses, if we have $O(d^2/\log n)$ bins and $d^2$ balls, then with 
high probability each bin will contain $O(\log n)$ balls. We thus make $t$ roughly $O(d^2/ \log n)$ and think
of having $O(d^2/\log n)$ bins. In each bin $i$, $O(\log n)$ heavy coordinates $j$ 
will satisfy $h(j) = i$.
Then, we apply a separate JL transform 
on the coordinates that hash to each bin $i$. 
This JL
transform maps a vector $z \in \mathbb{R}^n$ to an $O((\log n) / \eps^2)$-dimensional vector $z'$ for which
$\|z'\|_2 = (1 \pm \eps) \|z\|_2$ with probability at least $1-1/\poly(n)$. Since there are only $O(\log n)$
heavy coordinates mapping to a given bin, we can put a net on all vectors on such coordinates of size only 
$\poly(n)$. We can do this for each of the 
$O(d^2 / \log n)$ bins and take a union bound. It follows 
that the $2$-norm of the vector of coordinates that hash to each bin is preserved, and so the entire
vector $y^H$ of heavy coordinates has its $2$-norm preserved. By a result of \cite{KN12}, 
the JL
transform can be implemented in $O((\log n) / \eps)$ time, giving total time $O(\nnz(A) (\log n) / \eps)$, and
this reduces $t$ to roughly $O(d^2 \log n)/\eps^4$. 

We also note that for applications such as least squares regression, it suffices to set
$\eps$ to be a constant in the subspace embedding, since we can use an approach in \cite{dmm06,ddhkm09}
which, given constant-factor 
approximations to all of the
leverage scores, can then achieve a $(1+\eps)$-approximation to least squares regression by slightly 
over-sampling rows of the adjoined matrix $A \circ b$ proportional to its leverage scores,
and solving the induced subproblem. This results in a better dependence on $\eps$. 

We can also compose our subspace embedding with a fast JL transform
to further reduce $t$ to the optimal value of about $d/\eps^2$. Since $S \cdot A$ already has small
dimensions, applying a fast JL transform is now efficient. 

Finally, we can use a recent result of \cite{ckl12} to replace most dependencies on $d$ in 
our running times for regression with a dependence on the rank $r$ of $A$, which may be smaller. 

Note that when a matrix $A$ is input that has leverage scores that are roughly equal to each other,
then the set $H$ of heavy coordinates is empty. Such a leverage score condition is assumed, for example,
in the analysis of matrix completion algorithms. For such matrices, the sketching dimension can
be made $d^2 \eps^{-2} \log(d/\eps)$, slightly improving our $d^2 \eps^{-2} \polylog(d/\eps)$ dimension above.

\subsection{Recent Related Work}
In the first version of our technical report on these ideas (July, 2012), 
the additive $\poly(k,d,1/\eps)$ terms 
were not optimized, while in the second version, the additive terms were more refined,
and results on $\ell_p$-regression for general $p$ were given,
but the analysis of sparse embeddings in \S\ref{sec:partition} was absent.
In the third version, we refined the dependence still further, with the
partitioning in \S\ref{sec:partition}.
Recently, a number of authors have told us of
followup work, all building upon our initial technical report. 

Miller and Peng showed that $\ell_2$-regression can be done
with the additive term sharpened to sub-cubic dependence
on $d$, and with linear dependence on $\nnz(A)$ \cite{MP}.
More fundamentally, they showed that a subspace embedding
can be found in $O(\nnz(A) + d^{\omega + \alpha}\eps^{-2})$ time, to
dimension
\[
O((d^{1+\alpha}\log d + \nnz(A)d^{-3})\eps^{-2});
\]
here $\omega$ is the exponent for asymptotically fast matrix multiplication,
and $\alpha>0$ is an arbitrary constant. (Some constant factors here are increasing
in $\alpha$.)

Nelson and Nguyen obtained similar results for regression, and
showed that sparse embeddings can embed into dimension $O(d^2/\eps^2)$
in $O(\nnz(A))$ time; this considerably improved on our dimension bound for that running time,
at that point (our second version),
although our current bound is within $\polylog(d/\eps)$ of their result. They also showed a
dimension bound of
$O(d^{1+\alpha})$
for $\alpha>0$, with work $O(f(\alpha)\nnz(A)\eps^{-1})$ for a particular
function of $\alpha$. Their analysis techniques are quite different from ours \cite{NN}.

Both of these papers use fast matrix multiplication to achieve
sub-cubic dependence on $d$ in applications, where our cubic term involves a JL transform,
which may have favorable properties in practice. Regarding subspace embeddings
to dimensions near-linear in $d$, note that by computing leverage scores
and then sampling based on those scores, we can obtain subspace
embeddings to $O(d\eps^{-2}\log d)$ dimensions in $O(\nnz(A)\log n) + \tO(r^3)$
time; this may be incomparable to the results just mentioned, for which the running 
times increase as $\alpha\rightarrow 0$, possibly significantly.

Paul, Boutsidis, Magdon-Ismail, and Drineas \cite{sbmd} implemented 
our subspace embeddings and found that
in the TechTC-300 matrices, a collection of 300 sparse matrices of document-term data, with an 
average of 150 to 200 rows and 15,000 columns, our subspace embeddings as used
for the projection step in their SVM classifier are about 20 times faster than the
Fast JL Transform, while maintaining the same classification accuracy. Despite this
large improvement in the time for projecting the data, further research is needed
for SVM classification, as the JL Transform empirically possesses additional properties important 
for SVM which make it faster to classify the projected data, 
even though the time to project the data using our method is faster. 

Finally, Meng and Mahoney improved on the first version of our additive terms for 
subspace embeddings, and
showed that these ideas can also be applied to $\ell_p$-regression, 
for $1 \leq p < 2$ \cite{MengMahoney}; our work on this in \S\ref{sec: ell_p}
achieves $1 \leq p < \infty$ and was done independently.
We note that our algorithms for $\ell_p$-regression require constructions
of embeddings that are successful with high probability, as we obtain
for generalized embeddings, and so some of the constructions in \cite{MP,NN} 
(as well as our non-generalized embeddings) will not yield such $\ell_p$ results. 

\subsection{Outline}

We introduce basic notation and definitions in \S\ref{sec:sparse embed}, and then the basic 
analysis in \S\ref{sec:analysis}. A more refined analysis is given in \S\ref{sec:partition}, and then
generalized embeddings, with high probability guarantees, in \S\ref{sec:generalized}.
In these sections, we generally follow the framework discussed above, splitting coordinates
of columnspace vectors into sets of ``large'' and ``small'' ones, analyzing each such set 
separately, and then bringing these analyses together.
Shifting to applications, we discuss leverage score approximation in
\S\ref{sec:leverage}, and regression in \S\ref{sec:regression}, including the use of leverage
scores and the algorithmic machinery used to estimate them, and considering
affine embeddings in \S\ref{subsec:genAff}, constrained regression in \S\ref{subsec:constrained},  and iterative methods 
in \S\ref{subsec:iterative}.
Our low-rank approximation algorithms are given in \S\ref{sec:low rank},
where we use constructions and analysis based on leverage
scores and regression.
We next apply generalized sparse embeddings to $\ell_p$-regression, in \S\ref{sec: ell_p}.
\ifSTOC
\else
Finally, in \S\ref{sec:exper}, we give some preliminary experimental results.
\fi


\section{Sparse Embedding Matrices}\label{sec:sparse embed}

We let $\norm{A}_F$ or $\norm{A}$ denote the Frobenius norm of matrix $A$,
and $\norm{A}_2$ denote the spectral norm of $A$.

Let $A \in \mathbb{R}^{n \times d}$. We assume $n > d$. 
Let $\nnz(A)$ denote the number of non-zero entries of $A$. We can assume
$\nnz(A) \geq n$ and that there are no all-zero rows or columns in $A$. 

For a parameter $t$,
we define a random linear map $\Phi D: \mathbb{R}^n \rightarrow \mathbb{R}^t$ as follows:
\begin{itemize}
\item $h : [n] \mapsto [t]$ is a random map so that for each $i\in [n]$, $h(i)=t'$ for $t'\in [t]$ with
probability $1/t$.
\item $\Phi \in \{0,1\}^{t \times n}$ is a $t \times n$ binary 
matrix with $\Phi_{h(i), i} = 1$, and all remaining entries $0$.
\item $D$ is an $n \times n$ random diagonal matrix, with each diagonal
entry independently chosen to be $+1$ or $-1$ with equal probability.
\end{itemize} 
We will refer to a matrix of the form $\Phi D$ as a {\it sparse embedding matrix}.

\section{Analysis}\label{sec:analysis}
Let $U \in \mathbb{R}^{n \times r}$ have columns that form an
orthonormal basis for the column space $C(A)$. Let $U_{1, *}, \ldots, U_{n, *}$ be the rows
of $U$, and let $u_i\equiv \norm{U_{i,*}}^2$.

It will be convenient to regard the rows of $A$ and $U$ to be re-arranged so
that the $u_i$ are in non-increasing order, so $u_1$ is largest; of course
this order is unknown  and un-used by our algorithms.

For $u\in\R^n$ and $1\le a\le b\le n$,
let $u_{a:b}$ denote the vector with $i$'th coordinate equal to $u_i$ when $i\in [a,b]$,
and zero otherwise.

Let $T > 0$ be a parameter. Throughout, we let $s\equiv \min\{i | u_i \le T\}$,
and $s'\equiv \max\{i | \sum_{s \le j\le i} u_j \le 1\}$.

We will use the notation $\Ibr{P}$, a function on event $P$, that returns 1 when
$P$ holds, and 0 otherwise.

The following variation of Bernstein's inequality\footnote{See Wikipedia entry
on Bernstein's inequalities (probability theory).}
 will be helpful.

\begin{lemma}\label{lem:Bern}
For $L,T\ge 0$ and independent random variables $X_i\in [0,T]$ with
$V \equiv \sum_i \Var[X_i]$, if $V\le LT^2/6$, then
\[
\Pr\left[\sum_i X_i \ge \sum_i \E[X_i] + LT\right] \le \exp(-L).
\]
\end{lemma}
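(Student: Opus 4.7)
The plan is to derive this as a direct corollary of the standard (one-sided) Bernstein inequality for bounded, non-negative, independent random variables. Recall that the classical form states: if $X_1,\ldots,X_m$ are independent with $X_i - \E[X_i] \le T$ for each $i$, and $V = \sum_i \Var[X_i]$, then for any $t \ge 0$,
\[
\Pr\left[\sum_i X_i - \sum_i \E[X_i] \ge t\right] \le \exp\!\left(-\frac{t^2/2}{V + Tt/3}\right).
\]
Since the hypothesis here gives $X_i \in [0,T]$, we certainly have $X_i - \E[X_i] \le T$, so this form applies.

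First, I would specialize to $t = LT$. The exponent then becomes
\[
-\frac{(LT)^2/2}{V + T(LT)/3} \;=\; -\frac{L^2 T^2/2}{V + LT^2/3}.
\]
Next, I would invoke the assumption $V \le LT^2/6$ to bound the denominator:
\[
V + LT^2/3 \;\le\; LT^2/6 + LT^2/3 \;=\; LT^2/2.
\]
Substituting, the exponent is at most $-L^2T^2/2 \,\big/\, (LT^2/2) = -L$, which yields the claim.

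I do not expect any real obstacle: this lemma is a clean, quantitative repackaging of Bernstein's inequality tuned so that the variance hypothesis $V \le LT^2/6$ makes the ``$Tt/3$'' term in the denominator dominate the variance term, giving a clean subgaussian-type tail of $e^{-L}$. The only thing to be careful about is that the standard statement is often written with the condition $|X_i - \E[X_i]| \le T$, whereas here we only require $X_i \in [0,T]$; but this is strictly stronger, so no additional work is needed.
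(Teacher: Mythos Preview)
Your proposal is correct and takes essentially the same approach as the paper: both apply Bernstein's inequality directly and verify that the exponent is at most $-L$ when $z=LT$ and $V\le LT^2/6$. The paper does the algebra via the quadratic formula (solving for the threshold $z$ at which the exponent equals $-L$, then checking that $z=LT$ meets it), whereas your direct substitution and denominator bound is a bit cleaner, but the argument is the same.
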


\STOComitedproof{
\begin{proof}
Here Bernstein's inequality says that for $Y_i\equiv X_i - \E[X_i]$,
so that $\E[Y^2_i] = \Var[X_i]=V$ and $|Y_i| \le T$,
\[
\log\Pr\left[\sum_i Y_i \ge z\right] \le \frac{-z^2/2}{V + zT/3}.
\]
By the quadratic
formula, the latter is no more than $-L$ when
\[
z \ge \frac{LT}{3}(1 + \sqrt{1 + 18V/LT^2}),
\]
which holds for $z\ge LT$ and $V\le LT^2/6$.
\end{proof}
}

\subsection{Handling vectors with small entries}\label{sec:small}
We begin the analysis by considering $y\scn$ for fixed unit vectors $y\in C(A)$.
Since $\norm{y}=1$, there must be a unit vector $x$ so that $y=Ux$,
and so by Cauchy-Schwartz,
$\norm{y_i}^2\le \norm{U_{i,*}}^2\norm{x}^2 = u_i$.
This implies that $\|y\scn\|_{\infty}^2 \leq u_s$. 
We extend this to all unit vectors in subsequent sections.  

The following is similar to Lemma~6 of \cite{dks10},
and is a standard balls-and-bins analysis.

%
%

\begin{lemma}\label{lem:even hash}
For $\delta_h, T, t>0$, and $s\equiv \min\{i \mid u_i\le T\}$,
let $\cE_h$ be the event that
\[
W
	\ge \max_{j\in [t]} \sum_{\substack{i\in h^{-1}(j)\\ i\ge s}} u_i,
\]
where $W\equiv T\log(t/\delta_h) + r/t$.
If 
\[
t\ge \frac{6\norm{u\scn}^2}{T^2\log(t/\delta_h)},
\] then
$\Pr[\cE_h] \ge 1-\delta_h$.
\end{lemma}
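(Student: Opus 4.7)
The plan is to fix a bin $j\in[t]$, apply the Bernstein inequality from Lemma~\ref{lem:Bern} to the random load of bin $j$, and then union-bound over all $t$ bins. Concretely, for each $i\ge s$ I would set $X_i \equiv u_i\cdot\Ibr{h(i)=j}$. These are independent across $i$ (since $h$ acts independently on each coordinate), and because $i\ge s$ forces $u_i\le T$, each $X_i$ lies in $[0,T]$, matching the hypothesis of Lemma~\ref{lem:Bern}.

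Next I would compute the two moments. The expectation is $\E[X_i]=u_i/t$, so
\[
\sum_{i\ge s}\E[X_i]=\frac{1}{t}\sum_{i\ge s}u_i \le \frac{1}{t}\sum_i u_i = \frac{r}{t},
\]
using that $\sum_i u_i = \tr(U^\top U)=r$. The variance satisfies
\[
\Var[X_i] = u_i^2\cdot\tfrac{1}{t}\bigl(1-\tfrac{1}{t}\bigr) \le \frac{u_i^2}{t},
\]
so $V\equiv\sum_{i\ge s}\Var[X_i]\le \norm{u\scn}^2/t$. Setting $L\equiv \log(t/\delta_h)$, the hypothesis $t\ge 6\norm{u\scn}^2/(T^2\log(t/\delta_h))$ is exactly the condition $V\le LT^2/6$ required by Lemma~\ref{lem:Bern}.

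Lemma~\ref{lem:Bern} then yields
\[
\Pr\Bigl[\sum_{i\ge s}X_i \ge \tfrac{r}{t}+T\log(t/\delta_h)\Bigr]
\le \Pr\Bigl[\sum_{i\ge s}X_i \ge \sum_{i\ge s}\E[X_i]+LT\Bigr]
\le e^{-L} = \delta_h/t.
\]
Since $\sum_{i\ge s}X_i$ is exactly the sum appearing in the definition of $\cE_h$ for bin $j$, and the right-hand side above is $W$, a union bound over the $t$ choices of $j\in[t]$ gives the claimed bound $\Pr[\neg\cE_h]\le \delta_h$.

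I do not expect a serious obstacle here; this is a textbook Bernstein plus union bound, and the only subtlety is making sure that the hypothesis of Lemma~\ref{lem:Bern} on the variance budget is satisfied, which was evidently built into the statement. One minor point worth checking carefully is that the mean contribution $r/t$, not $\norm{u\scn}_1/t$, is what ends up in $W$; this is harmless since $\norm{u\scn}_1\le r$, and writing $r/t$ keeps the bound in terms of the rank rather than a quantity that depends on the threshold $T$.
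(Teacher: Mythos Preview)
Your proposal is correct and essentially identical to the paper's own proof: fix a bin $j$, set $X_i = u_i\Ibr{h(i)=j}$ for $i\ge s$, bound the mean by $r/t$ and the variance sum by $\norm{u\scn}^2/t$, apply Lemma~\ref{lem:Bern} with $L=\log(t/\delta_h)$, and union-bound over the $t$ bins. The only cosmetic difference is that the paper bounds $V$ via $\E[X_i^2]=u_i^2/t$ directly, while you compute $\Var[X_i]=u_i^2\tfrac{1}{t}(1-\tfrac{1}{t})\le u_i^2/t$; both arrive at the same estimate.
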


\STOComitedproof{
\begin{proof}
We will apply Lemma~\ref{lem:Bern} to
prove that the bound holds for fixed $j\in [t]$ with failure probability
$\delta_h/t$, and then apply a union bound. 

Let $X_i$ denote the random variable $u_i \Ibr{h(i)=j, i\ge s}$.
We have $0\le X_i \le T$, $\E[X] = \sum_{i\ge s} u_i/t \le r/t$, and
$V = \sum_{i\ge s} \E[X_i^2] = \sum_{i\ge s} u_i^2/t  = \norm{u\scn}^2/t$.
Applying Lemma~\ref{lem:Bern} with $L=\log(t/\delta_h)$ gives
\[
\Pr[\sum_i X_i \ge T\log(t/\delta_h) + r/t] \le \exp(-\log(t/\delta_h)) = \delta_h/t,
\]
when $\norm{u\scn}^2/t \le LT^2/6$,
or $t\ge 6\norm{u\scn}^2/LT^2$.
\end{proof}
}

\begin{lemma}\label{lem:HW use}
For $W$ as in Lemma~\ref{lem:even hash}, suppose
the event $\cE_h$ holds. Then for unit vector $y\in C(A)$, and any $2 \leq \ell \leq 1/W$,
with failure probability $\delta_L = e^{-\ell}$,
$ | \norm{\Phi D y\scn}_2^2 - \norm{y\scn}^2 | \le K_L\sqrt{W \log(1/\delta_L)}$,
where $K_L$ is an absolute constant.
\end{lemma}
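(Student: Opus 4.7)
The plan is to recognize that $\|\Phi D\, y\scn\|_2^2 - \|y\scn\|^2$ is a mean-zero Rademacher chaos in the diagonal signs of $D$, and to prove the tail bound by invoking the Hanson--Wright inequality. Writing $z\equiv y\scn$ and letting $\sigma_i\in\{\pm 1\}$ denote the diagonal entries of $D$,
\[
\|\Phi D z\|_2^2 = \sum_{j\in[t]}\Bigl(\sum_{i:h(i)=j}\sigma_i z_i\Bigr)^{\!2} = \|z\|^2 + \sigma^\top M\sigma,
\]
where $M$ is the symmetric matrix with zero diagonal and $M_{ii'}=z_iz_{i'}\,\Ibr{h(i)=h(i'),\;i\ne i'}$. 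Since $z$ is supported on $\{s,\ldots,n\}$, so is $M$. The off-diagonal Rademacher chaos $\sigma^\top M\sigma$ has mean zero, and Hanson--Wright will give
\[
\Pr\bigl[|\sigma^\top M\sigma|\ge \tau\bigr]\le 2\exp\!\bigl(-c\min\{\tau^2/\|M\|_F^2,\;\tau/\|M\|_2\}\bigr).
\]

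The key step is then to bound $\|M\|_2$ and $\|M\|_F$, conditional on the hash $h$ and hence on $\cE_h$. Here I use the subspace hypothesis: since $y=Ux$ for some unit $x$, Cauchy--Schwarz gives $z_i^2\le u_i$ for $i\ge s$. Define $v^{(j)}$ to be supported on $h^{-1}(j)\cap\{s,\ldots,n\}$ with $v^{(j)}_i=z_i$; the vectors $v^{(j)}$ have disjoint supports and $M=\sum_j\bigl(v^{(j)}v^{(j)\top}-\mathrm{diag}(v^{(j)}\!\circ\! v^{(j)})\bigr)$, so
\[
\|M\|_2 \le \max_j\|v^{(j)}\|^2 \le \max_j\!\!\sum_{\substack{i\in h^{-1}(j)\\ i\ge s}}\!\! u_i \le W
\]
by $\cE_h$, and similarly $\|M\|_F^2\le \sum_j\|v^{(j)}\|^4\le \max_j\|v^{(j)}\|^2\cdot\|z\|^2\le W$.

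To finish, I would set $\tau = K_L\sqrt{W\log(1/\delta_L)}=K_L\sqrt{W\ell}$ in Hanson--Wright; with the bounds above, the quadratic exponent becomes $cK_L^2\ell$ and the linear one becomes $cK_L\sqrt{\ell/W}$. The hypothesis $\ell\le 1/W$ is exactly what forces $\sqrt{\ell/W}\ge \ell$, so the quadratic term controls, and choosing $K_L$ a sufficiently large absolute constant makes the tail probability at most $\delta_L$.

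The main obstacle I anticipate is not Hanson--Wright itself, but the bookkeeping that shows \emph{both} $\|M\|_F^2$ and $\|M\|_2$ are $O(W)$ conditional on $\cE_h$. This is precisely where the leverage-score structure of $C(A)$ enters (via $z_i^2\le u_i$); a naive Hanson--Wright application would need $\|z\|_\infty^2$, but by routing the bound through the per-bin sums $\sum_{i\in h^{-1}(j), i\ge s}u_i$ controlled by $\cE_h$, we replace a quantity that depends on the particular $y$ by one that depends only on the subspace and the hash. After that, only the constant-juggling between $\ell$, $W$, $\delta_L$, and $K_L$ remains, which is routine.
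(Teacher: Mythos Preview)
Your proposal is correct and is essentially the same argument as the paper's. The paper also writes $\|\Phi D y\scn\|_2^2$ as a Rademacher quadratic form, bounds both $\|B\|_F^2$ and $\|B\|_2$ by $W$ via $y_i^2\le u_i$ and the event $\cE_h$ (exactly the per-bin sums you use), and then invokes Hanson--Wright; the only cosmetic difference is that the paper keeps the diagonal in $B$ and subtracts the trace, applying the moment form of Hanson--Wright followed by Markov, whereas you zero the diagonal and apply the tail form directly.
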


\begin{proof}
We will use the following theorem, due to Hanson and Wright.

\begin{theorem}\cite{HW}\label{thm:hw}
Let $z\in\R^n$ be a vect
or of i.i.d. $\pm 1$ random values. For any symmetric $B\in \R^{n\times n}$
and $2 \le \ell$, 
\ifSTOC
$\E\left[ | z^\top Bz - \tr(B)| ^\ell \right] \le (CQ)^\ell$,
where
	$Q\equiv \max \{ \sqrt{\ell}\norm{B}_F, \ell \cdot\norm{B}_2\}$,

\else
\[
\E\left[ | z^\top Bz - \tr(B)| ^\ell \right] \le (CQ)^\ell,
\]
where
\[
	Q\equiv \max \{ \sqrt{\ell}\norm{B}_F, \ell \cdot\norm{B}_2\},
\]
\fi
and $C>0$ is a universal constant.
\end{theorem}

We will use Theorem~\ref{thm:hw} to prove a bound on the $\ell$'th moment
of $\norm{\Phi D y}_2^2$ for large $\ell$. Note that
$\norm{\Phi D y}^2$ can be written as $z^\top B z$, where $z$ has entries
from the diagonal of $D$, and 
$B\in\R^{n\times n}$ has $B_{ii'} \equiv y_i y_{i'} \Ibr{h(i)=h(i')}$.
Here $\tr(B)=\norm{y\scn}^2$.

Our analysis uses some ideas from the proofs for Lemmas 7 and 8 of \cite{KN12}.

Since by assumption event $\cE_h$ of Lemma \ref{lem:even hash} occurs,
and for unit $y\in C(A)$, $y_{i'}^2 \leq u_{i'}$ for all $i'$,
we have for $j\in [t]$ that
$\sum_{i' \in h^{-1}(j), i'\ge s} y^2_{i'} \le W$. Hence
\begin{align}\label{eq:B F}
\norm{B}_F^2
	  & = \sum_{i, i'\ge s} (y_i y_{i'} )^2 \Ibr{h(i')=h(i)}\nonumber
	\\ & = \sum_{i\ge s} y^2_i  \sum_{\substack{i' \in h^{-1}(h(i))\\ i'\ge s}} y^2_{i'}\nonumber
	\\ & \le \sum_{i\in [n]} y^2_i W\nonumber
	\\ & \le W.
\end{align}
For $\norm{B}_2$, observe that for given $j\in [t]$,
$z(j) \in\R^n$ with $z(j)_i = y_i \Ibr{h(i)=j, i\ge s}$ is an eigenvector of $B$
with eigenvalue $\norm{z(j)}^2$,
and the set of such eigenvectors spans the column space of $B$. It follows
that 
\[
\norm{B}_2 = \max_{j}\norm{z(j)}^2
	= \sum_{\substack{i' \in h^{-1}(j)\\ i'\ge s}} y^2_{i'} \le W.
\]
Putting this and \eqref{eq:B F} into the $Q$ of Theorem~\ref{thm:hw},
we have, 
\[
Q \le \max \{ \sqrt{\ell}\norm{B}_F, \ell \cdot\norm{B}_2\}
	\le \max\{ \sqrt{\ell}\sqrt{W}, \ell W \} =\sqrt{\ell W},
\]
where we used $\ell W \leq 1$. 
By a Markov bound applied to $| z^\top Bz - \tr(B)| ^\ell$
with $\ell = \log(1/\delta_L)$,
\begin{eqnarray*}
\Pr[ | \norm{\Phi D y\scn}_2^2 - \norm{y\scn}^2 | \ge e C \sqrt{\ell W}]
	& \le & e^{-\ell}\\
	& = & \delta_L.
\end{eqnarray*}
\end{proof}

%
%
%

\subsection{Handling vectors with large entries}\label{sec:large}

A small number of entries can be handled directly.

\begin{lemma}\label{lem:birthday}
For given $s$, let $\mathcal{E_B}$ denote the event that $h(i)\ne h(i')$ for all $i , i' < s$.
Then $\delta_B \equiv 1 -\Pr[\cE_B] \le s^2/t$.
Given event $\cE_B$, we have that for any $y$,
\[
\|y_{1:(s-1)}\|_2^2 = \|\Phi Dy_{1:(s-1)}\|_2^2.
\]
\end{lemma}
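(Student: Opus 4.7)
The plan is to prove the two assertions separately: the probabilistic collision bound, then the deterministic isometry consequence.

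For the probability bound on $\cE_B$, I would use a straightforward union bound (this is the ``avoid the birthday paradox'' argument the introduction alluded to). For any fixed pair $i\ne i'$ with $i,i'<s$, the mapping $h$ sends each index uniformly and independently to $[t]$, so $\Pr[h(i)=h(i')]=1/t$. The number of unordered pairs is $\binom{s-1}{2} \le (s-1)^2/2 \le s^2/2$, and summing gives $\delta_B \le s^2/(2t) \le s^2/t$, as claimed.

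For the norm-preservation statement, I would unpack the definition of the sparse embedding matrix $\Phi D$. Given event $\cE_B$, the map $h$ is injective on $\{1,\dots,s-1\}$, so for each bucket $j\in[t]$, the preimage $h^{-1}(j)\cap\{1,\dots,s-1\}$ has at most one element. Writing $z\equiv Dy_{1:(s-1)}$, the entry $z_i = D_{ii}y_i$ for $i<s$ and $z_i=0$ otherwise, so
\[
(\Phi z)_j = \sum_{i : h(i)=j} z_i = \begin{cases} D_{ii}y_i & \text{if }\exists\, i<s \text{ with } h(i)=j,\\ 0 & \text{otherwise.}\end{cases}
\]
Since each $D_{ii}=\pm 1$, squaring and summing over $j\in[t]$ collects exactly one $y_i^2$ contribution for each $i<s$, giving $\|\Phi D y_{1:(s-1)}\|_2^2 = \sum_{i<s} y_i^2 = \|y_{1:(s-1)}\|_2^2$.

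Neither step presents a real obstacle; the main point is simply to record that once the large-leverage-score coordinates are perfectly hashed, the signs from $D$ are irrelevant because there is no cross term between distinct indices, so $\Phi D$ acts as an exact isometry on vectors supported on $\{1,\dots,s-1\}$. This clean separation is what allows the later analysis to combine this ``heavy'' part with the Hanson--Wright-based bound from Lemma~\ref{lem:HW use} on the ``light'' tail $y\scn$.
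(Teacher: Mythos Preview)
Your proof is correct and follows the same approach as the paper: a union bound over pairs for the probability estimate, and the observation that injectivity of $h$ on $\{1,\dots,s-1\}$ makes $\Phi D$ an exact isometry on vectors supported there. The paper's own proof is a two-sentence sketch that omits the isometry computation entirely, so your version is simply a more detailed write-up of the same argument.
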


\begin{proof}
Since $\Pr[h(i)=h(i')]  = 1/t$, the probability that some such $i\ne i'$
has $h(i) = h(i')$ is at most $s^2/t$. The last claim follows by a union bound.
\end{proof}

\subsection{Handling all vectors}\label{sec:all}

We have seen that $\Phi D$ preserves the norms for vectors with small entries
(Lemma~\ref{lem:HW use}) and large entries (Lemma~\ref{lem:birthday}). Before proving
a general bound, we need to prove a bound on the ``cross terms''.

\begin{lemma}\label{lem:cross terms}
For $W$ as in Lemma~\ref{lem:even hash}, suppose
the event $\cE_h$ and $\cE_B$ hold. Then for unit vector $y\in C(A)$,
with failure probability at most $\delta_C$,
\[
|y_{1:(s-1)}^\top D\Phi^\top \Phi D y_{s:n}|\le K_C  \sqrt{W \log(1/\delta_C)},
\]
for an absolute constant $K_C$.
\end{lemma}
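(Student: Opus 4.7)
The plan is to express the cross term as a Rademacher sum over the diagonal signs of $D$ at the ``light'' indices $[s,n]$, and then invoke Hoeffding's inequality, exploiting the bipartite structure of the form rather than applying Theorem~\ref{thm:hw} to the full quadratic form in all $n$ sign variables. Let $z \in \{\pm 1\}^n$ denote the diagonal of $D$. Expanding both factors of $\Phi D$,
\[
y_{1:(s-1)}^\top D\Phi^\top \Phi D y_{s:n}
= \sum_{i<s}\sum_{i'\ge s} z_i z_{i'}\, y_i y_{i'}\, \Ibr{h(i)=h(i')}.
\]
I would then condition on $h$ (so that $\cE_h \cap \cE_B$ holds) and on the signs $(z_i)_{i<s}$, and view the right-hand side as the Rademacher sum $\sum_{i'\ge s} z_{i'}\beta_{i'}$ with $\beta_{i'} \equiv y_{i'}\sum_{i<s,\, h(i)=h(i')} z_i\, y_i$.

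The key observation is that under $\cE_B$, for each $i'\ge s$ there is at most one $i = i_\ast(i')<s$ with $h(i)=h(i')$; consequently $\beta_{i'}^2 = y_{i_\ast(i')}^2\, y_{i'}^2$ when $i_\ast(i')$ exists and $\beta_{i'}=0$ otherwise (using $z_i^2=1$), so $\beta_{i'}^2$ is deterministic given $h$ and in particular does not depend on the conditioned signs $(z_i)_{i<s}$. Grouping by $i_\ast$ and using $y_{i'}^2 \le u_{i'}$ (valid for any unit $y\in C(A)$, as in \S\ref{sec:small}) together with $\cE_h$,
\[
\sum_{i'\ge s}\beta_{i'}^2
= \sum_{i<s} y_i^2 \sum_{\substack{i'\ge s\\ h(i')=h(i)}} y_{i'}^2
\le W\sum_{i<s} y_i^2 \le W.
\]
Hoeffding's inequality for a Rademacher sum with these bounded coefficients then gives
\[
\Pr\!\left[\Big|\sum_{i'\ge s} z_{i'}\beta_{i'}\Big| \ge t \,\Big|\, h, (z_i)_{i<s}\right] \le 2\exp\!\left(-\frac{t^2}{2W}\right),
\]
and choosing $t = \sqrt{2W\log(2/\delta_C)}$ makes this at most $\delta_C$. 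Since the tail bound is uniform in the conditioned variables, it survives integrating them out, yielding the claim with an absolute constant $K_C$.

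The main obstacle this approach sidesteps is that a direct application of Theorem~\ref{thm:hw} to the symmetric form $z^\top \tilde B z$, where $\tilde B$ is the $n\times n$ block matrix encoding heavy/light collisions, yields $\|\tilde B\|_F^2 \le W/2$ and only $\|\tilde B\|_2 \le \sqrt{W}/2$, so that $Q \asymp \ell\sqrt{W}$ when $\ell=\log(1/\delta_C)$ is large---a bound worse by a factor of $\sqrt{\log(1/\delta_C)}$ than what we need. Because the bilinear form is purely bipartite between $[1,s-1]$ and $[s,n]$ (no heavy-heavy or light-light interactions), freezing one side of the signs reduces it to a linear Rademacher sum whose squared $\ell_2$-coefficient-norm is deterministically $\le W$, and Hoeffding delivers subgaussian tails at exactly the required $\sqrt{W\log(1/\delta_C)}$ scale.
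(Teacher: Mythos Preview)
Your proof is correct and takes essentially the same approach as the paper's: both exploit that under $\cE_B$ each light index $i'\ge s$ has at most one heavy partner, rewrite the cross term as a Rademacher sum in the light signs with coefficient-square sum bounded by $W$ via $\cE_h$, and then invoke subgaussian concentration. The only cosmetic difference is that the paper uses Khintchine's inequality followed by a Markov bound on the $2p$-th moment, whereas you appeal to Hoeffding's inequality directly; these are equivalent here.
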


\ifSTOC
\begin{proof}
The proof applies Khintchine's inequality to the sum making up the
dot product of the two sketched vectors, obtaining a moment bound
that implies the tail estimate. Please see the full paper for details.
\end{proof}
\else
\begin{proof}
With the event $\cE_B$, for each $i\ge s$ there is at most one $i' < s$ with $h(i)=h(i')$;
let $z_i \equiv y_{i'} D_{i'i'}$, and $z_i \equiv 0$ otherwise. 
We have for integer $p\ge 1$ using Khintchine's inequality
\begin{align*}
\E\bigg[\bigg( y_{1:(s-1)}^\top D\Phi^\top \Phi D y_{s:n}\bigg)^{2p}\bigg]^{1/p}
	  & = \E\bigg[\bigg( \sum_{i\ge s} y_i D_{ii} z_i\bigg)^{2p}\bigg]^{1/p}
	\\ & \le C_p  \sum_{i\ge s} y_i^2 z_i^2
	\\ & = C_p \sum_{i'<s} y_{i'}^2 \sum_{\substack{i \in h^{-1}(i')\\ i\ge s}} y_i^2
	\\ & \le C_p W,
\end{align*}
where $C_p\le \Gamma(p+1/2)^{1/p}=O(p)$, and the last inequality uses
the assumption that $\cE_h$ holds, and $\sum_{i' < s} y_{i'}^2 \le 1$.
Putting $p=\log(1/\delta_C)$ and applying the 
Markov inequality, we have
\[
\Pr[(y_{1:(s-1)}^\top D\Phi^\top \Phi D y_{s:n})^2 \ge e C_p W]
	\ge 1 - \exp(-p)
	= 1 - \delta_C.
\]
Therefore, with failure probability at most $\delta_C$,
we have
\[
|y_{1:(s-1)}^\top D\Phi^\top \Phi D y_{s:n}|\le K_C \sqrt{W \log(1/\delta_C)},
\]
for an absolute constant $K_C$.
\end{proof}
\fi 

\begin{lemma}\label{lem:concentrate y}
Suppose
the events $\cE_h$ and $\cE_B$ hold, and $W$ is as in Lemma~\ref{lem:even hash}.
Then for $\delta_y>0$ there is an absolute constant
$K_y$ such that, if $W\le K_y \epsilon^2/\log(1/\delta_y)$,
then for unit vector $y\in C(A)$,
with failure probability $\delta_y$,
$\norm{\Phi Dy}_2 = (1 \pm \eps)\norm{y}_2$,
when $\delta_y\le 1/2$.
\end{lemma}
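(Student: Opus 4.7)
The plan is to decompose $y = y_{1:(s-1)} + y_{s:n}$ into its ``large-leverage'' and ``small-leverage'' parts and apply the three preceding lemmas to the three resulting terms when we expand $\norm{\Phi D y}_2^2$. Concretely, I would start from the identity
\[
\norm{\Phi D y}_2^2 = \norm{\Phi D y_{1:(s-1)}}_2^2 + \norm{\Phi D y_{s:n}}_2^2 + 2\, y_{1:(s-1)}^\top D \Phi^\top \Phi D y_{s:n},
\]
and similarly $\norm{y}_2^2 = \norm{y_{1:(s-1)}}_2^2 + \norm{y_{s:n}}_2^2$ since the two pieces have disjoint supports.

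Under the assumed event $\cE_B$, Lemma~\ref{lem:birthday} gives the exact equality $\norm{\Phi D y_{1:(s-1)}}_2^2 = \norm{y_{1:(s-1)}}_2^2$, so the heavy-coordinate term contributes no error. Under $\cE_h$, Lemma~\ref{lem:HW use} applied with failure probability $\delta_L = \delta_y/3$ (and with $\ell = \log(3/\delta_y)$, which is valid provided $W$ is small enough that $\ell \le 1/W$) shows that the light-coordinate term satisfies $|\norm{\Phi D y_{s:n}}_2^2 - \norm{y_{s:n}}_2^2|\le K_L\sqrt{W\log(3/\delta_y)}$. Finally, under both events, Lemma~\ref{lem:cross terms} with $\delta_C = \delta_y/3$ bounds the cross term by $K_C\sqrt{W\log(3/\delta_y)}$. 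A union bound over the two bad events gives total failure probability at most $\delta_y$ (Lemma~\ref{lem:HW use}'s hypothesis $\ell\le 1/W$ is ensured by choosing $K_y$ small enough).

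Combining these three estimates via the triangle inequality yields
\[
\bigl|\,\norm{\Phi D y}_2^2 - \norm{y}_2^2\,\bigr| \le (K_L + 2K_C)\sqrt{W\log(3/\delta_y)}.
\]
Now I pick $K_y$ so that $W\le K_y\eps^2/\log(1/\delta_y)$ forces the right-hand side to be at most $\eps$; since $\delta_y\le 1/2$, $\log(3/\delta_y)$ and $\log(1/\delta_y)$ are comparable up to an absolute constant that can be absorbed into $K_y$. Because $\norm{y}_2=1$, we conclude $\bigl|\norm{\Phi D y}_2^2 - 1\bigr|\le \eps$, and the elementary inequality $|a-1|\le \eps \Rightarrow |\sqrt{a}-1|\le \eps$ (for $\eps\le 1$, up to adjusting the constant) finishes the bound $\norm{\Phi D y}_2 = (1\pm\eps)\norm{y}_2$.

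The main obstacle, such as it is, is bookkeeping: keeping track of the three failure probabilities so they sum to at most $\delta_y$, ensuring the hypothesis $\ell \le 1/W$ of Lemma~\ref{lem:HW use} is compatible with the prescribed $W\le K_y\eps^2/\log(1/\delta_y)$, and making sure the absolute constants $K_L$, $K_C$ are absorbed into a single $K_y$. All of the genuine probabilistic work has already been done in Lemmas~\ref{lem:birthday}, \ref{lem:HW use}, and~\ref{lem:cross terms}; the present lemma is essentially a clean packaging of those three estimates via the large/small coordinate split of a unit columnspace vector.
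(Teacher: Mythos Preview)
Your proposal is correct and follows essentially the same approach as the paper: split $y$ into heavy and light parts, apply Lemmas~\ref{lem:birthday}, \ref{lem:HW use}, and~\ref{lem:cross terms} to the three resulting terms, and combine via a union bound and choice of $K_y$. The only cosmetic differences are that the paper sets $\delta_L=\delta_C=\delta_y/2$ rather than $\delta_y/3$, and it leaves the $\ell\le 1/W$ check and the passage from the squared-norm bound to the norm bound implicit.
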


\ifSTOC
\begin{proof}
The proof pulls together the bounds for the large and small cases.
Please see the full paper for details.
\end{proof}
\else
\begin{proof}
Assuming $\cE_h$ and $\cE_B$, we apply Lemmas~\ref{lem:birthday}, \ref{lem:HW use}, and \ref{lem:cross terms}
, and have with failure probability at most $\delta_L + \delta_C$,
\begin{align*}
| & \norm{\Phi D y}_2^2  - \norm{y}^2 |
	\\ & = | \norm{\Phi D y_{1:(s-1)}}_2^2 - \norm{y_{1:(s-1)}}^2
	\\ &\qquad	+ \norm{\Phi D y\scn}_2^2  - \norm{y\scn}^2 + 2 y_{1:(s-1)} D\Phi^\top  \Phi D y\scn |
	\\ & \le | \norm{\Phi D y\scn}_2^2 - \norm{y\scn}^2| + 0 + |2 y_{1:(s-1)} D\Phi^\top  \Phi D y\scn |
	\\ & \le  K_L\sqrt{W \log(1/\delta_L)}+ 2 K_C \sqrt{W \log(1/\delta_C)}
	\\ & \le 3\epsilon \sqrt{K_y}(K_L + K_C)
\end{align*}
for the given $W$, putting $\delta_L = \delta_C = \delta_y/2$ and assuming $\delta_y\le 1/2$.
Thus $K_y \le 1/9(K_L+K_C)^2$ suffices.
\end{proof}
\fi 

\begin{lemma}\label{lem:subspaceBound}
Suppose $\delta_{sub}>0$, $L$ is an $r$-dimensional subspace of $\mathbb{R}^n$, and $B:\mathbb{R}^n \rightarrow \mathbb{R}^k$
is a linear map. If for any fixed $x \in L$, $\|Bx\|_2^2 = (1 \pm \eps/6)\|x\|_2^2$ with probability
at least $1-\delta_{sub}$, then there is a constant $K_{sub} > 0$
for which with probability at least $1-\delta_{sub} K_{sub}^{r}$, for all $x \in L$, $\|Bx\|_2^2 = (1 \pm \eps)\|x\|_2^2$.
\end{lemma}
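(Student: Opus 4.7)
The plan is to execute the standard net-plus-union-bound argument for extending a pointwise approximation guarantee to a subspace, being careful to absorb the small loss from the net approximation into the slack between $\eps/6$ and $\eps$.

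First, I would construct a $\gamma$-net $N$ of the unit sphere $S_L \equiv \{x \in L : \|x\|_2 = 1\}$ for a constant $\gamma$ to be fixed later (something like $\gamma = \eps/12$ will suffice). By a standard volume-comparison argument in the $r$-dimensional subspace $L$, such a net exists with $|N| \le (1 + 2/\gamma)^r \le (3/\gamma)^r$. This identifies the constant in the conclusion: set $K_{sub} \equiv 3/\gamma$, so that $|N| \le K_{sub}^r$.

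Second, I would apply the hypothesis pointwise to every $y \in N$ and union-bound. With failure probability at most $|N|\,\delta_{sub} \le \delta_{sub} K_{sub}^r$, every net point $y \in N$ satisfies $\|By\|_2^2 = (1 \pm \eps/6)\|y\|_2^2$, which (using $\sqrt{1\pm\eps/6}=1\pm O(\eps)$) gives $\|By\|_2 \in [1-\eps/6, 1+\eps/6]$ after adjusting the absolute constant in $\gamma$. Condition on this event for the remainder.

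Third, I would extend from the net to all of $S_L$. Let $M \equiv \sup_{x\in S_L}\|Bx\|_2$, which is finite because $L$ is finite-dimensional. For any $x \in S_L$, pick $y \in N$ with $\|x-y\|_2 \le \gamma$; by linearity and the triangle inequality,
\[
\|Bx\|_2 \le \|By\|_2 + \|B(x-y)\|_2 \le (1+\eps/6) + \gamma M.
\]
Taking the supremum over $x$ yields $M \le (1+\eps/6)/(1-\gamma)$, and for $\gamma$ a small constant multiple of $\eps$ this gives $M \le 1+\eps/2$. Plugging this back into the triangle-inequality bounds above, together with the matching lower bound $\|Bx\|_2 \ge \|By\|_2 - \|B(x-y)\|_2 \ge (1-\eps/6) - \gamma M$, yields $\|Bx\|_2 \in [1-\eps/2,\,1+\eps/2]$, so $\|Bx\|_2^2 = (1\pm \eps)\|x\|_2^2$ for every unit $x \in L$. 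Homogeneity of both sides in $x$ then extends this to all of $L$.

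The argument has no real obstacle; the only thing requiring a bit of care is the bookkeeping that converts the $(1\pm\eps/6)$ pointwise hypothesis together with a $\gamma$-net approximation into a clean $(1\pm\eps)$ subspace bound, which forces the choice $\gamma = \Theta(\eps)$ and hence fixes $K_{sub}$ as an absolute constant (independent of $\eps$, $n$, and $r$) only if we allow $K_{sub}$ to depend on $1/\eps$; if we truly want $K_{sub}$ to be an absolute constant, we can instead use a coarser net (say $\gamma=1/2$) at the cost of weakening the pointwise hypothesis to $(1\pm c\eps)$ for a small absolute $c$, which is exactly the flexibility that the $\eps/6$ in the statement provides.
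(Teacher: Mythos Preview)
Your main argument---a $\gamma$-net on the unit sphere of $L$, a union bound, and the bootstrap $M \le (1+\eps')/(1-\gamma)$---is correct and standard, but as you yourself note it forces $\gamma = \Theta(\eps)$ and hence $K_{sub} = \Theta(1/\eps)$. The paper intends $K_{sub}$ to be an \emph{absolute} constant (later, $r\log K_{sub}$ enters the bounds), and your final paragraph's proposed fix does not actually deliver this. With $\gamma=1/2$ the bootstrap gives only $M \le 2(1+c\eps)$, and plugging back yields $\|Bx\|_2 \le (1+c\eps) + \gamma M = 2(1+c\eps)$, not $1+O(\eps)$. No matter how small the pointwise constant $c$, a constant-$\gamma$ norm-net argument incurs a \emph{multiplicative} loss $1/(1-\gamma)$ on the norm itself, not an additive $O(\eps)$; the slack between $\eps/6$ and $\eps$ cannot absorb that.

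The paper gets an absolute $K_{sub}$ by working instead with the quadratic form $J \equiv U^\top B^\top B U - I_r$. By polarization (applying the pointwise hypothesis to $u$, $v$, and $u+v$) one bounds $|u^\top J v| \le \eps/2$ for all \emph{pairs} $u,v$ in a constant-$\gamma$ lattice net $E$; then a lemma from \cite{ahk06} gives $|w^\top J w| \le (\eps/2)/(1-\gamma)^2$ for every unit $w$. Here the $(1-\gamma)^{-2}$ loss multiplies $\eps$, not $1$, so the fixed choice $\gamma = 1-1/\sqrt{2}$ suffices and $K_{sub}$ comes out absolute. This bilinear-form trick is the ingredient your approach is missing if you want the constant independent of~$\eps$.
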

\ifSTOC
\begin{proof}
The proof is a standard $\epsilon$-net argument.
Please see the full paper for details.
\end{proof}
\else
\begin{proof}
We will need the following standard lemmas for making a net argument.
Let $S^{r-1}$ be the unit sphere in
$\R^r$ 
and let $E$ be the set of points in $S^{r-1}$ defined by
$$E =\left\{w : w \in \frac{\gamma}{\sqrt{r}} \mathbb{Z}^r, \ \|w\|_2 \leq 1 
\right\},$$
where $\mathbb{Z}^r$ is the $r$-dimensional integer lattice and $\gamma$ is a parameter.
\begin{fact}[Lemma 4 of \cite{ahk06}]\label{lem:netsize}
$|E|\le e^{cr}$ for $c = (\frac{1}{\gamma} + 2)$.
\end{fact}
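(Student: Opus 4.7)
The plan is a standard $\gamma$-net argument, set up so that the net size bound from Fact~\ref{lem:netsize} is the source of the $K_{sub}^r$ factor, and the distortion slack between $\eps/6$ (for fixed vectors) and $\eps$ (uniform over $L$) absorbs the approximation error from passing from the net to the whole unit sphere of $L$.

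First I would fix an orthonormal basis for $L$, which identifies $L$ with $\R^r$ and its unit sphere with $S^{r-1}$. Applying Fact~\ref{lem:netsize} with a suitable small absolute constant $\gamma$ (to be chosen at the end) produces a set $E\subset L$ with $|E|\le e^{cr}$, $c=1/\gamma+2$, that forms a $\gamma$-net of the unit sphere of $L$ in the $\ell_2$ metric. By the hypothesis applied to each $w\in E$ and a union bound, with failure probability at most $\delta_{sub}|E|\le \delta_{sub}e^{cr}$ the event
\[
\cE_{\mathrm{net}}:\quad \|Bw\|_2^2=(1\pm\eps/6)\|w\|_2^2 \ \text{for all } w\in E
\]
holds. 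Setting $K_{sub}\equiv e^c$ yields the advertised probability $1-\delta_{sub}K_{sub}^r$.

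Next I would extend from $E$ to the full unit sphere of $L$. Let $M\equiv \sup_{x\in L,\,\|x\|_2=1}\|Bx\|_2$ and $m\equiv \inf_{x\in L,\,\|x\|_2=1}\|Bx\|_2$; both are finite since $B$ restricted to $L$ is a linear map on a finite-dimensional space. For any unit $x\in L$ pick $w\in E$ with $\|x-w\|_2\le\gamma$; since $(x-w)/\|x-w\|_2\in L$ whenever it is defined, the triangle inequality gives
\[
\|Bx\|_2\le \|Bw\|_2+\|B(x-w)\|_2\le \sqrt{1+\eps/6}+\gamma M.
\]
Taking the supremum over $x$ yields $M\le \sqrt{1+\eps/6}/(1-\gamma)$. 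A symmetric argument gives $m\ge \sqrt{1-\eps/6}-\gamma M$.

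Finally I would choose $\gamma$ to be a sufficiently small absolute constant so that $M^2\le 1+\eps$ and $m^2\ge 1-\eps$; e.g.\ any $\gamma$ with $\sqrt{1+\eps/6}/(1-\gamma)\le \sqrt{1+\eps}$ and the matching lower bound works, and a constant $\gamma$ suffices since $\eps/6$ already leaves constant-factor slack below $\eps$. By homogeneity of the inequality $\|Bx\|_2^2=(1\pm\eps)\|x\|_2^2$ in $x$, the bound on the unit sphere transfers to all of $L$. The one mildly delicate point is managing the $\gamma$ versus $\eps$ trade-off to guarantee $M^2\le 1+\eps$; this is a small algebraic check but is the only place where the specific constants $\eps/6$ and $\eps$ enter, and it fixes $\gamma$ (and hence $K_{sub}$) as an absolute constant.
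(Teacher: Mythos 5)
You have proved the wrong statement. Fact~\ref{lem:netsize} is a purely geometric counting claim: the set $E$ of points of the rescaled lattice $\frac{\gamma}{\sqrt{r}}\mathbb{Z}^r$ lying in the unit ball has cardinality at most $e^{cr}$ with $c=\frac{1}{\gamma}+2$. Your write-up never addresses this count. Instead it proves (a version of) Lemma~\ref{lem:subspaceBound}, the net argument for subspace embeddings, and it explicitly invokes Fact~\ref{lem:netsize} as ``the source of the $K_{sub}^r$ factor.'' Since the statement you were asked to establish appears as a black-box hypothesis inside your own argument, nothing in the proposal constitutes a proof of it; in the paper this Fact is itself an imported ingredient (Lemma 4 of the cited reference), and what is wanted is the underlying cardinality bound.

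A correct proof is a volume packing argument. Each $w\in E$ is a corner of an axis-aligned cube of side $\gamma/\sqrt{r}$; these cubes have disjoint interiors, and since each has diameter exactly $\gamma$, they all lie in the ball $B(0,1+\gamma)$. Hence $|E|\le \mathrm{vol}(B(0,1+\gamma))\cdot(\sqrt{r}/\gamma)^r$. Using $\mathrm{vol}(B(0,R))=\pi^{r/2}R^r/\Gamma(r/2+1)\le (2\pi e/r)^{r/2}R^r$ and $\ln(1+1/\gamma)\le 1/\gamma$, this gives $|E|\le (2\pi e)^{r/2}\,(1+1/\gamma)^r\le e^{r(\frac12(1+\ln 2\pi)+1/\gamma)}\le e^{(1/\gamma+2)r}$. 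Note that the cruder count of integer points with $\|z\|_\infty\le\sqrt{r}/\gamma$ gives only $(1+2\sqrt{r}/\gamma)^r=r^{\Theta(r)}$, which is too weak; the $r^{-r/2}$ decay of the Euclidean ball's volume is exactly what cancels the $(\sqrt{r})^r$ and yields a bound exponential rather than super-exponential in $r$. (Separately, even as a proof of Lemma~\ref{lem:subspaceBound} your route differs from the paper's, which controls the quadratic form $w^\top(U^\top B^\top BU-I)w$ on pairs of net points via Fact~\ref{lem:netprod} rather than bounding the operator norm of $B$ on $L$ by successive approximation; but that is not the statement at hand.)
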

\begin{fact}[Lemma 4 of \cite{ahk06}]\label{lem:netprod}
For any $r \times r$ matrix $J$, 
if for every $u, v \in E$ we have $|u^\top Jv| \leq \eps$, 
then for every unit vector $w$, we have
$|w^\top Jw| \leq \frac{\eps}{(1-\gamma)^2}$. 
\end{fact}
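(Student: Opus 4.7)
The plan is a standard iterative net argument: decompose an arbitrary unit vector as a convergent geometric series of net points, expand the quadratic form, and sum. First I need the net property of $E$: for any $x\in\R^r$ with $\|x\|_2\le 1$, there exists $u\in E$ with $\|x-u\|_2\le \gamma$. This holds because the lattice $\tfrac{\gamma}{\sqrt{r}}\mathbb{Z}^r$ has fundamental cells of diameter $\sqrt{r}\cdot \tfrac{\gamma}{\sqrt{r}} = \gamma$, so the nearest lattice point to $x$ is within Euclidean distance $\gamma$. (A mild boundary issue arises when that nearest lattice point sits just outside the unit ball; this is resolved either by rounding $x/(1+\gamma/2)$ in place of $x$, or by widening the defining inequality $\|u\|_2\le 1$ in $E$ by $O(\gamma)$. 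Either repair shifts the final constant by at most an $O(\gamma)$ factor, which is absorbed into the stated bound.)

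Second, given a unit vector $w\in\R^r$, I build a decomposition $w = \sum_{i\ge 0}\gamma^i u_i$ with $u_i\in E$ by iterating the net property. Choose $u_0\in E$ with $\|w-u_0\|_2\le\gamma$ and set $w_1\equiv (w-u_0)/\gamma$, which again lies in the unit ball. Repeating on $w_1$ yields $u_1\in E$ and $w_2\equiv (w_1-u_1)/\gamma$, and so on. Unrolling gives the partial sums $\sum_{i=0}^{k-1}\gamma^i u_i = w - \gamma^k w_k$, and since $\|w_k\|_2\le 1$ we have $\gamma^k w_k\to 0$, so the series converges to $w$.

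Third, I substitute this decomposition into the quadratic form:
\[
w^\top J w = \Bigl(\sum_{i\ge 0}\gamma^i u_i\Bigr)^{\!\top} J \Bigl(\sum_{j\ge 0}\gamma^j u_j\Bigr) = \sum_{i\ge 0}\sum_{j\ge 0}\gamma^{i+j}\,u_i^\top J u_j.
\]
Applying the hypothesis $|u_i^\top J u_j|\le\eps$ term by term and using the triangle inequality,
\[
|w^\top J w|\;\le\;\eps\sum_{i\ge 0}\sum_{j\ge 0}\gamma^{i+j}\;=\;\eps\Bigl(\sum_{i\ge 0}\gamma^i\Bigr)^{\!2}\;=\;\frac{\eps}{(1-\gamma)^2},
\]
which is the claim.

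The only real obstacle is the net-property boundary subtlety described above; the rest of the argument is purely algebraic. I note that the hypothesis bound is on the bilinear form $u^\top J v$ for \emph{all pairs} $u,v\in E$, not only the diagonal $u=v$, and this generality is precisely what the decomposition uses (the indices $i,j$ in the double sum are independent). Thus no symmetry assumption on $J$ is needed, which is consistent with the Fact's statement for a general $r\times r$ matrix.
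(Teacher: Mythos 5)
Your argument is the standard geometric-series net argument, and it is essentially the proof of the cited result (the paper itself states this as a Fact imported from Lemma~4 of the Arora--Hazan--Kale paper and gives no proof of its own). The decomposition $w=\sum_{i\ge 0}\gamma^i u_i$, the expansion of the bilinear form, and the summation to $\eps/(1-\gamma)^2$ are all correct, and you are right that no symmetry of $J$ is needed.

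The one place you should tighten is the net property itself. Your first proposed repair (rounding $x/(1+\gamma/2)$) does not quite guarantee membership in $E$: the nearest lattice point to a vector of norm $1/(1+\gamma/2)$ can still have norm exceeding $1$. Your second repair (enlarging the ball defining $E$) changes the set $E$, which is fixed by the statement. The clean fix is to round each coordinate of $x$ \emph{toward zero} to the nearest multiple of $\gamma/\sqrt{r}$: the resulting point $u$ satisfies $|u_i|\le|x_i|$ for every $i$, hence $\norm{u}_2\le\norm{x}_2\le 1$ and $u\in E$, while $\norm{x-u}_2\le\sqrt{r\cdot(\gamma/\sqrt{r})^2}=\gamma$. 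With that substitution the boundary issue disappears entirely, no constants shift, and the rest of your argument goes through verbatim to give exactly the stated bound.
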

Let $U \in \mathbb{R}^{n \times r}$ be such that the columns
are orthonormal and the column space equals $L$. Let $I_r$ be the $r \times r$ identity matrix. Define
$J = U^TB^TBU - I_r$. Consider the set $E$ in Fact \ref{lem:netsize} and Fact \ref{lem:netprod}. 
Then, for any $x, y \in E$, we have by the statement of the lemma that with probability
at least $1-3\delta_{sub}$, $\|BUx\|_2^2 = (1 \pm \eps/6)\|Ux\|_2^2$, $\|BUy\|_2^2 = (1 \pm \eps/6)\|Uy\|_2^2$,
and $\|BU(x+y)\|_2^2 = (1 \pm \eps/6)\|U(x+y)\|_2^2 = (1 \pm \eps/6)(\|Ux\|_2^2 + \|Uy\|_2^2 + 2\langle Ux, Uy \rangle)$. 
Since $\|Ux\|_2 \leq 1$ and $\|Uy\|_2 \leq 1$, it follows that $|xJy| \leq \eps/2$. By Fact \ref{lem:netsize},
for $\gamma = 1-1/\sqrt{2}$ and sufficiently large $K_{sub}$, we have by a union bound that with probability at least 
$1-\delta_{sub} K_{sub}^r$ that 
$|xJy| \leq \eps/2$ for every $x,y \in E$. Hence, with this probability, by Fact \ref{lem:netprod}, $|w^TJw| \leq \eps$
for every unit vector $w$, which by definition of $J$ means that for all $y \in L$, $\|By\|_2^2 = (1 \pm \eps)\|y\|_2^2$.
\end{proof}
\fi 
The following is our main theorem in this section. 
\begin{theorem}\label{thm:main}
There is  $t=O((r/\epsilon)^4\log^2(r/\epsilon))$ such that with probability at least $9/10$,
$\Phi D$ is a subspace embedding matrix for $A$; that is, for all $y \in C(A)$, 
$\norm{\Phi Dy}_2 = (1 \pm \eps)\norm{y}_2$.  The embedding $\Phi D$ can be applied in $O(\nnz(A))$
time. For $s=\min\{i' \mid u_{i'}\le T\}$, where $T$ is a parameter in $\Omega(\epsilon^2/r\log(r/\epsilon))$,
it suffices if $t \ge \max\{s^2/30, r/T\}$.
\end{theorem}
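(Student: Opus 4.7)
The plan is to compose the ingredients already proved: the hashing concentration of Lemma~\ref{lem:even hash}, the perfect-hashing event of Lemma~\ref{lem:birthday}, the single-vector bound of Lemma~\ref{lem:concentrate y}, and the net argument of Lemma~\ref{lem:subspaceBound}. First I would choose the parameter $T$ so that the quantity $W = T\log(t/\delta_h) + r/t$ appearing throughout satisfies $W = O(\eps^2/r)$, which is exactly what Lemma~\ref{lem:concentrate y} needs when the per-vector failure probability $\delta_y$ is set to $1/K_{sub}^r$ (so that the $K_{sub}^r$ blow-up of Lemma~\ref{lem:subspaceBound} still yields constant overall probability). A natural choice is $T = \Theta(\eps^2/(r\log(r/\eps)))$, with $t = \Theta(\max\{s^2, r/T\})$ as in the statement. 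Using $\sum_i u_i = r$ (the sum of leverage scores) together with $u_i>T$ for $i<s$, one gets $s\le r/T+1$, so $t = O(r^2/T^2) = O((r/\eps)^4\log^2(r/\eps))$, matching the advertised bound.

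Next I would check the preconditions. The hypothesis of Lemma~\ref{lem:even hash}, $t\ge 6\norm{u\scn}^2/(T^2\log(t/\delta_h))$, follows from $\norm{u\scn}^2 \le T\sum_i u_i \le Tr$ and $t\ge r/T$, so taking $\delta_h$ a small constant makes $\cE_h$ hold with constant probability. Lemma~\ref{lem:birthday} gives $\Pr[\cE_B]$ bounded below by a constant after adjusting the $s^2/t$ constant in the choice of $t$. Conditioning on both events, Lemma~\ref{lem:concentrate y} applies (its $W\le K_y\eps^2/\log(1/\delta_y)$ hypothesis is met by our choice of $W$ and $\log(1/\delta_y) = \Theta(r)$), giving for each fixed unit $y\in C(A)$ a $(1\pm c\eps)$ length-preservation with failure probability $\delta_y$ for a constant $c$ that can be tuned. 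Lemma~\ref{lem:subspaceBound}, applied with $\eps$ rescaled by a constant factor, then upgrades this to simultaneous preservation on all of $C(A)$ with failure probability at most $\delta_y K_{sub}^r$, which is constant by construction. A union bound over the three failure events $\overline{\cE_h}$, $\overline{\cE_B}$, and the net failure yields the claimed overall success probability $9/10$.

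The running time bound is the most straightforward piece: $\Phi$ has exactly one nonzero per column, so $\Phi D A$ is computed by iterating once over the nonzeros of $A$, performing for each $(i,j)$ with $A_{ij}\ne 0$ a constant-time lookup of $h(i)$ and $D_{ii}$ followed by an addition into entry $(h(i),j)$ of the output; this takes $O(\nnz(A))$ time, with no dependence on $t$.

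The only real obstacle is the simultaneous bookkeeping on $W$: it must be small enough to (i) feed Lemma~\ref{lem:HW use}'s Hanson--Wright tail at moment $\ell = \Theta(r)$, (ii) satisfy the implicit $\ell W \le 1$ used in that proof, and (iii) still allow $t$ to land at $O((r/\eps)^4\log^2(r/\eps))$ rather than a larger polynomial. With the choice $T = \Theta(\eps^2/(r\log(r/\eps)))$, one gets $W = O(\eps^2/r)$, so $\ell W = O(\eps^2) \le 1$ and Lemma~\ref{lem:concentrate y} fires at the needed net granularity; all three constraints then line up. Everything else is a routine union bound over a constant number of constant-probability failure events.
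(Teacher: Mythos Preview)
Your proposal is correct and follows essentially the same approach as the paper: you compose Lemmas~\ref{lem:even hash}, \ref{lem:birthday}, \ref{lem:concentrate y}, and \ref{lem:subspaceBound} with the same parameter choices ($\delta_y$ exponentially small in $r$, $T=\Theta(\eps^2/(r\log(r/\eps)))$, $t=\Theta(\max\{s^2, r/T\})$), derive $s\le r/T$ from $\sum_i u_i = r$, and verify the $\ell W\le 1$ and $\norm{u\scn}^2\le rT$ side conditions just as the paper does. Your discussion of the running time and the bookkeeping on $W$ is also accurate and matches the paper's reasoning.
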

\begin{proof}
For suitable $t$, $T$, and $s$,
with failure probability at most $\delta_h + \delta_B$, events $\cE_h$ and $\cE_B$ both hold.
Conditioned on this, and assuming $W$ is sufficiently small as in Lemma~\ref{lem:concentrate y},
we have with failure probability $\delta_y$ for any fixed $y\in C(A)$ that $\|\Phi D y\|_2 = (1 \pm \eps)\|y\|_2$.
Hence by Lemma \ref{lem:subspaceBound}, with failure probability
$\delta_h + \delta_B + \delta_y K_{sub}^r$, 
$\|\Phi Dy\|_2 = (1 \pm 6\eps) \|y\|_2$ for all $y \in C(A)$.
We need $\delta_h + \delta_B + \delta_y K_{sub}^r\le 1/10$, and the parameter conditions of Lemmas~\ref{lem:even hash}, 
Lemma \ref{lem:HW use}, and Lemma \ref{lem:concentrate y} holding. Listing these conditions:
\begin{enumerate}
\item $\delta_h + \delta_B + \delta_y K_{sub}^r \le 1/10$, where $\delta_B$ can be set to be $s^2/t$;
\item $u_s \le T$;
\item \label{it:s} $t \ge 6\norm{u\scn}^2/\log(t/\delta_h)T^2$;
\item $\ln(2/\delta_y) \cdot W \leq 1$ (corresponding to the condition $\ell \leq 1/W$ of Lemma \ref{lem:HW use} 
since we set $\delta_y/2 = \delta_L = e^{-\ell}$)
\item $W = T\log(t/\delta_h) + r/t \le K_y \epsilon^2/\log(1/\delta_y)$.
\end{enumerate}
We put $\delta_y =  K_{sub}^{-r}/30$, $\delta_h = 1/30$, and require $t\ge s^2/30$.
For the last condition it suffices that $T = O(\epsilon^2/r\log(t))$,
and $t = \Omega(r^2/\epsilon^2)$. The last condition implies the fourth condition for small enough constant $\eps$. 
Also, since $\norm{u\scn}^2 = \sum_{i\ge s} u_i^2 \le \sum_{i\ge s} u_i T\le rT$,
the bound for $T$ implies that $t=O((r/\epsilon)^2\log(t))$ suffices for Condition \ref{it:s}.
Thus when the leverage scores are such that
$s$ is small, $t$ can be $O((r/\epsilon)^2\log(r/\epsilon))$.
Since $\sum_i u_i = r$, $s\le r/T$ suffices, and so $t= O((r/T)^2) = O((r/\epsilon)^4\log^2(r/\epsilon))$
suffices for the conditions of the theorem.
\end{proof}

\ifSTOC
\section{Partitioning Leverage\\ Scores}\label{sec:partition}
\else
\section{Partitioning Leverage Scores}\label{sec:partition}
\fi
We can further optimize our low order additive $\poly(r)$ term by refining the analysis for large leverage scores (those larger than $T$). 
We partition the scores into groups that are equal up to a constant factor, and analyze the error resulting from the 
relatively small number of collisions that may occur, using also the leverage scores to bound the error.
\ifSTOC
We obtain the following theorem; the proof is omitted in this version.
\else
 In what follows we 
have not optimized the $\poly(\log(r/\eps))$ factors. 

Let $q \equiv \log_2 1/T = O(\log(r/\eps))$. 
We partition the leverage scores $u_i$ with $u_i \geq T$ into
groups $G_j$, $j \in [q]$, where
$$G_j = \{i \mid 1/2^j < u_i \leq 1/2^{j-1}\}.$$
Let $\beta_j \equiv 2^{-j}$, and $n_j\equiv |G_j|$.
Since $\sum_{i=1}^n u_i = r$, we have for all $j$ that $n_j \leq r/\beta_j$.

We may also use $G_j$ to refer to the collection of rows of $U$
with leverage scores in $G_j$.

For given hash function $h$ and corresponding $\Phi$,
let $G'_j\subset G_j$ denote the collision indices of $G_j$,
those $i\in G_j$ such that $h(i)=h(i')$ for some $i'\in G_j$.
Let $k_j\equiv |G'_j|$.

First, we bound the spectral norm of a submatrix of the orthogonal basis $U$ of $C(A)$,
where the submatrix comprises rows of $G'_j$.

\subsection{The Spectral Norm}\label{subsec:spectral}
We have a matrix $B\in \R^{n_j\times r}$, with $\norm{B}_2\le 1$, and
each row of $B$ has squared Euclidean norm at least $\beta_j$
and at most $2\beta_j$, for some $j\in [q]$.

We want to bound the spectral norm of the matrix $\hat B$ whose
rows comprise those rows of $U$ in the collision set $G'_j$.
We let $t = \Theta(r^2 q^6/\epsilon^2)$
be the number of hash buckets. 
The expected number of collisions in the $t$ buckets is
$\E[|G'_j|] = \frac{{n_j \choose 2}}{t} \leq \frac{n_j^2}{2t}.$
Let $\mathcal{D}_j$ be the event that the number $k_j \equiv |G'_j| $ of such collisions in the $t$ buckets is at most $n_j^2 q^2/t$.
Let $\mathcal{D} = \cap_{j=1}^q \mathcal{D}_j$. By a 
Markov and a union bound, $\Pr[\mathcal{D}] \ge 1-1/(2q)$. We will assume that $\mathcal{D}$
occurs.

While each row in $B$ has some independent probability 
of participating in a collision, we first analyze a sampling scheme with
replacement.

We generate independent
random matrices $\hat{H}_m$ for $m\in [\ell_j]$, for a parameter $\ell_j > k_j$,
by picking $i\in [n_j]$ uniformly at random, and letting
$\hat{H}_m \equiv B_{i:}^\top B_{i:}$. Note that $\E[\hat{H}_m] = \frac{1}{n_j} B^\top B$.


Our analysis will use a special case of the version of matrix Bernstein inequalities
described by Recht.

\begin{fact}[paraphrase of Theorem 3.2 \cite{Recht}]
Let $\ell$ be an integer parameter. 
For $m\in [\ell]$, let $H_m\in\R^{r\times r}$ be independent symmetric zero-mean random matrices.
Suppose
$\rho_m^2 \equiv \norm{\E[H_m H_m]}_2$ and $M\equiv \max_{m\in [\ell]} \norm{H_m}_2$.
Then for any $\tau > 0$,
\[
\log \Pr\left[\left\|\sum_{m\in [\ell]} H_m\right\|_2 > \tau\right]
	\le \log 2r - \frac{\tau^2/2}{\sum_{m \in [\ell]}\rho_m^2 + M\tau/3}.
\]
\end{fact}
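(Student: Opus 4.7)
The plan is to prove this matrix Bernstein bound using the standard matrix Laplace transform approach. First I would reduce the operator-norm bound to a one-sided maximum-eigenvalue bound: since each $H_m$ is symmetric and zero-mean, $Y\equiv\sum_m H_m$ and $-Y$ have comparable tails, so $\Pr[\norm{Y}_2>\tau]\le 2\Pr[\lambda_{\max}(Y)>\tau]$. Then, for any $\theta>0$, the matrix Markov step gives
\[
\Pr[\lambda_{\max}(Y)\ge \tau] \le e^{-\theta\tau}\,\E[\tr\exp(\theta Y)],
\]
using $\lambda_{\max}(e^{\theta Y})\le \tr e^{\theta Y}$ since $e^{\theta Y}$ is PSD.

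The core step is to control $\E[\tr\exp(\theta\sum_m H_m)]$ when the $H_m$ do not commute. I would invoke Lieb's concavity theorem (equivalently, the Ahlswede--Winter--Tropp master inequality) to obtain
\[
\E[\tr\exp(\theta\textstyle\sum_m H_m)] \le \tr\exp\bigl(\textstyle\sum_m \log \E[e^{\theta H_m}]\bigr).
\]
Next, for each summand I would use the scalar inequality $e^{x}\le 1+x+\frac{x^2}{M^2}(e^{\theta M}-1-\theta M)$ valid for $|x|\le \theta M$, transferred to matrices by the operator-monotone/convex calculus, together with $\E[H_m]=0$, to get the Loewner bound
\[
\log\E[e^{\theta H_m}] \preceq g(\theta)\,\E[H_m^2], \quad g(\theta)\equiv\tfrac{e^{\theta M}-1-\theta M}{M^2}.
\]
Summing in the Loewner order and using $\norm{\sum_m \E[H_m^2]}_2 \le \sum_m\rho_m^2$ together with $\tr\exp(X)\le r\cdot e^{\lambda_{\max}(X)}$ yields
\[
\E[\tr\exp(\theta Y)] \le r\exp\!\bigl(g(\theta)\textstyle\sum_m \rho_m^2\bigr).
\]

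Finally, I would optimize $\theta$ by the classical Bernstein calculation: choosing $\theta$ to make $\theta\tau$ balance $g(\theta)\sum_m\rho_m^2$ gives
\[
\Pr[\lambda_{\max}(Y)\ge\tau] \le r\exp\!\left(\frac{-\tau^2/2}{\sum_m\rho_m^2+M\tau/3}\right),
\]
and doubling for the two-sided bound gives the stated inequality (absorbing the factor $2$ into $\log 2r$). The main obstacle is step two: the subadditivity of the cumulant generating function fails in the non-commutative setting, so Lieb's concavity theorem (or equivalently, a Golden--Thompson-style inequality used iteratively) is the nontrivial ingredient that makes matrix Bernstein possible; the rest is scalar calculus plus the dimension-dependent Markov step, which accounts for the $\log 2r$ overhead intrinsic to matrix concentration.
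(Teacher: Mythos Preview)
The paper does not prove this statement at all: it is stated as a Fact, cited from \cite{Recht}, and used as a black box in \S\ref{subsec:spectral}. So there is no ``paper's own proof'' to compare against.

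That said, your sketch is the standard Tropp-style derivation of matrix Bernstein and is essentially correct. A couple of minor points worth tightening if you were to write it out in full: (i) your scalar inequality should be stated as $e^{\theta x}\le 1+\theta x+\tfrac{x^2}{M^2}(e^{\theta M}-1-\theta M)$ for $|x|\le M$ (you conflated $x$ with $\theta x$ in the hypothesis), and the passage from $\E[e^{\theta H_m}]\preceq I+g(\theta)\E[H_m^2]$ to $\log\E[e^{\theta H_m}]\preceq g(\theta)\E[H_m^2]$ uses both $1+X\preceq e^X$ and the operator monotonicity of $\log$; (ii) the bound $\lambda_{\max}(\sum_m g(\theta)\E[H_m^2])\le g(\theta)\sum_m\rho_m^2$ via the triangle inequality for $\norm{\cdot}_2$ is exactly what is needed to match the form quoted here, which uses the (slightly weaker) sum $\sum_m\rho_m^2$ rather than the tighter $\norm{\sum_m\E[H_m^2]}_2$ that appears in some statements of matrix Bernstein. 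Your identification of Lieb's concavity as the only genuinely nontrivial ingredient is accurate.
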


We apply this fact with $\ell = \ell_j = (4e^2)k_j + \Theta(q)$ and $H_m \equiv \hat{H}_m - \E[\hat{H}_m]$,
so that
\begin{align*}
\rho_m^2
	 &  \equiv \norm{\E[H_m H_m]}_2
	\\ & \le \left\|
		\frac{1}{n_j} \sum_{i\in [n_j]} \norm{B_{i:}}^2 B_{i:}^\top B_{i:}
		- \frac{1}{n_j^2} B^\top B B^\top B
		\right\|_2
	\\ & \le \frac{2\beta_j}{n_j} + \frac{1}{n_j^2}.
\end{align*}
Also
$M\equiv \norm{H_m}_2 \le 2\beta_j + \frac{1}{n_j}$.

Applying the above fact with these bounds for $\rho_m^2$ and $M$,
we have
\begin{align*}
\log \Pr\left[\left\|\sum_{m\in [\ell_j]} H_m\right\|_2 > \tau\right]
	  & \le \log 2r - \frac{\tau^2/2}{\sum_{m\in [\ell_j]}\rho_m^2 + M\tau/3}
	\\ & \le \log 2r -  \frac{\tau^2/2}{(2\beta_j + 1/n_j)\left(\frac{\ell_j}{n_j} + \frac{\tau}{3}\right)}.
\end{align*}
We will assume that $n_j \ge \sqrt{t}/q^2$, as discussed in lemma \ref{lem:collide norm} 
below (otherwise we have perfect hashing). With this
assumption, setting $\tau = \Theta(q(\beta_j + 1/n_j + \sqrt{r/t}))$ gives
a probability bound of $1/r$.
(Here we use that $\beta_j+1/n_j\le 2$, $\ell_j/n_j = O(q n_j/t)$, and
$\frac{n_j}{t}(\beta_j + 1/n_j) \le (r+1)/t$.)
We therefore have that with probability at least $1-1/r$,
\begin{align*}
\norm{\sum_{m\in [\ell_j]} \hat{H}_m}_2
	& = O(q(\beta_j + 1/n_j + \sqrt{r/t}) + \frac{\ell_j}{n_j}\norm{B^\top B}_2
	\\ & = O(q(\beta_j + 1/n_j + \sqrt{r/t}+ \frac{n_j}{t})),
\end{align*}
where we use that
	$\|B\|_2 \leq 1$, and use again
	$\ell_j/n_j = O(qn_j/t)$.

We can now prove the following lemma. 
\begin{lemma}\label{lem:collide norm}
With probability $1-o(1)$, for all leverage score groups $G_j$,
and for $U$ an orthonormal basis of $C(A)$, the submatrix $\hat B_j$
of $U$ consisting of rows in $G'_j$, that is, those in $G_j$ that
collide in a hash bucket with another
row in $G_j$ under $\Phi$, has squared spectral norm 
$O(q(\beta_j + 1/n_j + \sqrt{r/t}+ n_j/t))$.
\end{lemma}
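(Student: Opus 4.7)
The plan is to combine the matrix Bernstein bound for the sampling-with-replacement matrix $Y_j \equiv \sum_{m \in [\ell_j]} \hat H_m$ just derived with a PSD-comparison argument that transfers the bound to the collision matrix $\hat B_j^\top \hat B_j$, followed by a union bound over $j \in [q]$.

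First, I would dispose of the small-$n_j$ regime. When $n_j < \sqrt{t}/q^2$, the expected number of colliding pairs inside $G_j$ is at most $\binom{n_j}{2}/t \le 1/(2q^4)$, so by Markov's inequality $G'_j = \emptyset$ with probability $1 - O(1/q^4)$; in that case $\hat B_j = 0$ and the stated bound holds vacuously. A union bound over all (at most $q$) such $j$ contributes only $O(1/q^3)$ to the total failure probability.

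For the main regime $n_j \ge \sqrt{t}/q^2$, I would condition on $\mathcal{D}_j$ so that $k_j \le n_j^2 q^2/t$. The matrix Bernstein calculation derived above then bounds $\|Y_j\|_2 = O(q(\beta_j + 1/n_j + \sqrt{r/t} + n_j/t))$ with failure probability at most $1/r$, where I absorb the expectation term $\|\ell_j \E[\hat H_m]\|_2 \le \ell_j/n_j$ into the right-hand side via $\ell_j = O(k_j + q) = O(n_j^2 q^2/t + q)$. The main obstacle is to transfer this bound on $Y_j$---a sum of $\ell_j$ i.i.d. uniform rank-one matrices---to the collision matrix $\hat B_j^\top \hat B_j = \sum_{i \in G'_j} B_{i:}^\top B_{i:}$, whose support is determined instead by the hash function $h$. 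My plan is to use that the samples $i_m$ defining $Y_j$ are drawn independently of $h$, and to construct a coupling under which $\sum_{i \in G'_j} B_{i:}^\top B_{i:} \preceq Y_j$ in the PSD order with high probability. Since the summands are rank-one PSD matrices, this reduces to ensuring that every element of $G'_j$ appears in the multiset $\{i_1,\ldots,i_{\ell_j}\}$; the choice $\ell_j = 4 e^2 k_j + \Theta(q)$ is tailored so that a Recht-style reduction from sampling-without-replacement to sampling-with-replacement can succeed.

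Once the PSD comparison is in place, $\|\hat B_j^\top \hat B_j\|_2 \le \|Y_j\|_2$ inherits the Bernstein bound, and a union bound over the $q$ groups $j$, together with the $O(1/(2q))$ failure of $\mathcal{D}$ and the negligible probability from the coupling, yields total failure $o(1)$, as required. The step I expect to demand the most care is the coupling itself: a naive coupon-collector estimate with $\ell_j = \Theta(k_j)$ is too weak when $n_j$ is only moderately larger than $\sqrt{t}$, so the argument must exploit the finer distributional structure of $G'_j$ (for instance, that $|G'_j|$ is typically much smaller than the worst-case $k_j$, or a Poissonization matching $G'_j$ to an independent Bernoulli inclusion model that can be coupled cleanly with the with-replacement samples).
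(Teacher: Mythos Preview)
Your overall framework matches the paper's: dispose of small $n_j$ (no collisions), apply matrix Bernstein to the with-replacement sum $Y_j=\sum_{m\in[\ell_j]}\hat H_m$, and then transfer the bound to $\hat B_j^\top\hat B_j$.  The difference, and the place where your proposal has a genuine gap, is the transfer step.

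You propose a direct PSD coupling: arrange that $G'_j\subseteq\{i_1,\ldots,i_{\ell_j}\}$ so that $\hat B_j^\top\hat B_j\preceq Y_j$.  As you yourself note, a coupon-collector count shows this cannot work with only $\ell_j=\Theta(k_j)$ samples---hitting a \emph{specified} $k_j$-set out of $n_j$ items by uniform sampling needs $\Theta(n_j\log k_j)$ draws, not $\Theta(k_j)$.  Your suggested fixes (Poissonization, exploiting that $|G'_j|$ is typically small) are vague and would need substantial new work.

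The paper sidesteps this entirely via a \emph{distributional} rather than a pointwise coupling.  The key observation is symmetry: conditioned on $|G'_j|=k_j$, the hash function treats all rows of $G_j$ exchangeably, so $G'_j$ is a \emph{uniformly random} $k_j$-subset of $G_j$.  On the other side, the set of distinct elements among $\{i_1,\ldots,i_{\ell_j}\}$ is also (by symmetry of uniform sampling) a uniformly random subset of $G_j$ of its size, and the paper shows via an expected-distinct-count plus Azuma argument that this size is at least $k_j$ with probability $1-1/r$.  Hence a uniformly random $k_j$-subset of the sampled distinct rows has exactly the same distribution as $G'_j$.  Since the spectral norm of any sub-selection of rows is at most that of the full $\ell_j$-row matrix, the Bernstein bound on $\|Y_j\|_2$ transfers to $\|\hat B_j\|_2^2$ directly---no PSD inclusion is needed, only equality in distribution plus monotonicity under row deletion.

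So the missing idea is simply: use the exchangeability of the hash to identify $G'_j$ in law with a random $k_j$-subset of $G_j$, rather than trying to realize a containment on a single probability space.
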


\begin{proof}
Fix a $j \in [q]$. If $n_j \equiv |G_j| \leq \sqrt{t}/q^2$, then with probability $1-o(1/q)$, the items
in $G_j$ are perfectly hashed into the $t$ bins. So with probability $1-o(1)$, for all
$j \in [q]$, if $n_j \leq \sqrt{t}/q^2$, then there are no collisions. Condition on this event.

Now consider a $j \in [q]$ for which $n_j \geq \sqrt{t}/q$. Then 
$$\ell_j = (4e^2)k_j + \Theta(q) \leq n_j^2/t + O(q) \leq n_j + O(q) \leq 2n_j.$$
When sampling with replacement,
the expected number of distinct items is
$$n_j \cdot {\ell_j \choose 1} \frac{1}{n_j} \left (1- \frac{1}{n_j} \right )^{\ell_j - 1}
\geq \ell_j(1-o(1))/e^2.$$
By a standard
application of Azuma's inequality, using that $\ell_j = \Omega(q)$ is sufficiently large, 
we have that the number
of distinct items is at least $\ell_j/(4e^2)$ with probability at least $1-1/r$. By a union
bound, with probability $1-o(1)$, for all $j \in [q]$, if $n_j \geq r$, then at least 
$\ell_j/(4e^2)$ distinct items are sampled when sampling $\ell_j$ items with replacement from $G_j$. Since
$\ell_j = 4e^2k_j + O(q)$, it follows that at least $k_j$ distinct items are sampled from each $G_j$.

By the analysis above, for a fixed $j \in [q]$ we have that the submatrix of $U$ consisting of the $\ell_j$
sampled rows in $G_j$ has squared spectral norm $O(q(\beta_j + 1/n_j + \sqrt{r/t}+ n_j/t))$ with probability at least
$1-1/r$ (notice that $\|\sum_{m \in [\ell_j]} \hat{H}_m \|_2$ is the square of the spectral
norm of the submatrix of $U$ consisting of the $\ell_j$ sampled rows from $G_j$). Since the probability of 
this event is at least $1-1/r$ for a fixed $j \in [q]$, we can conclude that it holds for all $j \in [q]$
simultaneously with probability $1-o(1)$.
Finally, using that the spectral norm of a submatrix of a matrix is at most that of the matrix, we 
have that for each $j$, the squared spectral norm of a submatrix of $k_j$ random distinct rows among the $\ell_j$
sampled rows of $G_j$ from $U$ is at most $O(q(\beta_j + 1/n_j + \sqrt{r/t}+ n_j/t))$. 
\end{proof}

\subsection{Within-Group Errors}

Let $L_j\subset \R^n$ denote the set of vectors $y$ so that $y_i=0$
for $i$ not in the collision set $G'_j$,
and there is some unit $y'\in C(A)$ such that $y_i = y'_i$
for $i\in G'_j$. (Note that the error for such vectors is that
same as that for the corresponding set of vectors 
with zeros outside of $G_j$.)

In this subsection, we show that for all $y\in L_j$,
the error in estimating $\norm{y}^2$ using $y^\top D\Phi^\top \Phi D y$
is at most $O(\eps)$. 

For $y\in L_j$, the error in estimating
$\norm{y}^2$ by using $y^\top D\Phi^\top \Phi D y$ contributed
by collisions among coordinates $y_i$ for $i\in G_j$ is
\begin{equation}\label{eq:big terms}
\kappa_j \equiv \sum_{t'\in [t]} \sum_{i,i'\in h^{-1}(t')\cap G_j} y_i y_{i'} D_{ii} D_{i'i'},
\end{equation}
and we need a bound on this quantity that holds with high probability.

%
%

By a standard balls-and-bins analysis, every bucket has $O(\log t)=O(q)$
collisions, with high probability,
since $n_j \le r/T \le O(r^2/\eps^2) = O(t)$; we assume this event.

The squared Euclidean norm of the vector
of all $y_i$ that appear in the summands, that is, with $i\in G'_j$, is
at most $\beta_j + 1/n_j + \sqrt{r/t}+ n_j/t)$ by Lemma~\ref{lem:collide norm}.
Thus the squared Euclidean norm
of the vector comprising all summands in \eqref{eq:big terms}
is at most
\begin{align}
\gamma_j
	  & \equiv \sum_{t'\in [t]} \sum_{i,i'\in h^{-1}(t')\cap G_j} y_i^2 y_{i'}^2
	\\ & \le \sum_{t'\in [t]} \sum_{i\in h^{-1}(t')\cap G_j} y_i^2 O(q) 2 \beta_j\nonumber
	\\ & \le O(q^2\beta_j (\beta_j + 1/n_j + \sqrt{r/t}+ n_j/t)).\label{eq:collision norm}
\end{align}

By Khintchine's inequality, for $p\ge 1$,
\begin{align*}
\E[\kappa_j^{2p}]^{1/p}
	  & \le O(p) \gamma_j \le O(p) (q^2\beta_j (\beta_j + 1/n_j + \sqrt{r/t}+ n_j/t)),
\end{align*}
and therefore $|\kappa_j|^2$ is less than the last quantity, with failure probability
at most $4^{-p}$.

Putting $p=k'_j  \equiv \min\{r, k_j\}$, with failure probability at
most $4^{-k'_j}$, 
for any fixed vector $y\in L_j$, the squared error in estimating
$\norm{y}^2$ using the sketch of $y$ is at most
$O(k'_j(q^2\beta_j (\beta_j + 1/n_j + \sqrt{r/t}+ n_j/t))$.
Assuming the event $\cal D$ from the section above,
we have $k'_j\le \min\{r, q \cdot n_j^2q / t \}$. 
We have, using $\beta_j n_j\le r$,
\[
\frac{n_j^2 q^2}{t}(q^2\beta_j (\beta_j + 1/n_j))
	\le \frac{q^4r(r+1)}{t},
\]
and $r\cdot q^2 \beta_j n_j/t  \le q^2 r^2/t$,
and finally
\[
q^2\beta_j\sqrt{r/t}\min\{r, q^2 n_j^2/t\}
	 \le q^3\sqrt{r/t}\min\{\beta_j r, \frac{q r^2}{\beta_j t}\}
	 \le q^4 r^2/t,
\]
using $\beta_j n_j\le r$.
Putting these bounds on the terms together,
the squared error is $O(q^4r^2/t)$, or $\epsilon^2/q^2$, for
$t=\Omega(q^6r^2/\epsilon^2)$, so that the error is $O(\epsilon/q)$.

Since the dimension of $L_j$
is bounded by $k'_j$, it follows from the net argument of Lemma~\ref{lem:subspaceBound}
that for all $y\in L_j$, $\norm{Sy}^2 = \norm{y}^2 \pm O(\epsilon/q)$,
and so the total error for unit $y\in C(A)$ is $O(\epsilon)$.

We thus have the following theorem. 

\begin{theorem}\label{thm:partition within}
There is an absolute constant $C' > 0$ for which 
for any parameters $\delta_1 \in (0,1)$, $P \geq 1$, and for sparse embedding dimension 
$t = O(P (r/\eps)^2 \log^6(r/\eps))$, for all unit $y\in C(A)$,
$\sum_{j\in [q]} \norm{Sy^j} = 1 \pm C'\epsilon/P\delta_1$,
with failure probability at most $\delta_1 + O(1/\log r)$,
where $y^j$ denotes the member of $L_j$ derived from $y$.
\end{theorem}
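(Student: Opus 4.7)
The plan is to package the informal per-group analysis developed in the paragraphs preceding the theorem (culminating in the conclusion that the squared error is $O(\eps)$) into a formal statement, while threading the parameters $P$ and $\delta_1$ through the bounds. The key observation is that the $y^j$'s have disjoint supports contained in the collision sets $G'_j$, so it suffices to bound the within-group error $\kappa_j$ of \eqref{eq:big terms} separately for each $j \in [q]$ and then sum.

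First I would condition on three high-probability events: (i) the event $\mathcal{D}$ that $k_j \le n_j^2 q^2/t$ for all $j$, which fails with probability at most $1/(2q)$ by Markov; (ii) the event of Lemma~\ref{lem:collide norm}, giving the spectral norm bound $\|\hat{B}_j\|_2^2 = O(q(\beta_j + 1/n_j + \sqrt{r/t} + n_j/t))$ for all $j$, failing with probability $o(1)$; and (iii) the standard balls-and-bins event that every bucket receives at most $O(q)$ colliding indices from each $G_j$, which holds with probability $1 - o(1)$ since $n_j \le r/T = O(r^2/\eps^2) = O(t)$. The aggregate failure of these events is $O(1/\log r)$, matching the second term in the theorem's failure probability.

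Next, for a fixed unit $y \in C(A)$ and fixed $j \in [q]$, the collision error $\kappa_j$ is a Rademacher chaos in the sign variables $D_{ii}$. By Khintchine's inequality, $\E[\kappa_j^{2p}]^{1/p} \le O(p)\gamma_j$, where $\gamma_j$ is the squared Euclidean norm of the summand vector. Using (ii) and (iii) one obtains $\gamma_j = O(q^2 \beta_j(\beta_j + 1/n_j + \sqrt{r/t} + n_j/t))$ as in \eqref{eq:collision norm}. Choosing $p = k'_j \equiv \min\{r, k_j\}$ and applying a Markov bound gives $\kappa_j^2 \le O(k'_j \gamma_j)$ with failure probability at most $4^{-k'_j}$ per fixed $y$. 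The net argument of Lemma~\ref{lem:subspaceBound} applied to the $k'_j$-dimensional subspace $L_j$ then extends this to all $y \in L_j$, at the cost of multiplying the failure probability by $K_{sub}^{k'_j}$, which is absorbed by choosing the constant in $4^{-k'_j}$ suitably. Collating the bounds (using $\beta_j n_j \le r$ and $k'_j \le q n_j^2 q/t$), the telescoping arithmetic in the text shows $k'_j \gamma_j = O(q^4 r^2/t)$, so with $t = \Omega(q^6 r^2/\eps^2)$ the per-group squared error is $O(\eps^2/q^2)$, and summing across $q$ groups yields total error $O(\eps)$.

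To introduce the refined parameters $P$ and $\delta_1$, I would inflate $t$ by the factor $P$, which shrinks the per-group error by $1/\sqrt{P}$, and tighten the Markov step in the Khintchine bound by taking $p = \Theta(\log(q/\delta_1)) + k'_j$ so that the union bound over the $q$ groups has total failure probability $\delta_1$. After the final union bound (and the net argument's dependence on $k'_j$), the per-group error becomes $O(\eps/(Pq\delta_1))$, and summing over $j \in [q]$ produces the claimed bound $C'\eps/(P\delta_1)$. The main obstacle I expect is the bookkeeping in this last step: the $L_j$'s have varying dimension $k'_j$, and one must verify that the net argument's $K_{sub}^{k'_j}$ factor is simultaneously dominated by the $4^{-p}$ Khintchine tail and the $1/\delta_1$ savings, without inflating the polylog in $t$ beyond $\log^6(r/\eps)$. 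The arithmetic bounds $\beta_j n_j \le r$ and $k_j \le n_j^2 q^2/t$ are precisely what make the worst-case group balance, so this should go through cleanly.
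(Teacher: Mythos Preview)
Your first three paragraphs track the paper's argument essentially verbatim: condition on $\mathcal{D}$, on the spectral-norm event of Lemma~\ref{lem:collide norm}, and on the $O(q)$-per-bucket balls-and-bins event; apply Khintchine with $p=k'_j$ and a Markov bound; net over the $k'_j$-dimensional space $L_j$ via Lemma~\ref{lem:subspaceBound}; and carry out the arithmetic $k'_j\gamma_j = O(q^4 r^2/t)$. That is exactly what the paper does---its ``proof'' is literally the text preceding the theorem statement.

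Your final paragraph, however, is both unnecessary and slightly off. The paper does no additional work to thread $P$ and $\delta_1$ through the within-group bound; those parameters are carried in the statement only so that Theorem~\ref{thm:partition within} matches the companion cross-term Theorem~\ref{thm:partition cross} (where $\delta_1$ genuinely arises from the Markov bound in Lemma~\ref{lem:E}). The within-group analysis already yields error $O(\eps)$ with failure probability $O(1/\log r)$, and inflating $t$ by $P$ only improves this; the stated bound $C'\eps/(P\delta_1)$ with failure $\delta_1 + O(1/\log r)$ is simply a (looser) restatement in matching form. In particular, your proposed mechanism---taking $p=\Theta(\log(q/\delta_1))+k'_j$ in the Khintchine step---would \emph{increase} the error bound (the Khintchine constant scales like $p$), not introduce a $1/\delta_1$ factor in it. So drop that paragraph: the argument is already complete after your third paragraph, and the parameter dependence in the theorem statement requires no further derivation.
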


\subsection{Handling the Cross Terms}

To complete the optimization, we must also handle the error due to ``cross terms".

Let $\delta_1 \in (0,1)$ be an arbitrary parameter. 
For $j \neq j' \in \{1, \ldots, q\}$, let the event 
$\mathcal{E}_{j,j'}$ be that the number of bins containing both an item in $G_j$
and in $G_{j'}$ is at most $\frac{n_j n_{j'} q^2}{t \delta_1}.$
Let $\mathcal{E} = \cap_{j, j'} \mathcal{E}_{j,j'}$, the
event that no pair of groups has too many inter-group collisions.
\begin{lemma}\label{lem:E}
$\Pr[\mathcal{E}] \geq 1 - \delta_1.$
\end{lemma}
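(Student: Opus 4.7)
The plan is to bound the expected number of shared bins for each pair $(j,j')$, apply Markov's inequality to each pair, and then take a union bound over the $O(q^2)$ pairs.

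Fix $j \ne j'$ in $[q]$. For each pair $(i, i')$ with $i \in G_j$ and $i' \in G_{j'}$, the probability that $h(i) = h(i')$ is exactly $1/t$ (independently of everything else). Let $N_{j,j'}$ denote the number of bins in $[t]$ that contain at least one item from $G_j$ and at least one item from $G_{j'}$. Since each such shared bin accounts for at least one colliding pair $(i,i')\in G_j\times G_{j'}$, we have
\[
N_{j,j'} \le \sum_{i \in G_j}\sum_{i'\in G_{j'}} \Ibr{h(i)=h(i')},
\]
so by linearity of expectation,
\[
\E[N_{j,j'}] \le \frac{n_j\, n_{j'}}{t}.
\]

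Now I would apply Markov's inequality. Since there are $\binom{q}{2} < q^2/2$ unordered pairs $(j,j')$, I want each $\mathcal{E}_{j,j'}$ to fail with probability at most $2\delta_1/q^2$. Markov gives
\[
\Pr\!\left[N_{j,j'} > \frac{n_j n_{j'} q^2}{t\,\delta_1}\right]
   \le \frac{\E[N_{j,j'}]}{n_j n_{j'} q^2/(t\delta_1)}
   \le \frac{\delta_1}{q^2}.
\]
Thus $\Pr[\overline{\mathcal{E}_{j,j'}}] \le \delta_1/q^2$ for each pair.

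Finally, a union bound over the at most $q^2$ pairs $(j,j')$ with $j\ne j'$ yields
\[
\Pr[\overline{\mathcal{E}}] \le \sum_{j\ne j'} \Pr[\overline{\mathcal{E}_{j,j'}}] \le q^2\cdot\frac{\delta_1}{q^2} = \delta_1,
\]
giving $\Pr[\mathcal{E}]\ge 1-\delta_1$. The only step with any subtlety is the initial inequality bounding shared bins by colliding pairs, which is straightforward once one notes a shared bin contributes at least one such pair; the rest is routine Markov plus union bound.
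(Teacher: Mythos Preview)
Your proof is correct and follows essentially the same approach as the paper: bound the expected number of shared bins by $n_j n_{j'}/t$, apply Markov's inequality to get failure probability at most $\delta_1/q^2$ per pair, then union bound over the $\binom{q}{2}$ pairs. The only cosmetic difference is that the paper bounds the expectation via $t\cdot\frac{n_j}{t}\cdot\frac{n_{j'}}{t}$ (per-bin probability), whereas you bound shared bins by colliding pairs and use linearity; both give the same estimate.
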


\begin{proof}
Fix a $j \neq j' \in \{1, \ldots, q\}$. Then the expected number of
bins containing an item in both $G_j$ and in $G_{j'}$ is at most
$t \cdot \frac{n_j}{t} \cdot \frac{n_{j'}}{t} = \frac{n_j n_{j'}}{t},$
and so by a Markov bound the number of bins containing an item in both $G_j$ and $G_{j'}$
is at most $\frac{n_j n_{j'} q^2}{t \delta_1}$ with probability at least $1-\delta_1/q^2$. The
lemma follows by a union bound over the ${q \choose 2}$ choices of $j, j'$. 
\end{proof}

In the remainder of the analysis, we set $t = P (r/\eps)^2 q^6$ for a parameter $P \geq 1$.
\\\\
Let $\mathcal{F}$ be the event that no bin contains more than $Cq$ elements of
$\cup_{i=1}^q G_j$, where $C > 0$ is an absolute constant. 
\begin{lemma}\label{lem:F}
$\Pr[\mathcal{F}] \geq 1-1/r$. 
\end{lemma}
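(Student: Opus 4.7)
The plan is a two-step argument: first bound the total number of heavy items being hashed, then apply a Chernoff bound to one bin's load and a union bound over bins. Let $m \equiv |\bigcup_{j=1}^q G_j|$ denote the total number of items with leverage score at least $T$. Since each such $i$ contributes $u_i \ge T$ to $\sum_i u_i = r$, we have $m \le r/T$. Recall that in Theorem~\ref{thm:main} we may take $T = \Omega(\eps^2/(r\log(r/\eps)))$, while here $t = P(r/\eps)^2 q^6$, so $m/t = O(1/(Pq^5))$; in particular the expected load of any single bin is at most a constant.

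Next I would fix any bin $b \in [t]$ and consider its load $X_b = \sum_{i \in \bigcup_j G_j} \mathbf{1}[h(i) = b]$, which is a sum of independent $\mathrm{Bernoulli}(1/t)$ indicators. Since $\E[X_b] = m/t = O(1)$ while the target threshold $Cq$ is much larger, a standard multiplicative Chernoff tail yields
\[
\Pr[X_b \ge Cq] \le e^{-\Omega(Cq)}.
\]
Because $q = \Theta(\log(r/\eps))$ and $t = \poly(r/\eps)$, we have $\log(rt) = O(q)$, so choosing the absolute constant $C$ large enough makes the right-hand side at most $1/(rt)$. A union bound over the $t$ bins then gives $\Pr[\mathcal{F}] \ge 1 - 1/r$.

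There is no real obstacle; the only point to get right is that $m/t$ is small enough for the Chernoff tail to concentrate at scale $Cq$, and our parameter choices leave a factor-$q^5$ margin so the argument is insensitive to the precise polylogarithmic factors hidden in $T$. An alternative that avoids writing the Chernoff computation would be to quote the classical maximum-load bound: throwing $m \le t$ balls into $t$ bins, the maximum load is $O(\log t/\log\log t) = O(q)$ with probability $1 - 1/\poly(t)$, which yields the claim directly.
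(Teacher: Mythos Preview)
Your proposal is correct and follows essentially the same approach as the paper: bound the total number of heavy items (you use $m\le r/T$, the paper uses $\sum_j n_j \le r\sum_j 2^j \le 2r^2/\eps^2$, which are the same bound up to the polylog hidden in $T$), then invoke a standard balls-and-bins tail to conclude that no bin exceeds $Cq$ items with failure probability at most $1/r$. Your write-up is in fact more explicit than the paper's, which simply appeals to ``standard balls and bins analysis.''
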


\begin{proof}
Observe that $|\cup_{i=1}^q G_j| = \sum_{i=1}^q n_j \leq r \sum_{i=1}^q 2^j \leq 2r^2/\eps^2.$
By standard balls and bins analysis with the given $t$, with $P \geq 1$,
with probability at least $1-1/r$ no bin contains more than $C q$ elements,
for a constant $C > 0$. 
\end{proof}

\begin{lemma}\label{lem:individual}
Condition on events $\mathcal{E}$ and $\mathcal{F}$ occurring. Consider 
any unit vector $y = Ax$ in the column space of $A$. Consider any $j \neq j' \in [q]$.
Define the vector $y^j$: $y^j_i = y_i$ for $i \in G_j$, and $y^j_i = 0$ otherwise. Then,
$$|\langle Sy^j, Sy^{j'} \rangle | = O \left (\frac{1}{P \delta_1 q^2} \right ).$$
\end{lemma}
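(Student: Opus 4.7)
The approach is to expand the inner product as a Rademacher bilinear form and then bound it by estimating the Frobenius norm of the associated coefficient matrix, using the two hash-side events $\mathcal{E}$ and $\mathcal{F}$ to control its size. Writing $S = \Phi D$, I would first expand
\[
\langle Sy^j, Sy^{j'}\rangle
  = \sum_{b\in [t]}
     \Bigl(\sum_{i\in G_j,\, h(i)=b} D_{ii}\,y_i\Bigr)
     \Bigl(\sum_{i'\in G_{j'},\, h(i')=b} D_{i'i'}\,y_{i'}\Bigr)
  = \sigma^\top M \sigma',
\]
where $\sigma$ and $\sigma'$ are the independent Rademacher vectors formed by the diagonal entries of $D$ on $G_j$ and $G_{j'}$ (independent because $j\neq j'$ makes $G_j\cap G_{j'}=\emptyset$), and $M\in\R^{G_j\times G_{j'}}$ has entries $M_{i,i'} = y_iy_{i'}\Ibr{h(i)=h(i')}$.

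Next I would compute the sign-second-moment, which is exactly $\norm{M}_F^2$. Since $y=Ux$ is a unit vector in $C(A)$, $y_i^2\le u_i\le 2\beta_j$ for $i\in G_j$ and similarly for $i'\in G_{j'}$, so with $c_b\equiv |G_j\cap h^{-1}(b)|$ and $c'_b\equiv |G_{j'}\cap h^{-1}(b)|$,
\[
\norm{M}_F^2 \le 4\beta_j\beta_{j'} \sum_{b\in [t]} c_b c'_b.
\]
Event $\mathcal{F}$ gives $c_b,c'_b \le Cq$ for every bin, and event $\mathcal{E}_{j,j'}$ caps the number of bins hit by both groups at $n_jn_{j'}q^2/(t\delta_1)$, so $\sum_b c_bc'_b \le (Cq)^2 n_jn_{j'}q^2/(t\delta_1)$. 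Using $\beta_j n_j,\beta_{j'}n_{j'}\le r$ and $t = P(r/\eps)^2q^6$, this collapses to
\[
\norm{M}_F^2 \le \frac{4C^2 r^2 q^4}{t\delta_1} = O\!\left(\frac{\eps^2}{P\delta_1 q^2}\right).
\]

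Finally I would convert this second-moment estimate into a magnitude bound on $\sigma^\top M\sigma'$ via a concentration inequality for Rademacher bilinear forms—either the decoupled Hanson--Wright inequality, or two applications of Khintchine (first integrate over $\sigma$ to get a moment bound depending on $\norm{M\sigma'}$, then over $\sigma'$), giving $|\sigma^\top M\sigma'|\lesssim \norm{M}_F$ up to the desired probability/logarithmic factor. Combined with the Frobenius-norm bound this yields the claim $|\langle Sy^j,Sy^{j'}\rangle|=O(1/(P\delta_1q^2))$.

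The main obstacle is conceptually mild but bookkeeping-heavy: (i) making sure both tail factors from the bilinear chaos are absorbed so the final bound matches the claim in $P,\delta_1,q$ (rather than in $\eps$), and (ii) if the lemma is to be used for \emph{every} unit $y\in C(A)$ simultaneously, one must follow the per-$y$ bound with an $\eps$-net argument as in Lemma~\ref{lem:subspaceBound}, controlling the effective dimension of the vectors $y^j,y^{j'}$ (bounded by $|G_j|+|G_{j'}|$, but only meaningful collision directions matter) and ensuring the probability slack in the Rademacher tail bound is small enough to union-bound over the net.
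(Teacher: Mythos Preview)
Your setup is correct (the bilinear form $\sigma^\top M\sigma'$ and your estimate for $\norm{M}_F^2$ using $\mathcal{E}$, $\mathcal{F}$, and $n_j\beta_j\le r$), but the last step takes an unnecessary detour. The paper's proof is a \emph{deterministic} triangle-inequality bound, valid for every sign matrix $D$ and every unit $y\in C(A)$ once $\mathcal{E}$ and $\mathcal{F}$ hold: only bins $b$ that receive indices from both $G_j$ and $G_{j'}$ contribute; for each such bin the two factors satisfy $\bigl|\sum_{i\in G_j\cap h^{-1}(b)} D_{ii}y_i\bigr|\le Cq\cdot\sqrt{2\beta_j}$ by $\mathcal{F}$ and $y_i^2\le u_i\le 2\beta_j$; and by $\mathcal{E}$ there are at most $n_jn_{j'}q^2/(t\delta_1)$ such bins. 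Summing and using $n_j\beta_j\le r$ and $t=P(r/\eps)^2q^6$ gives the claimed bound directly. In particular, the phrase ``for any $S=\Phi D$'' in the paper means exactly that no randomness in $D$ is used.

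The two obstacles you flag---absorbing the chaos tail factors and running a net argument over all $y$---are artifacts of the Rademacher-concentration route; they simply do not arise with the direct bound, since it already holds for all $D$ and all $y$. Note also that as stated the lemma is a deterministic claim conditional on $\mathcal{E}$ and $\mathcal{F}$, so a probabilistic bound over $D$ (which is what Hanson--Wright or Khintchine would give you) does not quite prove it. Your Frobenius computation is morally the same counting as the paper's entrywise $\ell_1$ bound $|\sigma^\top M\sigma'|\le \sum_{i,i'}|M_{i,i'}|$; if you had stopped there instead of invoking concentration, you would have recovered the paper's argument.
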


\begin{proof}
Since $\mathcal{E}$ occurs, the number of bins containing both an item in $G_j$ and $G_{j'}$ is at most
$n_j n_{j'} q^2/(t \delta_1)$. Call this set of bins $S$. Moreover, since $\mathcal{F}$ occurs, for each bin $i \in S$,
there are at most $C \log r$ elements from $G_j$ in the bin and at most $C \log r$ elements from $G_{j'}$ in
the bin. Hence, for any $S = \Phi \cdot D$, we have, using $n_j\beta_j\le r$ for all $j$,
\begin{align*}
|\langle Sy^j, Sy^{j'} \rangle|
	  & \le \frac{n_j n_{j'} q^2}{t \delta_1} \cdot (Cq)^2 \beta_j \beta_{j'}
            \le \frac{(C q)^2 d^2 q^2}{t \delta_1} = \frac{C^2}{P \delta_1 q^2}.
\end{align*}
\end{proof}

The following is our main theorem concerning cross-terms in this section.
\begin{theorem}\label{thm:partition cross}
There is an absolute constant $C' > 0$ for which 
for any parameters $\delta_1 \in (0,1)$, $P \geq 1$, and for sparse embedding dimension 
$t = O(P (r/\eps)^2 \log^6 r)$, the event
\[\forall y = Ax \textrm{ with } \|y\|_2 = 1, 
\ \sum_{j, j' \in [q]} |\langle Sy^j, Sy^{j'} \rangle | \leq \frac{C\eps^2}{P \delta_1}
\]
occurs with failure probability at most $\delta_1 + \frac{1}{r}$,
where $y^j, y^{j'}$ are as defined in Lemma \ref{lem:individual}.
\end{theorem}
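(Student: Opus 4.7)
The plan is to combine the two high-probability structural events with the deterministic per-pair bound already furnished by Lemma \ref{lem:individual}. First I would take a union bound of Lemma \ref{lem:E} and Lemma \ref{lem:F} to obtain that $\mathcal{E}\cap\mathcal{F}$ holds with probability at least $1-\delta_1 - 1/r$, which already matches the failure probability claimed in the theorem. From this point on the argument should be purely deterministic in $y$.

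The main observation that makes this work is that once we condition on $\mathcal{E}\cap\mathcal{F}$, the bound produced by Lemma \ref{lem:individual} is \emph{uniform over all unit $y = Ax \in C(A)$}. Concretely, writing $y = Ux'$ with $\|x'\|_2 = 1$, Cauchy--Schwarz gives $y_i^2 \le u_i \le 2\beta_j$ for every $i \in G_j$, so the estimate
\[
|\langle Sy^j, Sy^{j'}\rangle| \le \frac{n_j n_{j'} q^2}{t\delta_1}\,(Cq)^2 \beta_j \beta_{j'}
\]
holds simultaneously for every such $y$: the random collision structure of $\Phi$ is now fixed, and the only $y$-dependent quantities in the derivation of Lemma \ref{lem:individual} are per-coordinate magnitudes that are already controlled by the leverage scores. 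No additional net argument is needed, in contrast to Theorem \ref{thm:partition within}.

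With this in hand, I would substitute $t = P(r/\eps)^2 q^6$ and use $n_j \beta_j \le r$ (for every $j$) to reduce the per-pair bound to $O(\eps^2/(P\delta_1 q^2))$, exactly as computed in Lemma \ref{lem:individual}. Summing over the at most $q^2$ ordered pairs $(j,j')$ with $j\ne j'$ then yields
\[
\sum_{j \ne j' \in [q]} |\langle Sy^j, Sy^{j'}\rangle| \le \frac{C\eps^2}{P\delta_1},
\]
uniformly in unit $y\in C(A)$, which is the cross-term bound of the theorem (the diagonal $j = j'$ contributions are norm terms handled separately by Theorem \ref{thm:partition within}).

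The only real subtlety to watch is this uniformity-in-$y$ point: Lemma \ref{lem:individual} is stated for a fixed $y$, but its bound depends on $y$ only through the a priori leverage-score upper bound $|y_i|^2\le 2\beta_j$ on $G_j$, so it automatically upgrades to an ``all-$y$'' statement once $\mathcal{E}\cap\mathcal{F}$ is fixed. Everything else is a one-line Markov-style union bound plus summing a geometric-looking collection of $q^2$ identical estimates, so I do not expect any further obstacle.
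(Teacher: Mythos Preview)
Your proposal is correct and follows exactly the paper's approach: the paper's proof is the single sentence ``The theorem follows at once by combining Lemma~\ref{lem:E}, Lemma~\ref{lem:F}, and Lemma~\ref{lem:individual},'' and you have simply unpacked that sentence, including the key observation that the bound of Lemma~\ref{lem:individual} depends on $y$ only through the leverage-score inequality $y_i^2\le 2\beta_j$ and is therefore uniform over unit $y\in C(A)$ once $\mathcal{E}\cap\mathcal{F}$ is fixed. Your remark that the sum is really over $j\ne j'$ (with diagonal terms deferred to Theorem~\ref{thm:partition within}) and your recovery of the $\eps^2$ factor from $t=P(r/\eps)^2 q^6$ are both on point.
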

\begin{proof}
The theorem follows at once by combining Lemma \ref{lem:E}, Lemma \ref{lem:F}, and Lemma \ref{lem:individual}.
\end{proof}

\subsection{Putting it together}

Putting the bounds for within-group and cross-term errors together, and replacing
the use of Lemma~\ref{lem:birthday} in the proof of Theorem~\ref{thm:main},
we have the following theorem.

\fi 

\begin{theorem}\label{thm:partition main}
There is an absolute constant $C' > 0$ for which 
for any parameters $\delta_1 \in (0,1)$, $P \geq 1$, and for sparse embedding dimension 
$t = O(P (r/\eps)^2 \log^6 (r/\eps))$, for all unit $y\in C(A)$,
$\norm{Sy} = 1 \pm C'\epsilon/P\delta_1$,
with failure probability at most $\delta_1 + O(1/\log r)$.
\end{theorem}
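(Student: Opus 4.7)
The plan is to follow the structure of the proof of Theorem~\ref{thm:main}, replacing the perfect-hashing argument (Lemma~\ref{lem:birthday}) by the refined within-group and cross-group bounds just established. Fix a unit $y\in C(A)$ and write $y = y\scn + \sum_{j=1}^q y^j$, where $y\scn$ collects the coordinates with $u_i\le T$ and each $y^j$ collects the coordinates in group $G_j$. Expanding gives
\[
\|Sy\|_2^2 = \|Sy\scn\|_2^2 + \sum_{j} \|Sy^j\|_2^2 + 2\sum_{j<j'} \langle Sy^j, Sy^{j'}\rangle + 2\sum_{j} \langle Sy\scn, Sy^j\rangle,
\]
and I would control the four pieces by four separate ingredients already in hand.

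I would first condition on the geometric events which depend only on the hash $h$: the bucket-load event $\cE_h$ of Lemma~\ref{lem:even hash}, the within-group collision event $\mathcal{D}$ used in Theorem~\ref{thm:partition within}, and the inter-group collision events $\mathcal{E},\mathcal{F}$ used in Theorem~\ref{thm:partition cross}; together these fail with probability at most $\delta_1 + O(1/\log r)$. Conditioned on them, Lemma~\ref{lem:HW use} gives $\|Sy\scn\|_2^2 = \|y\scn\|_2^2 \pm O(\epsilon)$; Theorem~\ref{thm:partition within} gives the approximation for the diagonal large-coordinate sum; Theorem~\ref{thm:partition cross} bounds the cross-group terms by $O(\epsilon^2/(P\delta_1))$; and the mixed small/large cross terms can be handled by a Khintchine argument modeled on Lemma~\ref{lem:cross terms}, writing $\langle Sy\scn, Sy^j\rangle$ as a random bilinear form in the diagonal signs of $D$ and using that the per-bucket $\ell_2^2$ masses of both $y\scn$ and each $y^j$ are controlled (by $\cE_h$ and by event $\mathcal{F}$, respectively). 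Summing the four contributions gives a pointwise bound $\|Sy\|_2 = (1 \pm C'\epsilon/(P\delta_1))\|y\|_2$ with pointwise failure probability that can be driven arbitrarily low.

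To promote this pointwise guarantee to a uniform bound over $C(A)$, I would then invoke the net argument of Lemma~\ref{lem:subspaceBound}, which costs a factor of $K_{sub}^r$ in the pointwise failure probability. The main obstacle is bookkeeping: one must drive the pointwise failure probabilities of the Hanson--Wright step and the two Khintchine steps down to $K_{sub}^{-r}$ without inflating the dimension past $O(P(r/\epsilon)^2\log^6(r/\epsilon))$. This is possible because in each of these concentration bounds the error degrades only as $\sqrt{\log(1/\delta)}$ or $\log(1/\delta)$, which is absorbed into the $\log^6(r/\epsilon)$ slack in $t$. Adding the geometric failure contributions to the net-union contribution yields the claimed failure probability $\delta_1 + O(1/\log r)$ and error $C'\epsilon/(P\delta_1)$.
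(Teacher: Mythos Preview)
Your proposal is correct and follows essentially the same approach as the paper. The paper's own proof is a one-liner: it simply says to put the within-group bound (Theorem~\ref{thm:partition within}) and the between-group cross-term bound (Theorem~\ref{thm:partition cross}) together, and use them in place of the perfect-hashing event $\cE_B$ of Lemma~\ref{lem:birthday} inside the proof of Theorem~\ref{thm:main}. Your decomposition into the four pieces, your identification of which ingredient handles each, your adaptation of the Khintchine argument of Lemma~\ref{lem:cross terms} for the small/large cross terms (using $\mathcal F$ in place of $\cE_B$ to control the per-bucket heavy mass), and your final appeal to the net argument of Lemma~\ref{lem:subspaceBound} are exactly the intended expansion of that one-liner.
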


\section{Generalized Sparse Embedding Matrices}\label{sec:generalized}

\ifSTOC
As discussed in the introduction, we can use small JL transforms within each hash bucket,
to obtain the following theorem, where the term in the running time
dependent on $\nnz(A)$ is more expensive, but the quality bounds hold with high probability.
\else

\subsection{Johnson-Lindenstrauss transforms}
We start with a theorem of Kane and Nelson \cite{KN12}, restated here in our notation. We also
present a simple corollary that we need concerning very low dimensional subspaces. 
Let $\eps > 0, \kv = \Theta(\eps^{-1} \log (r/\eps))$, and $v = \Theta(\eps^{-1})$.
Let $B:\mathbb{R}^n \rightarrow \mathbb{R}^{v\kv}$ be defined as follows. 
We view $B$ as the concatenation (meaning, we stack the rows on top of each other) 
of matrices $\sqrt{\vk} \cdot \Phi_1 \cdot D_1, \ldots, \sqrt{\vk} \cdot \Phi_{\kv} \cdot D_{\kv}$, 
each $\Phi_i \cdot D_i$
being a linear map from $\mathbb{R}^n$ to $\mathbb{R}^v$, which is an independently
chosen sparse embedding matrix of Section \ref{sec:analysis} with
associated hash function $h_i:[n] \rightarrow [v]$. 
\begin{theorem}(\cite{KN12})\label{thm:general}
For any $\delta_{KN}, \eps > 0$,
there are $\kv = \Theta(\eps^{-1} \log (1/\delta_{KN}))$ and $v = \Theta(\eps^{-1})$
for which for any fixed $x \in \mathbb{R}^n$, a randomly chosen $B$ of the form above satisfies
$\|Bx\|_2^2 = (1 \pm \eps) \|x\|_2^2$ with probability at least $1-\delta_{KN}$. 
\end{theorem}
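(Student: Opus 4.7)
The plan is to linearize $B$ in the $\pm 1$ signs, reduce $\|Bx\|_2^2-\|x\|_2^2$ to a zero-diagonal Rademacher chaos, and apply Hanson--Wright conditional on the hash functions. Concretely, let $\sigma^{(i)}\in\{\pm 1\}^n$ be the diagonal of $D_i$ and stack $\sigma=(\sigma^{(1)},\ldots,\sigma^{(\kv)})\in\{\pm 1\}^{n\kv}$. Then $Bx=T_{x,H}\sigma$ for the block-diagonal matrix $T_{x,H}$ whose $i$-th block is $\vk^{1/2}\Phi_i\mathop{\mathrm{diag}}(x)$, where $H=(h_1,\ldots,h_{\kv})$ collects the hash functions. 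Hence $\|Bx\|_2^2=\sigma^\top M\sigma$ with $M=T_{x,H}^\top T_{x,H}$ block-diagonal, and block $i$ has entry $M^{(i)}_{kk'}=\vk\, x_k x_{k'}\Ibr{h_i(k)=h_i(k')}$. Because every column of $\Phi_i$ has exactly one nonzero entry, $(\Phi_i^\top\Phi_i)_{kk}=1$, so $\mathop{\tr}(M)=\kv\cdot\vk\,\|x\|_2^2=\|x\|_2^2$ \emph{deterministically} over the hashes. Thus $\|Bx\|_2^2-\|x\|_2^2=\sigma^\top M'\sigma$ where $M'=M-\mathop{\mathrm{diag}}(M)$ has zero diagonal.

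Next I would apply Hanson--Wright (Theorem~\ref{thm:hw}, in tail form) conditional on $H$, which for the centered chaos $\sigma^\top M'\sigma$ yields
\[
\Pr_\sigma\!\bigl[\,|\sigma^\top M'\sigma|\ge \eps\|x\|_2^2 \,\big|\, H\bigr]
\le 2\exp\!\left(-c\min\!\left(\frac{\eps^2\|x\|_2^4}{\|M'\|_F^2},\ \frac{\eps\|x\|_2^2}{\|M'\|_2}\right)\right).
\]
So it suffices to verify that with probability at least $1-\delta_{KN}/2$ over $H$, both $\|M'\|_F^2\le C_1\eps^2\|x\|_2^4/\log(1/\delta_{KN})$ and $\|M'\|_2\le C_2\eps\|x\|_2^2/\log(1/\delta_{KN})$ hold. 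A union over the two sources of randomness then gives the desired $1-\delta_{KN}$ bound.

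For the Frobenius norm, $\|M'\|_F^2=\vk^2\sum_{i=1}^{\kv}\sum_{k\ne k'} x_k^2 x_{k'}^2\Ibr{h_i(k)=h_i(k')}$; the inner sum has expectation at most $\|x\|_2^4/v$ and the outer sum is a sum of $\kv$ independent bounded nonnegative variables, so a Bernstein-type tail (or a higher-moment Markov bound in the spirit of Lemma~\ref{lem:HW use}) over $i$, combined with $v\kv=\Theta(\eps^{-2}\log(1/\delta_{KN}))$, yields the required bound with probability $1-\delta_{KN}/4$. For the spectral norm, since $M'$ is block-diagonal and each block is itself block-diagonal over the $v$ buckets of $h_i$, one has $\|M'\|_2\le \vk\,\max_{i\in[\kv],\,j\in[v]} \sum_{k:\,h_i(k)=j} x_k^2$. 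The main obstacle is bounding this bucket-mass uniformly over all $\kv v$ bucket--block pairs while keeping $v=\Theta(1/\eps)$: a direct Bernstein bound is too weak in the presence of heavy coordinates of $x$. I would overcome this by the heavy/light split used in \S\ref{sec:small}--\S\ref{sec:large}: coordinates with $x_k^2>\eps\|x\|_2^2/\log(1/\delta_{KN})$ are $O(\log(1/\delta_{KN})/\eps)$ in number and can be perfectly hashed inside each of the $\kv$ blocks with probability $1-\delta_{KN}/8$ (using $v=\Theta(1/\eps)$ and the fact that the hashes are independent across blocks), contributing at most $\vk\cdot\max_k x_k^2$ per bucket; the remaining light coordinates are handled by Bernstein, which is tight enough once $\|x\|_\infty^2$ is small. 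Summing the two contributions and taking the union over $\kv v$ buckets gives the required spectral bound, completing the proof.
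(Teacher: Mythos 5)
First, note that the paper does not prove this statement at all: it is imported verbatim from \cite{KN12} as a black box, so there is no in-paper proof to compare against. Your strategy --- linearize in the signs, observe $\tr(M)=\|x\|_2^2$ deterministically, apply Hanson--Wright conditional on the hashes, and then control $\|M'\|_F$ and $\|M'\|_2$ over the randomness of $H$ --- is exactly the Hanson--Wright route by which Kane and Nelson prove this theorem, and the skeleton is sound. The Frobenius-norm step also goes through essentially as you sketch it, but you must actually verify the variance condition of the Bernstein bound: each summand $Y_i=\sum_{k\ne k'}x_k^2x_{k'}^2\Ibr{h_i(k)=h_i(k')}$ is only bounded by $\|x\|_2^4$ in the worst case, so the argument survives only because $\Var[Y_i]=O(\|x\|_2^8/v)$ and $\kv/v=\Theta(\log(1/\delta_{KN}))$, which makes the condition $V\le LT^2/6$ of Lemma~\ref{lem:Bern} hold with $L=\Theta(\log(1/\delta_{KN}))$ and yields a deviation $LT/\kv^2=\Theta(\eps^2\|x\|_2^4/\log(1/\delta_{KN}))$, exactly the target. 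This is a computation you assert but do not do, and it is the one place where the parameter choice $v\kv=\Theta(\eps^{-2}\log(1/\delta_{KN}))$ is actually used nontrivially.

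The spectral-norm step, by contrast, contains a false claim. You propose to perfectly hash the heavy coordinates --- of which there can be $\Theta(\log(1/\delta_{KN})/\eps)$ --- into the $v=\Theta(1/\eps)$ buckets of each block; by pigeonhole this is impossible whenever the number of heavy coordinates exceeds $v$, i.e.\ for any reasonably small $\delta_{KN}$, and independence across blocks does not help hash within a block. Fortunately the entire heavy/light detour is unnecessary: your own inequality $\|M'\|_2\le \vk\max_{i,j}\sum_{k:h_i(k)=j}x_k^2$ combined with the trivial bound $\sum_{k:h_i(k)=j}x_k^2\le\|x\|_2^2$ gives $\|M'\|_2\le 2\|x\|_2^2/\kv$ \emph{deterministically}, and since $1/\kv=\Theta(\eps/\log(1/\delta_{KN}))$ this already meets your stated requirement $\|M'\|_2\le C_2\eps\|x\|_2^2/\log(1/\delta_{KN})$ with no randomness over $H$ at all. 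So the ``main obstacle'' you identify does not exist; delete that portion, check the Bernstein variance condition explicitly, and the proof is correct.
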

\begin{corollary}\label{cor:subspace}
Let $\gdelta\in (0,1)$.
Suppose $L$ is an $O(\log (r/\eps\gdelta))$-dimensional subspace of $\mathbb{R}^n$.
Let $C_{subKN} > 0$ be any constant.
Then for any $\eps\in (0,1)$, there 
are $\kv = \Theta(\eps^{-1} \log (r/\eps\gdelta))$ and $v = \Theta(\eps^{-1})$ 
such that with failure probability at most 
 $(\eps/r\gdelta)^{C_{subKN}}$, $\|By\|_2^2 = (1 \pm \eps) \|y\|_2^2$ for all $y \in L$.
\end{corollary}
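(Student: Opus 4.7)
The plan is to combine the per-vector guarantee of Theorem~\ref{thm:general} with the standard net-argument already encapsulated in Lemma~\ref{lem:subspaceBound}. Let $r' \equiv \dim L = O(\log(r/\eps\gdelta))$ and let $K_{sub}$ be the absolute constant furnished by Lemma~\ref{lem:subspaceBound}. The only issue to watch is making sure that shrinking the per-vector failure probability enough to union-bound over the net does not blow up the sketching dimension $\kv$ beyond the stated target.

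First, I would invoke Theorem~\ref{thm:general} with distortion $\eps/6$ in place of $\eps$ and with failure probability
\[
\delta_{KN} \equiv (\eps/r\gdelta)^{C_{subKN}} \cdot K_{sub}^{-r'}.
\]
Since $r'$ is logarithmic in $r/\eps\gdelta$, the factor $K_{sub}^{r'}$ is polynomial in $r/\eps\gdelta$, so $\log(1/\delta_{KN}) = \Theta(\log(r/\eps\gdelta))$. Theorem~\ref{thm:general} therefore yields a map $B$ with sketching parameters
\[
\kv = \Theta(\eps^{-1}\log(1/\delta_{KN})) = \Theta(\eps^{-1}\log(r/\eps\gdelta)), \qquad v = \Theta(\eps^{-1}),
\]
exactly matching the forms demanded by the corollary, such that for every fixed $x \in \R^n$, $\|Bx\|_2^2 = (1\pm\eps/6)\|x\|_2^2$ with failure probability at most $\delta_{KN}$.

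Next, I would apply Lemma~\ref{lem:subspaceBound} to this $B$ and to the subspace $L$, taking $\delta_{sub} = \delta_{KN}$ (the role of the lemma's $r$ is played here by $r'$). Its hypothesis is precisely the per-vector bound produced in the previous step, and its conclusion gives $\|By\|_2^2 = (1\pm \eps)\|y\|_2^2$ simultaneously for all $y \in L$, with failure probability at most
\[
\delta_{KN} \cdot K_{sub}^{r'} = (\eps/r\gdelta)^{C_{subKN}},
\]
by our choice of $\delta_{KN}$.

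The ``hard part'' is therefore entirely bookkeeping: confirming that forcing $1/\delta_{KN}$ from a constant up to $(\eps/r\gdelta)^{-O(1)}$ costs only a constant factor inside the $\Theta(\cdot)$ of $\kv$. This works precisely because $\dim L$ is logarithmic in $r/\eps\gdelta$, so the net penalty $K_{sub}^{r'}$ is polynomial and $\log(1/\delta_{KN})$ remains $\Theta(\log(r/\eps\gdelta))$. No deeper obstacle arises.
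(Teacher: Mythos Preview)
Your proposal is correct and follows essentially the same approach as the paper: invoke Theorem~\ref{thm:general} with error $\eps/6$ and a per-vector failure probability that is polynomially small in $\eps/(r\gdelta)$, then apply the net argument of Lemma~\ref{lem:subspaceBound}, using that $\dim L = O(\log(r/\eps\gdelta))$ so the net blowup $K_{sub}^{\dim L}$ is only polynomial and hence absorbed into the $\Theta(\cdot)$ for $\kv$. The only cosmetic difference is that you work backwards from the target failure probability to choose $\delta_{KN}$, whereas the paper picks an arbitrarily large exponent $C_{KN}$ and then observes the result is still of the desired form.
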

\begin{proof}
We use Theorem \ref{thm:general} together with Lemma \ref{lem:subspaceBound};
for the latter, we need that for any fixed $y \in L$, $\|By\|_2^2 = (1 \pm \eps/6)\|y\|_2^2$ with probability at least
$1-\delta_{sub}$. By Theorem \ref{thm:general}, we have this for $\delta_{sub} = (\gdelta\eps/r)^{C_{KN}}$ for an arbitrarily large constant $C_{KN} > 0$.
Hence, by Lemma \ref{lem:subspaceBound}, there is a constant $K_{sub} > 0$ 
so that with probability at least
$1-(K_{sub})^{O(\log (r/\eps\gdelta))} (\gdelta\eps/r)^{C_{KN}}= 1 - (\gdelta\eps/r)^{C_{subKN}}$,
for all $y \in L$, 
$\|By\|_2^2 = (1 \pm \eps) \|y\|_2^2$. Here we use that $C_{KN} > 0$ can be made arbitrarily large, independent of $K_{sub}$. 
\end{proof} 
\subsection{The construction}
We now define a {\it generalized sparse embedding matrix} $S$. 
Let $A \in \mathbb{R}^{n \times d}$ with rank $r$. 

Let $\kv = \Theta(\eps^{-1} \log (r/\eps\gdelta))$ and
$v = \Theta(\eps^{-1})$,  be such that Theorem \ref{thm:general} 
and Corollary \ref{cor:subspace} apply
with parameters $\kv$ and $v$, for a sufficiently large constant $C_{subKN} > 0$.  
Further, let $$q \equiv C_t r \eps^{-2}(r+\log(1/\gdelta\eps)),$$ 
where $C_t > 0$ is a sufficiently large absolute constant, 
and let $t \equiv avq$. 

Let $h:[n] \rightarrow [q]$ be a random hash function. For $i = 1, 2, \ldots, q$, 
define $a_i = |h^{-1}(i)|$. Note that $\sum_{i=1}^q a_i = n$. 

We choose independent matrices 
$B^{(1)}, \ldots, B^{(q)}$, with each $B^{(i)}$ as in Theorem \ref{thm:general} 
with parameters $\kv$ and $v$. Here $B^{(i)}$ is a $v\kv \times a_i$ matrix. Finally, 
let $P$ be an $n \times n$ permutation matrix which, when applied to a matrix $A$, maps the rows
of $A$ in the set $h^{-1}(1)$
to the set of rows $\{1, 2, \ldots, a_1\}$, maps the rows of $A$ in the set $h^{-1}(2)$ to the
set of rows $\{a_1+1, \ldots, a_1+a_2\}$, and for a general $i \in [q]$,
maps the set of rows of $A$ in the set $h^{-1}(i)$ 
to the set of rows $\{a_1+a_2 + \cdots + a_{i-1}+1, \ldots, a_1 + a_2 + \cdots + a_i\}$. 

The map $S$ is defined to be the product of a block-diagonal matrix and the
matrix $P$:
\[
S \equiv \left[
\begin{matrix}
B^{(1)} & &\\
& B^{(2)} &&\\
&& \ddots & \\
&&& B^{(q)} \\
\end{matrix}
\right] \cdot P
\]

\begin{lemma}\label{lem:time}
$S \cdot A$ can be computed in $O(\nnz(A) (\log (r/\eps\gdelta)) / \eps)$ time.
\end{lemma}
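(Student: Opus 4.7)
The plan is to show that nothing about the block structure of $S$ forces us to pay more than $O(\kv)$ per nonzero of $A$, where $\kv=\Theta(\eps^{-1}\log(r/\eps\gdelta))$, so the total cost matches the claim.

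First I would observe that the permutation $P$ never needs to be realized: since $P$ only reorders rows according to $h$, we can regard $SA$ as being assembled by streaming the rows of $A$ and, for each row index $j\in[n]$, routing it to block $i=h(j)$. The role of $P$ together with the block-diagonal factor is simply that, for each $j$, the row $A_{j,:}$ is multiplied on the left by the column of $B^{(h(j))}$ corresponding to its position within $h^{-1}(h(j))$. This bookkeeping is $O(1)$ per row (and thus $O(\nnz(A))$ total) provided $h$ is evaluated on the fly.

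Next I would unpack each block. By the construction preceding Theorem~\ref{thm:general}, each $B^{(i)}$ is the vertical concatenation of $\kv$ scaled sparse embedding matrices $\sqrt{1/\kv}\,\Phi^{(i)}_\ell D^{(i)}_\ell$, $\ell\in[\kv]$, each mapping $\R^{a_i}\to\R^v$. A sparse embedding has exactly one nonzero per column, so applying $\Phi^{(i)}_\ell D^{(i)}_\ell$ to a single nonzero entry $A_{j,k}$ takes $O(1)$ time: evaluate the hash $h^{(i)}_\ell$ at the within-block index of $j$ to find the target row, read off the $\pm 1$ sign from $D^{(i)}_\ell$, and accumulate $\pm\sqrt{1/\kv}\,A_{j,k}$ into the appropriate entry of column $k$ of the output. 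Doing this over all $\ell\in[\kv]$ costs $O(\kv)$ per nonzero of $A$.

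Summing over all nonzeros of $A$, the total work is $O(\nnz(A)\cdot \kv)=O(\nnz(A)\,\eps^{-1}\log(r/\eps\gdelta))$, which is the stated bound. The only thing to verify carefully is the $O(1)$-per-nonzero cost of each stage, which amounts to noting that $\Phi^{(i)}_\ell D^{(i)}_\ell$ has one nonzero per column, so the matrix--matrix product with $A$ restricted to the rows in $h^{-1}(i)$ touches each input nonzero exactly once; apart from this observation the lemma is purely accounting and has no genuine obstacle.
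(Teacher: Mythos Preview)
Your argument is correct and follows essentially the same approach as the paper: both proofs observe that the permutation $P$ costs $O(\nnz(A))$ (or can be handled implicitly) and that processing each nonzero of $A$ through the appropriate block $B^{(i)}$ costs $O(\kv)=O(\eps^{-1}\log(r/\eps\gdelta))$. You simply spell out in more detail \emph{why} the per-nonzero cost is $O(\kv)$, by decomposing $B^{(i)}$ into its $\kv$ one-nonzero-per-column layers, whereas the paper asserts this step directly.
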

\begin{proof}
As $P$ is a permutation matrix, $P \cdot A$ can be computed in $O(\nnz(A))$ time and has
the same number of non-zero entries of $A$. For each non-zero entry of $P \cdot A$,
we multiply it by $B^{(i)}$ for some $i$, which takes $O(\kv) = O(\log(r/\eps\gdelta)/\eps)$ time. Hence, the total time
to compute $S \cdot A$ is $O(\nnz(A) (\log (r/\eps\gdelta)) / \eps)$. 
\end{proof}

\subsection{Analysis}
We adapt the analysis given for sparse embedding matrices to generalized sparse embedding matrices.
Again let $U \in \mathbb{R}^{n \times r}$ have columns that form an
orthonormal basis for the column space $C(A)$. Let $U_{1, *}, \ldots, U_{n, *}$ be the rows
of $U$, and let $u_i\equiv \norm{U_{i,*}}^2$. For $\gdelta\in (0,1)$, we set the parameter:
\begin{equation}\label{eq:T JL}
T \equiv \frac{r}{C_Tq\log(t/\gdelta)} = \frac{O(\eps^2)}{\log(r/\eps\gdelta)(r + \log(1/\eps\gdelta))},
\end{equation}
where $C_T$ is a sufficiently large absolute constant. 

\subsubsection{Vectors with small entries}
Let $s\equiv \min\{i'\mid u_i \le T\}$, and for $y'\in C(A)$ of at most unit norm, let $y \equiv y'\scn$.
Since $y_i^2 \leq u_i$,
this implies that $\|y\|_{\infty}^2 \leq T$. Since $P$ is a permutation matrix, 
we have $\|Py\|_{\infty}^2 \leq T$. 

In this case, we can reduce the analysis to that of a sparse embedding matrix. Indeed, observe
that the matrix $B^{(i)}\in\R^{va\times a_i}$ is the concatenation of matrices 
$\Phi^{(i)}_1 D^{(i)}_1, \ldots, \Phi^{(i)}_{\kv} D^{(i)}_{\kv}$, where
each $\Phi^{(i)}_j D^{(i)}_j \in \R^{v\times a_i}$ is a sparse embedding matrix. Now fix a value $j \in [\kv]$ and consider
the block-diagonal matrix $N_j\in \R^{qv\times a_i}$:
\[
N_j \equiv \left[
\begin{matrix}
\Phi^{(1)}_j D^{(1)}_j & &\\
& \Phi^{(2)}_j D^{(2)}_j &&\\
&& \ddots & \\
&&& \Phi^{(q)}_j D^{(q)}_j \\
\end{matrix}
\right] \cdot P
\]
\begin{lemma}\label{lem:reduce}
$N_j$ is a random sparse embedding matrix with $qv = t/\kv$ rows and $n$ columns.
\end{lemma}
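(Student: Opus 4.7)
The plan is to unpack the definition of $N_j$ and identify the effective hash function and sign matrix it induces on $[n]$, then verify that these have exactly the distribution required by the sparse embedding definition in Section~\ref{sec:sparse embed}.

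First I would set up notation: for each $k\in [n]$, let $i(k)\equiv h(k)\in [q]$ be its block index under the outer hash, and let $\pi(k)\in [a_{i(k)}]$ be its position within the sorted preimage $h^{-1}(i(k))$ (so $P$ sends the $k$th row to row $a_1+\cdots+a_{i(k)-1}+\pi(k)$). Let $h^{(i)}_j:[a_i]\to [v]$ and $D^{(i)}_j$ denote the internal hash and diagonal sign matrix of the sparse embedding $\Phi^{(i)}_j D^{(i)}_j$. Reading off the composition, the column of $N_j$ indexed by $k$ has exactly one nonzero entry, of value $\pm 1$, in row
\[
h^*(k)\equiv (i(k)-1)\cdot v + h^{(i(k))}_j(\pi(k))\in [qv],
\]
with sign $D^*_{kk}\equiv (D^{(i(k))}_j)_{\pi(k),\pi(k)}\in\{\pm 1\}$. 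So $N_j = \Phi^* D^*$ for the hash $h^*$ and the diagonal sign matrix $D^*$ defined in this way, and it remains only to check that $(h^*,D^*)$ is distributed as required.

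Next I would verify the three distributional properties: (a) each $h^*(k)$ is uniform on $[qv]$; (b) the $h^*(k)$ are mutually independent across $k\in [n]$; and (c) the signs $D^*_{kk}$ are i.i.d.\ uniform on $\{\pm 1\}$ and independent of $h^*$. For (a), since $i(k)=h(k)$ is uniform on $[q]$ and, conditioned on $i(k)=i$, the internal hash value $h^{(i)}_j(\pi(k))$ is uniform on $[v]$ (as $\Phi^{(i)}_j D^{(i)}_j$ is a sparse embedding), $h^*(k)$ is uniform on $[qv]$. For (b), fix $k\ne k'$ and condition on the outer hash values $i(k),i(k')$: if $i(k)\ne i(k')$ then $h^*(k)$ and $h^*(k')$ lie in disjoint ranges and use independent internal hashes $h^{(i(k))}_j,h^{(i(k'))}_j$; if $i(k)=i(k')=i$, then $\pi(k)\ne \pi(k')$ and the values $h^{(i)}_j(\pi(k)),h^{(i)}_j(\pi(k'))$ are independent because $h^{(i)}_j$ is itself a uniform random hash from the internal sparse embedding. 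In either case the pair is independent, and the argument extends to any finite tuple. For (c), the signs $(D^{(i)}_j)_{\pi\pi}$ are jointly independent uniform $\pm 1$ by construction and are independent of all the hash functions.

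Putting (a)--(c) together, $N_j = \Phi^* D^*$ satisfies exactly the defining properties of a sparse embedding matrix from $\R^n$ to $\R^{qv}$, and since $t = av q$ we have $qv = t/a = t/\kv$ rows. The only real subtlety, and the place where I expect the most care is needed, is the conditioning argument in (b) when $i(k)=i(k')$: one must use that the positions $\pi(k),\pi(k')$ are almost surely distinct and that the internal hash $h^{(i)}_j$, being drawn independently of $h$, treats distinct inputs as independent outputs. Everything else is bookkeeping about the permutation $P$ and the block structure.
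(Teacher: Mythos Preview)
Your proposal is correct and follows the same approach as the paper, which gives only a two-sentence sketch: it notes that each column of $N_j$ has a single $\pm 1$ entry and that the distribution of its location is right ``from the distribution of the values $a_1,\ldots,a_q$ and the definition of $P$.'' You have simply filled in the details the paper omits; one tiny quibble is that $\pi(k)\ne\pi(k')$ holds deterministically (not just almost surely) whenever $k\ne k'$ and $i(k)=i(k')$, since $P$ is a permutation.
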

\begin{proof}
$N_j$ has a single non-zero entry in each column, and the value of this non-zero entry
is random in $\{+1, -1\}$. Hence, it remains to show that the distribution of locations
of the non-zero entries of $N_j$ is the same as that in a sparse embedding matrix. This follows
from the distribution of the values $a_1, \ldots, a_{q}$, and the definition of $P$. 
\end{proof}

\begin{lemma}\label{lem:HW JL}
Let $\gdelta\in (0,1)$.
For $j = 1, \ldots, \kv$, let $\mathcal{E}_h^j$ be the event $\mathcal{E}_h$ of Lemma 
\ref{lem:even hash}, applied to matrix $N_j$, with $\delta_h\equiv \gdelta/\kv$,
and $W\equiv T\log(qv/\delta_h) + r/qv \le 2r/C_Tq$.
Suppose $\cap_{j\in [a]} \cE_h^j$ holds.
This event has probability at least $1-\gdelta$.
Then there is an absolute constant $K_L$ such that
with failure probability at most $\delta_L$,
\[
|\norm{Sy\scn}^2 - \norm{y\scn}^2 | \le K_L\sqrt{W\log(a/\delta_L)}.
\]
\end{lemma}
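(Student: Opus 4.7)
The strategy is to reduce the bound for the generalized embedding to $\kv$ independent applications of Lemma~\ref{lem:HW use}, one per ``layer'' $N_j$. By Lemma~\ref{lem:reduce}, each $N_j$ is itself a random sparse embedding matrix with $qv$ rows. Unfolding the block structure of $S$ yields the identity
\[
\norm{Sy}^2 = \vk\sum_{j=1}^{\kv}\norm{N_j y}^2 = \frac{1}{\kv}\sum_{j=1}^{\kv}\norm{N_j y}^2,
\]
so estimating $\norm{Sy\scn}^2$ reduces to averaging the $\kv$ estimates $\norm{N_j y\scn}^2$.

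Conditioning on $\cap_{j\in [\kv]} \cE_h^j$, which holds with probability at least $1-\gdelta$ by a union bound (each $\cE_h^j$ fails with probability at most $\delta_h = \gdelta/\kv$), I will apply Lemma~\ref{lem:HW use} to each $N_j$ at failure probability $\delta_L/\kv$. Since $qv/\delta_h = t/\gdelta$ and $T$ is chosen in~\eqref{eq:T JL} so that $T\log(t/\gdelta) \le r/(C_T q)$, the same bound $W \le 2r/(C_T q)$ applies to every layer, and Lemma~\ref{lem:HW use} then gives $|\norm{N_j y\scn}^2 - \norm{y\scn}^2| \le K_L\sqrt{W\log(\kv/\delta_L)}$ for each $j$. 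A second union bound over the $\kv$ layers keeps the total failure probability at $\delta_L$. Combining the averaging identity with the triangle inequality,
\[
\bigl|\norm{Sy\scn}^2 - \norm{y\scn}^2\bigr|
  \le \frac{1}{\kv}\sum_{j=1}^{\kv}\bigl|\norm{N_j y\scn}^2 - \norm{y\scn}^2\bigr|
  \le K_L\sqrt{W\log(\kv/\delta_L)},
\]
which is the claimed bound.

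There is no real obstacle here: the entire argument is one line of block algebra plus two union bounds. The only hypothesis of Lemma~\ref{lem:HW use} that needs checking is the constraint $\ell = \log(\kv/\delta_L) \le 1/W$, which holds comfortably since $W = O(\eps^2/(r+\log(1/\gdelta\eps)))$ by the definition of $q$, whereas $\log\kv = O(\log(r/\eps\gdelta))$. One could sharpen the bound by a $\sqrt{\log\kv}$ factor by exploiting the independence of the $N_j$'s across $j$ to concentrate the sum rather than bounding by the maximum per layer, but the simpler union-bound form above is enough for the downstream applications.
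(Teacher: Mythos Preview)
Your proof is correct and follows essentially the same route as the paper: decompose $\norm{Sy\scn}^2$ as the average $\frac{1}{\kv}\sum_j \norm{N_j y\scn}^2$, apply Lemma~\ref{lem:HW use} to each layer $N_j$ with failure probability $\delta_L/\kv$, and union-bound. The only minor difference is that the paper explicitly verifies the hypothesis of Lemma~\ref{lem:even hash} for $N_j$ (checking $qv \ge 6\norm{u\scn}^2/T^2\log(qv/\delta_h)$, which reduces to $v\ge 2C_T$), whereas you explicitly verify the $\ell\le 1/W$ hypothesis of Lemma~\ref{lem:HW use}; both checks are routine.
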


\begin{proof}
We apply Lemma~\ref{lem:HW use} with $N_j$ the sparse embedding matrix $\Phi D$,
and $qv$, the number of rows of $N_j$, taking on the role of $t$ in Lemma~\ref{lem:even hash}, 
so that the parameter
$W=T\log(qv/\delta_h) + r/qv$ as in the lemma statement.
(And since $t=avq$, $qv/\delta_h  = t/\gdelta$, so $W = r/C_Tq + r/qv \le 2r/C_Tq$.)
Since $\norm{u_{s:n}}^2\le rT$, it suffices for Lemma~\ref{lem:even hash}
if $qv$ is at
least $2rT/T^2\log(t/\delta_h) = 2C_Tq$, or $v\ge 2C_T$.

With $\delta_h=\gdelta/\kv$, by a union bound $\cap_{j\in[\kv]} E_h^j$
occurs with failure probability $\gdelta$, as claimed.

We have, for given $N_j$, that with failure probability $\delta_L/a$,
$| \norm{N_j y\scn}^2 - \norm{y\scn}^2| \le K_L\sqrt{W\log(a/\delta_L)}$.
Applying a union bound, and using
$$\|Sy\scn\|_2^2 = \fvk \sum_{j=1}^{\kv} \|N_j y\scn\|_2^2,$$
the result follows.
\end{proof}

\subsubsection{Vectors with large entries}

Again, let $s\equiv \min\{i'\mid u_{i'}\le T\}$. Since $\sum_i u_i = r$,
we have
\[
s\le r/T = C_T q \log(t/\gdelta).
\]

%
%
%
The
following is a standard non-weighted balls-and-bins analysis.
\begin{lemma}\label{lem:non-weighted}
Suppose the previously defined constant $C_t > 0$ is sufficiently large. Let $\cE_{nw}$
be the event that $|h^{-1}(i) \cap [s] | \leq C_t \log (r/\eps\gdelta)$, for all $i \in [q]$.
Then $\Pr[\cE_{nw}] \ge 1-\gdelta/r$.
\end{lemma}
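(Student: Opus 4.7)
The plan is to apply a standard Chernoff bound for balls-and-bins, treating $s$ as the number of balls thrown uniformly and independently into $q$ bins, then take a union bound over the $q$ bins. Writing $X_i \equiv |h^{-1}(i) \cap [s]|$, we have $X_i = \sum_{j \in [s]} \Ibr{h(j)=i}$, a sum of $s$ independent Bernoulli$(1/q)$ random variables, so $\E[X_i] = s/q$. From $s \le r/T = C_T q \log(t/\gdelta)$ we get $\E[X_i] \le C_T \log(t/\gdelta)$, and from the definitions $t = avq$ with $a = \Theta(\eps^{-1}\log(r/\eps\gdelta))$, $v = \Theta(\eps^{-1})$, $q = C_t r \eps^{-2}(r+\log(1/\gdelta\eps))$, a straightforward calculation yields $\log(t/\gdelta) = O(\log(r/\eps\gdelta))$. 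Hence $\E[X_i] = O(\log(r/\eps\gdelta))$.

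Next I would apply the standard multiplicative Chernoff tail bound, in the form $\Pr[X_i \ge (1+\lambda)\E[X_i]] \le \exp(-\lambda \E[X_i] / 3)$ for $\lambda \ge 1$ (or, equivalently, $\Pr[X_i \ge M] \le 2^{-M}$ when $M \ge 2e \E[X_i]$). Picking the threshold $M = C_t \log(r/\eps\gdelta)$, this is a constant times $\E[X_i]$ provided $C_t$ is sufficiently large compared to $C_T$ and to the $O(\cdot)$ constant, and so the per-bin failure probability can be made at most $(r/\eps\gdelta)^{-C}$ for any desired constant $C$ by choosing $C_t$ large.

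Finally, since $q = \poly(r,1/\eps,\log(1/\gdelta))$, the per-bin failure probability $(r/\eps\gdelta)^{-C}$ is at most $\gdelta/(rq)$ once $C$ is a sufficiently large constant. Taking a union bound over the $q$ bins gives $\Pr[\cE_{nw}^c] \le \gdelta/r$, as claimed. The main subtlety, rather than any obstacle, is simply checking that with the stated parameter settings $\log(t/\gdelta) = O(\log(r/\eps\gdelta))$ (since $t$ is polynomial in $r,1/\eps,1/\gdelta$), so that a fixed constant $C_t$ dominating both $C_T$ and the Chernoff slack suffices; the rest is a routine application of Chernoff plus a union bound.
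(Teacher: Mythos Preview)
Your proposal is correct and follows essentially the same approach as the paper: compute $\E[X_i]=s/q\le C_T\log(t/\gdelta)=O(\log(r/\eps\gdelta))$, apply a Chernoff bound to get per-bin failure probability $e^{-\Theta(\log(r/\eps\gdelta))}\le \gdelta/(rq)$, and union bound over the $q$ bins. Your write-up is in fact more explicit than the paper's, carefully verifying that $\log(t/\gdelta)=O(\log(r/\eps\gdelta))$ and that $q$ is polynomial so the union bound goes through.
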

\begin{proof}
For any given $i \in [q]$,
\[
\E[|h^{-1}(i) \cap [s]|] = s/q \le C_T \log(t/\gdelta) = O(\log(r/\epsilon\gdelta)).
\]
Hence, by a Chernoff bound, for a constant $C_t > 0$,
$$\Pr[|h^{-1}(i) \cap [s]| > C_t \log (r/\eps\gdelta)] \leq e^{-\Theta(\log(r/\eps\gdelta))} = \frac{\gdelta}{rq},$$
The lemma now follows by a union bound over all $i \in [q]$. 
\end{proof}

\begin{lemma}\label{lem:large JL}
Assume that $\mathcal{E}_{nw}$ holds. Let $\cE_s$
be the event that for all $y\in C(A)$,
$\|Sy_{1:(s-1)}\|^2 = (1 \pm \eps/2)\|y_{1:(s-1)}\|^2$.
Then $\Pr[\cE_s] \ge 1-\gdelta/r$.
\end{lemma}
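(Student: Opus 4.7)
\textbf{Proof plan for Lemma \ref{lem:large JL}.} My approach is to exploit the block structure of $S = \mathrm{diag}(B^{(1)},\ldots,B^{(q)})\cdot P$ and reduce to a per-bucket subspace-embedding statement handled by Corollary~\ref{cor:subspace}. Observe that
\[
\|Sy_{1:(s-1)}\|_2^2 = \sum_{i\in [q]} \|B^{(i)} z^{(i)}(y)\|_2^2,
\]
where $z^{(i)}(y)\in\R^{a_i}$ denotes the block of $P y_{1:(s-1)}$ corresponding to bucket~$i$. Crucially, $z^{(i)}(y)$ is supported only on the coordinates in $h^{-1}(i)\cap [s-1]$, so on the event $\cE_{nw}$ its support has size at most $C_t\log(r/\eps\gdelta)$.

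First I would fix a bucket $i$ and describe the relevant subspace. As $y$ ranges over $C(A)$, the vectors $z^{(i)}(y)$ form a linear subspace $L_i\subset\R^{a_i}$ that lies in the coordinate subspace indexed by $h^{-1}(i)\cap[s-1]$; hence $\dim L_i \le |h^{-1}(i)\cap[s-1]| \le C_t\log(r/\eps\gdelta)$. Since each $B^{(i)}$ is an independent instance of the Kane--Nelson construction with the same parameters $\kv,v$, Corollary~\ref{cor:subspace} applies (with $\eps$ replaced by $\eps/2$), giving that with failure probability at most $(\eps/r\gdelta)^{C_{subKN}}$,
\[
\|B^{(i)} z\|_2^2 = (1\pm \eps/2)\|z\|_2^2 \quad\text{for all } z\in L_i.
\]

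Next I would union bound over the $q$ buckets. Since $q$ is polynomial in $r/\eps\gdelta$, choosing the constant $C_{subKN}$ sufficiently large (which is allowed in Corollary~\ref{cor:subspace}) makes $q\cdot(\eps/r\gdelta)^{C_{subKN}} \le \gdelta/r$. Call this good event $\cE_s$. Conditioned on it, for any $y\in C(A)$,
\[
\|Sy_{1:(s-1)}\|_2^2 = \sum_{i\in[q]} \|B^{(i)} z^{(i)}(y)\|_2^2
= (1\pm \eps/2)\sum_{i\in[q]} \|z^{(i)}(y)\|_2^2
= (1\pm \eps/2)\|y_{1:(s-1)}\|_2^2,
\]
using that $P$ is a permutation and the per-bucket distortions all lie in the same interval $[1-\eps/2,1+\eps/2]$.

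The main obstacle, such as it is, is verifying that Corollary~\ref{cor:subspace} can be invoked simultaneously for all buckets with the budgeted failure probability. This hinges on two points: (i) the fact that the event $\cE_{nw}$ uniformly caps the relevant subspace dimension by $O(\log(r/\eps\gdelta))$, so the corollary's dimension hypothesis is satisfied; and (ii) the independence of $B^{(1)},\ldots,B^{(q)}$, which permits the union bound. Once these are in place the argument is routine, and the $\gdelta/r$ slack is absorbed by taking $C_{subKN}$ large enough in Corollary~\ref{cor:subspace}.
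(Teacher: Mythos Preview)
Your proposal is correct and follows essentially the same argument as the paper: decompose $\|Sy_{1:(s-1)}\|^2$ bucket-by-bucket, use $\cE_{nw}$ to bound each bucket's subspace dimension by $O(\log(r/\eps\gdelta))$, invoke Corollary~\ref{cor:subspace} per bucket, union bound over the $q$ buckets with $C_{subKN}$ large enough, and then sum the per-bucket estimates. One minor remark: the union bound does not actually require independence of the $B^{(i)}$, so that comment is unnecessary (though harmless).
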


\begin{proof}
For $i = 1, 2, \ldots, q$, let $L^i$ be the at most $C_t\log (r/\eps\gdelta)$-dimensional subspace
which is the restriction of the column space $C(A)$ to coordinates $j$ with
$h(j) = i$ and $j<s$. By Corollary \ref{cor:subspace},
for any fixed $i$, with probability at least $1-(\gdelta\eps/r)^{C_{subKN}}$, for all
$y \in L^i$, $\|Sy\|^2 = (1 \pm \eps)\|y\|^2$.
By a union bound and sufficiently large $C_{subKN} > 0$, 
this holds
for all $i \in [q]$ with probability at least
$1-q(\gdelta\eps/r)^{C_{subKN}} > 1-\gdelta/r$. This condition
implies $\cE_s$, since $y_{1:(s-1)}$ can be expressed as $\sum_{i\in [q]} y^{(i)}$,
where each $y^{(i)}\in L^i$, and letting $\hat B^{(i)}$ denote the $va$ rows
of $S$ corresponding to entries from $B^{(i)}$,
\begin{align*}
\|Sy_{1:(s-1)}\|^2
	  & = \sum_{i\in [q]} \norm{\hat B^{(i)}y^{(i)}}^2
	\\ & = \sum_{i\in [q]}(1 \pm \eps)\|y^{(i)}\|^2
	\\ & = (1\pm\eps) \|y_{1:(s-1)}\|^2.
\end{align*}
A re-scaling to $\eps/2$ completes the proof.
\end{proof}

\subsection{Putting it all together}
Now consider any unit vector $y$ in $C(A)$, and write it as $y_{1:(s-1)} + y\scn$.
We seek to bound
$\langle Sy_{1:(s-1)} , Sy\scn  \rangle$. For notational convenience, define 
the block-diagonal matrix $\tilde{N}_j$ to be the matrix 
\[
\tilde{N}_j \equiv \left[
\begin{matrix}
0 & &\\
\ldots & &\\
0 & &\\
\Phi^{(1)}_j D^{(1)}_j & &\\
0 & &\\
\ldots & &\\
0 & &\\
& 0 &\\
& \ldots & \\
& 0 &\\
& \Phi^{(2)}_j D^{(2)}_j &&\\
& 0 &\\
& \ldots &\\
& 0 &\\
&& \ddots & \\
&&& 0\\
&&& \ldots \\
&&& 0\\
&&& \Phi^{(q)}_j D^{(q)}_j \\
&&& 0\\
&&& \ldots\\
&&& 0
\end{matrix}
\right] \cdot P
\]
Then $S = \sqrt{\vk} \cdot \sum_{j=1}^{\kv} \tilde{N}_j$. Notice that since the set of non-zero rows of
$\tilde{N}_j$ and $\tilde{N}_{j'}$ are disjoint for $j \neq j'$,
\begin{align}\label{eq:blockTZ}
\langle Sy_{1:(s-1)}, Sy\scn \rangle
	   & = \fvk \sum_{j=1}^{\kv} \langle \tilde{N}_j y_{1:(s-1)}, \tilde{N}_j y\scn \rangle\nonumber
	\\ & = \fvk \sum_{j=1}^{\kv} \langle N_j y_{1:(s-1)}, N_j y\scn \rangle,
\end{align}
where by Lemma \ref{lem:reduce}, each $N_j$ is a sparse embedding matrix with $qv = t/\kv$ rows and $n$ columns.

\begin{lemma}\label{lem:cross terms JL}
For $W$ as in Lemma~\ref{lem:HW JL}, and assuming events $\cap_{j=1}^{\kv} \cE_h^j$,
$\mathcal{E}_{nw}$, and $\cE_s$,
there is absolute constant $K_C$ such that with failure probability $\delta_C$,
\[
| \langle Sy_{1:(s-1)}, Sy\scn \rangle | \le K_C \sqrt{W \log(a/\delta_C)}.
\]
\end{lemma}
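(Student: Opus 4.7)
My plan is to start from the decomposition of equation~\eqref{eq:blockTZ},
\[
\langle Sy_{1:(s-1)}, Sy\scn\rangle \;=\; \fvk\sum_{j=1}^{\kv}\langle N_j y_{1:(s-1)}, N_j y\scn\rangle,
\]
and bound each summand on the right. Since $\bigl|\fvk\sum_j X_j\bigr|\le \max_j |X_j|$, it suffices to establish the target bound $K_C\sqrt{W\log(\kv/\delta_C)}$ uniformly for every $j\in[\kv]$ with failure probability at most $\delta_C/\kv$, and then take a union bound. By Lemma~\ref{lem:reduce} and the independence of the sign matrices $D^{(i)}_j$ across $j$, the $\kv$ terms in the decomposition are built from independent sparse embeddings, and each has mean zero because $y_{1:(s-1)}$ and $y\scn$ have disjoint supports.

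For a fixed $j$, I would adapt the approach of Lemma~\ref{lem:cross terms} by writing $\langle N_j y_{1:(s-1)}, N_j y\scn\rangle = z_j^\top B^{(j)} z_j$, where $z_j$ is the sign vector of $N_j$ and $B^{(j)}$ is the symmetric, trace-zero matrix that is block-diagonal across hash buckets of $N_j$. Within bucket $\ell$, the block is the symmetrization of $a^{(\ell)}_j\bigl(b^{(\ell)}_j\bigr)^\top$, where $a^{(\ell)}_j$ stacks the entries $y_{i'}$ with $i'<s$ and $h'_j(i')=\ell$, and $b^{(\ell)}_j$ stacks the entries $y_i$ with $i\ge s$ and $h'_j(i)=\ell$. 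Using $\cE_h^j$ from Lemma~\ref{lem:HW JL} to bound $\|b^{(\ell)}_j\|^2 \le W$ for every bucket $\ell$, together with $\sum_\ell\|a^{(\ell)}_j\|^2 = \|y_{1:(s-1)}\|^2\le 1$, gives
\[
\|B^{(j)}\|_F^2 \;\le\; \tfrac{1}{2}\textstyle\sum_\ell \|a^{(\ell)}_j\|^2\|b^{(\ell)}_j\|^2 \;\le\; W/2.
\]
Each rank-two bucket block has spectral norm $\|a^{(\ell)}_j\|\|b^{(\ell)}_j\|\le \sqrt{W}$, so $\|B^{(j)}\|_2 \le \sqrt{W}$. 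Applying the Hanson--Wright inequality (Theorem~\ref{thm:hw}) yields $\E[|z_j^\top B^{(j)} z_j|^p]^{1/p} = O(\max\{\sqrt{pW},\,p\sqrt{W}\})$; setting $p=\log(\kv/\delta_C)$ and invoking Markov's inequality gives $|z_j^\top B^{(j)} z_j|\le K_C\sqrt{W\log(\kv/\delta_C)}$ in the sub-Gaussian regime, with failure probability at most $\delta_C/\kv$. A union bound over $j\in[\kv]$ then completes the proof.

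The main obstacle is that the perfect-hashing event $\cE_B$ that drove the cleaner Khintchine argument in Lemma~\ref{lem:cross terms} is unavailable here: $s$ can exceed $\sqrt{qv}$, so multiple heavy coordinates may share a bucket and the collapsing identity $z_i^2 = y_{i'(i)}^2$ fails. The workaround is to apply Hanson--Wright directly to the full quadratic form, exploiting the block-diagonal and rank-two structure of $B^{(j)}$; the event $\cE_{nw}$ limits how many heavy coordinates collide inside one $h$-block, providing an auxiliary bound on $\|a^{(\ell)}_j\|$ if a sharper spectral estimate is needed, while $\cE_s$ can be used as an alternative route by bounding $\sum_{i\ge s} y_i^2 z_i^{(j)\,2}\le W\|N_j y_{1:(s-1)}\|^2$ and controlling the right-hand side via the sum guarantee $\sum_j \|N_j y_{1:(s-1)}\|^2 = \kv\|Sy_{1:(s-1)}\|^2$.
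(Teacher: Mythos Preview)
Your Hanson--Wright argument has a real gap: with $\|B^{(j)}\|_F^2\le W/2$ and $\|B^{(j)}\|_2\le\sqrt{W}$, the quantity $Q$ in Theorem~\ref{thm:hw} is
\[
Q=\max\{\sqrt{p}\,\|B^{(j)}\|_F,\;p\,\|B^{(j)}\|_2\}
=\max\{\sqrt{pW/2},\;p\sqrt{W}\},
\]
and for any $p\ge 1$ the second term dominates. So Markov at $p=\log(\kv/\delta_C)$ gives only $|z_j^\top B^{(j)}z_j|\le O(\sqrt{W}\log(\kv/\delta_C))$, not $O(\sqrt{W\log(\kv/\delta_C)})$; you are in the sub-exponential regime, not the sub-Gaussian one. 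Using $\cE_{nw}$ to sharpen $\|a^{(\ell)}_j\|$ does not rescue this, since heavy coordinates $i'<s$ can have $y_{i'}^2$ close to~$1$ and $\cE_{nw}$ only bounds their \emph{number} per outer bucket. There is no event in the hypothesis that forces $\|B^{(j)}\|_2=O(W)$.

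The paper avoids this by \emph{not} treating both sign families as random in a single quadratic form. It conditions on the heavy-side signs $D_{i'i'}$, $i'<s$, and applies Khintchine only to the light-side signs $D_{ii}$, $i\ge s$. This yields, for each $j$,
\[
\E\bigl[\langle N_j y_{1:(s-1)},\,N_j y\scn\rangle^{2p}\bigr]^{1/p}
\;\le\; C_p\,W\,V_j,
\qquad V_j\equiv \|N_j y_{1:(s-1)}\|^2,
\]
using $\cE_h^j$ exactly as you do. After Markov and a union bound over $j$, each term is at most $\sqrt{O(p)WV_j}$. The crucial step is then to \emph{average} rather than take a maximum: by Jensen,
\[
\Bigl|\fvk\sum_j \langle N_j y_{1:(s-1)}, N_j y\scn\rangle\Bigr|
\le \sqrt{O(p)W}\cdot\fvk\sum_j\sqrt{V_j}
\le \sqrt{O(p)W}\cdot\sqrt{\fvk\sum_j V_j},
\]
and $\fvk\sum_j V_j=\|Sy_{1:(s-1)}\|^2\le 1+\eps/2$ by $\cE_s$. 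This is precisely the ``alternative route'' you sketch in your last sentence; it is in fact the main argument, and the only place $\cE_s$ enters. Your primary strategy of bounding $\max_j|X_j|$ uniformly cannot succeed here, because nothing controls individual $V_j$; only their average is bounded.
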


\begin{proof}
We generalize Lemma~\ref{lem:cross terms} slightly
to bound each summand $\langle N_j y_{1:(s-1)}, N_j y\scn \rangle$. 

For a given $j$, and for each $i\ge s$, 
let
\[
z_m \equiv \sum_{i'\in h_j^{-1}(m), i'<s} y_{i'} D^{(j)}_{i'i'},
\]
where $h_j$ is the hash function for $\Phi^{(j)}P$.
We have for integer $p\ge 1$ using Khintchine's inequality,
\begin{align*}
& \E\big[\langle N_j y_{1:(s-1)}, N_j y\scn \rangle^{2p}\big]^{1/p}
	\\ & = \E\bigg[\bigg( \sum_{i\ge s} y_i D^{(j)}_{ii} z_{h_j(i)}\bigg)^{2p}\bigg]^{1/p}
	\\ & \le C_p \sum_{i\ge s} y_i^2 z_{h_j(i)}^2
	 = C_p \sum_{m\in h_j([s-1])} z_m^2  \sum_{\substack{i\in h_j^{-1}(m)\\ i\ge s}} y_i^2
	\\ & \le C_p W V_j,
\end{align*}
where $V_j\equiv \sum_{m\in h_j^{-1}([s-1])} z_m^2$,
and $C_p\le \Gamma(p+1/2)^{1/p}=O(p)$, and the last inequality uses
the assumption that $\cE^j_h$ holds.
Putting $p=\log(a/\delta_C)$ and applying the 
Markov inequality, we have for all $j\in [a]$ that
\[
\Pr[\langle N_j y_{1:(s-1)}, N_j y\scn \rangle^2 \ge e C_p W V_j]
	\ge 1 - a\exp(-p)
	= 1 - \delta_C.
\]
Moreover, $\fvk \sum_{j\in [a]} V_j = \norm{S y_{1:(s-1)}}^2$,
which under $\cE_s$ is at most $(1+\eps/2) \norm{y_{1:(s-1)}}^2\le 1+\eps/2$.
Therefore, with failure probability at most $\delta_C$,
we have
\[
| \langle Sy_{1:(s-1)}, Sy\scn \rangle | \le K_C \sqrt{W \log(a/\delta_C)},
\]
for an absolute constant $K_C$.
\end{proof}

\fi 

The following is our main theorem in this section.
\begin{theorem}\label{thm:jlmain}
For given $\delta>0$, 
with probability at least $1-\gdelta$, for $t= O( r \eps^{-4} \log(r/\eps\gdelta)(r + \log(1/\eps\gdelta)))$,
$S$ is an embedding matrix for $A$;
that is, for all $y \in C(A)$, $\|Sy\|_2 = (1 \pm \eps)\|y\|_2$. 
$S$ can be applied to $A$ in $O(\nnz(A)\epsilon^{-1} \log (r/\gdelta))$ time.
\end{theorem}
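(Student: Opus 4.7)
The plan is to follow the same overall architecture as the proof of Theorem~\ref{thm:main}, but substituting the generalized-embedding lemmas from this section (Lemma~\ref{lem:HW JL}, Lemma~\ref{lem:large JL}, and Lemma~\ref{lem:cross terms JL}) in place of their sparse-embedding counterparts, and then applying Lemma~\ref{lem:subspaceBound} to promote a per-vector guarantee to a uniform guarantee over $C(A)$. First I would fix a unit vector $y\in C(A)$ and decompose $y = y_{1:(s-1)} + y\scn$, where $s\equiv\min\{i\mid u_i\le T\}$ for $T$ as in \eqref{eq:T JL}, so that $y\scn$ has $\ell_\infty$ norm at most $\sqrt{T}$. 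Using $\|Sy\|^2 = \|Sy_{1:(s-1)}\|^2 + \|Sy\scn\|^2 + 2\langle Sy_{1:(s-1)},Sy\scn\rangle$, I can bound the three contributions separately.

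Next I condition on the intersection of the events $\cap_{j\in[\kv]}\cE_h^j$ (from Lemma~\ref{lem:HW JL}), $\cE_{nw}$ (from Lemma~\ref{lem:non-weighted}), and $\cE_s$ (from Lemma~\ref{lem:large JL}); by union bound this fails with probability at most $\gdelta + 2\gdelta/r = O(\gdelta)$. Under this conditioning, Lemma~\ref{lem:large JL} gives $\|Sy_{1:(s-1)}\|^2 = (1\pm\eps/2)\|y_{1:(s-1)}\|^2$ uniformly in $y$, while Lemma~\ref{lem:HW JL} gives, for a fixed $y$ with failure probability $\delta_L$, $|\,\|Sy\scn\|^2-\|y\scn\|^2| \le K_L\sqrt{W\log(\kv/\delta_L)}$, and Lemma~\ref{lem:cross terms JL} gives, for a fixed $y$ with failure probability $\delta_C$, $|\langle Sy_{1:(s-1)},Sy\scn\rangle| \le K_C\sqrt{W\log(\kv/\delta_C)}$. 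Setting $\delta_L=\delta_C=\delta_y/2$ and recalling $W = O(r/(C_T q))$, with $q = C_t r\eps^{-2}(r+\log(1/\gdelta\eps))$, the factor $\sqrt{W\log(\kv/\delta_y)}$ is $O(\eps/\sqrt{r+\log(1/\delta_y)})$. Thus for any fixed unit $y\in C(A)$, with failure probability $\delta_y$, $\|Sy\|^2 = (1\pm\eps)\|y\|^2$.

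To pass to a uniform bound, I apply Lemma~\ref{lem:subspaceBound} to the $r$-dimensional subspace $C(A)$ with $\delta_{sub}=\delta_y$; taking $\delta_y = K_{sub}^{-r}\cdot(\gdelta/3)$ ensures that the overall failure probability is at most $\gdelta$, and Lemma~\ref{lem:subspaceBound} then yields $\|Sy\|_2 = (1\pm O(\eps))\|y\|_2$ simultaneously for all $y\in C(A)$. One then absorbs the $O(\cdot)$ constant into $\eps$ by rescaling. Verifying the parameter relations reduces to checking that with $t = avq$, $a = \Theta(\eps^{-1}\log(r/\eps\gdelta))$, and $v = \Theta(\eps^{-1})$, we have $t = O(r\eps^{-4}\log(r/\eps\gdelta)(r+\log(1/\eps\gdelta)))$, which matches the theorem. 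The running time bound $O(\nnz(A)\eps^{-1}\log(r/\gdelta))$ is immediate from Lemma~\ref{lem:time}.

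The main obstacle is bookkeeping rather than any new ingredient: making sure that the choice of $\delta_y$ absorbs the $K_{sub}^{r}$ blow-up from the net argument, while simultaneously keeping $W\log(\kv/\delta_y)$ small enough that the fixed-vector estimate is $O(\eps)$, and verifying that the parameter $q$ (and hence $t$) chosen in the construction is large enough to make the hypothesis $W \le 2r/(C_T q)$ of Lemma~\ref{lem:HW JL} force $W\log(1/\delta_y) = O(\eps^2)$. Once these are reconciled, the pieces assemble mechanically.
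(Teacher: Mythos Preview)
Your proposal is correct and follows essentially the same approach as the paper: condition on $\cap_{j\in[\kv]}\cE_h^j$, $\cE_{nw}$, and $\cE_s$, bound the three pieces of $\|Sy\|^2$ via Lemmas~\ref{lem:large JL}, \ref{lem:HW JL}, and \ref{lem:cross terms JL}, set $\delta_L,\delta_C$ proportional to $\gdelta K_{sub}^{-r}$ so that $W\log(\kv/\delta_L)=O(\eps^2)$, and then invoke Lemma~\ref{lem:subspaceBound}. The paper's argument is the same up to bookkeeping in the choice of the per-vector failure parameter; your intermediate estimate of $\sqrt{W\log(\kv/\delta_y)}$ as $O(\eps/\sqrt{r+\log(1/\delta_y)})$ is slightly misstated (the point is that the numerator $\log(\kv/\delta_y)$ is itself $O(r+\log(1/\eps\gdelta))$ once $\delta_y$ is chosen, so the ratio is $O(\eps^2)$), but the conclusion you draw from it is correct.
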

\STOComitedproof{
\begin{proof}
Note that
\[
t=\kv v q = O([\eps^{-1}\log(r/\eps\gdelta)][\eps^{-1}][C_t r\eps^{-2}(r+\log(1/\eps\gdelta))]),
\]
yielding the bound claimed.
From Lemma~\ref{lem:HW JL}, event $\cap_{j\in[\kv]} \cE^j_h$ occurs with
failure probability at most $\gdelta$.
From Lemma~\ref{lem:non-weighted} and \ref{lem:large JL}
the joint occurrence of $\cE_{nw}$
and $\cE_s$ holds with failure probability at most $2\gdelta/r\le \gdelta$.
Given these events, from Lemmas~\ref{lem:cross terms JL} and
\ref{lem:HW JL}, we have with failure probability at most $\delta_L + \delta_C$
that
\begin{align*}
& | \norm{Sy}^2 - \norm{y}^2|
	\\  & = | \norm{Sy_{1:(s-1)}}^2 - \norm{y_{1:(s-1)}}^2
		+ \norm{Sy\scn}^2 - \norm{y\scn}^2
	\\ & \qquad\qquad	+ 2 \langle Sy_{1:(s-1)}, Sy\scn \rangle |
	\\ & \le (\eps/2) \norm{y_{1:(s-1)}}^2
		+  K_L\sqrt{W\log(a/\delta_L)}
		+ 2 K_C \sqrt{W \log(a/\delta_C)},
\end{align*}
where $W\le 2r/C_Tq$.

Setting $\delta_C = \delta_L = \gdelta K_{sub}^{-r}$,
where $K_{sub}$ is from Lemma~\ref{lem:subspaceBound},
and recalling that $\kv=O(\eps^{-1}\log(r/\eps\gdelta))$,
we have 
\[
W \log(a/\delta_L)
	\leq \frac{2r\log(a/\delta_L)}{C_T q}
	= \frac{2\eps^2 O(r + \log(1/\eps\gdelta))}{C_T (r + \log(1/\eps\gdelta))}
	\le \eps^2/C'_T,
\]
for absolute constant $C'_T$.
Using Lemma~\ref{lem:subspaceBound},
we have that with failure probability at most
$\gdelta + \gdelta +K_{sub}^r(2\gdelta K_{sub}^{-r})\le 4\gdelta$,
that
\[
| \norm{Sy}^2 - \norm{y}^2|
	\le \eps/2 +\sqrt{\eps^2/C'_T} (K_L + 2K_C)
	\le \eps
\]
for suitable choice of $C'_T$. Adjusting $\gdelta$ by a constant factor gives the result.
\end{proof}
}


\ifSTOC
\section{Approximating Leverage \\ Scores}\label{sec:leverage}
\else
\section{Approximating Leverage Scores}\label{sec:leverage}
\fi
Let $A \in \mathbb{R}^{n \times d}$ with rank $r$. Let $U \in \mathbb{R}^{n \times r}$ be an
orthonormal basis for $C(A)$. In \cite{dmmw11} it was shown how to obtain 
a $(1 \pm \eps)$-approximation $u_i'$ to the leverage score $u_i$ for all $i \in [n]$, for a constant $\eps > 0$,
in time $O(nd \log n) + O(d^3 \log n \log d)$. Here we improve the running time of this task as
follows. We state the running time for constant $\eps$, though for general $\eps$ the running time
would be $O(\nnz(A)\log n) + \poly(r\eps^{-1} \log n)$. 
\begin{theorem}\label{thm:icml}
For any constant $\eps > 0$, there is an algorithm which with probability at least $2/3$, outputs
a vector $(u_1', \ldots, u_n')$ so that for all $i \in [n]$, $u_i' = (1 \pm \eps)u_i$. The running time
is %
\ifSTOC
$O(\nnz(A) \log n + r^3 \log^2 r + r^2 \log n)$.
\else
$$O(\nnz(A) \log n + r^3 \log^2 r + r^2 \log n).$$
\fi
The success probability can be amplified by independent repetition and taking the coordinate-wise
median of the vectors $u'$ across the repetitions. 
\end{theorem}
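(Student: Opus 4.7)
The plan is to use the sparse embedding to cheaply produce a matrix $R$ such that $AR$ is an approximate orthonormal basis for $C(A)$, and then use a Johnson--Lindenstrauss (JL) transform to read off approximations to $\|{(AR)}_{i,*}\|^2$ for all $i$ simultaneously, without ever forming $AR$ explicitly.

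First, I would apply a sparse embedding $\Pi_1$ from Theorem~\ref{thm:partition main} (with constant distortion) to obtain $\Pi_1 A\in\R^{t_1\times d}$ with $t_1=\tilde O(r^2)$ rows in $O(\nnz(A))$ time. Because $t_1$ is still too large to do a dense factorization in the desired time, I would follow this with a fast SRHT $\Pi_2$ applied to $\Pi_1 A$, reducing to $t_2=O(r\log r)$ rows in $\tilde O(r^3)$ time, so that $\Pi_2\Pi_1$ is a $(1\pm O(1))$-subspace embedding for $C(A)$. Compute a QR factorization $\Pi_2\Pi_1 A = Q R'$ with $R'\in\R^{r\times r}$ upper triangular and $Q$ orthonormal, and set $R \equiv (R')^{-1}$; this costs $O(t_2 r^2) = O(r^3\log r)$. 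Since $\Pi_2\Pi_1 A R = Q$ has orthonormal columns and $\Pi_2\Pi_1$ is a subspace embedding for $C(A)$, for every $x$ we have $\norm{ARx} = (1\pm\eps)\norm{x}$, so the singular values of $AR$ lie in $[1-\eps,1+\eps]$.

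Next, writing $A = UY$ with $U$ an orthonormal basis of $C(A)$, we get $AR = U(YR)$ with $(YR)(YR)^\top = I + E$ for $\norm{E}_2 = O(\eps)$. Then the diagonal entries of $(AR)(AR)^\top$ are $\norm{(AR)_{i,*}}^2 = u_i + U_{i,*} E U_{i,*}^\top = (1\pm O(\eps))u_i$, so it suffices to estimate the row norms of $AR$. To do this without paying $O(\nnz(A)\cdot r)$, draw a JL matrix $G\in\R^{r\times m}$ with $m = O(\log n)$ that preserves the norms of any fixed set of $n$ vectors up to constant distortion with probability $\ge 2/3$. Form $RG$ in $O(r^2\log n)$ time and then $A(RG)$ in $O(\nnz(A)\log n)$ time, and finally set $u'_i \equiv \norm{e_i^\top A R G}^2$. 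Union-bounding the JL event over the $n$ rows of $AR$ and combining with the $(1\pm\eps)$ subspace-embedding distortion gives $u'_i = (1\pm O(\eps))u_i$ simultaneously for all $i$, with constant success probability.

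The leading-order running times compose as $O(\nnz(A))$ for $\Pi_1 A$, $\tilde O(r^3)$ for $\Pi_2(\Pi_1 A)$ and for the QR factorization (yielding the $r^3\log^2 r$ term), $O(r^2\log n)$ for $RG$, and $O(\nnz(A)\log n)$ for $A(RG)$, matching the bound in the statement. Amplification to arbitrary constant success probability follows by taking independent repetitions and the coordinate-wise median, since each $u'_i$ is a relative approximation with constant probability. The main obstacle is making the composition $\Pi_2\Pi_1$ a valid subspace embedding with only $\tilde O(r)$ output rows while keeping the work to apply it on an $r$-dimensional column space within $\tilde O(r^3)$; this is precisely what justifies the two-stage sketch (sparse embedding then SRHT) rather than a single application of either primitive, and it is what permits the QR step to dominate the additive term at $r^3\log^2 r$ instead of $\poly(r)$ with larger exponent.
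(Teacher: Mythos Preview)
Your approach is essentially the same as the paper's: compose a sparse embedding with a fast transform to get a small subspace embedding, QR-factor the sketch to obtain $R$, and then estimate the row norms of $AR$ via a JL matrix applied on the right. The paper uses the generalized sparse embedding of Theorem~\ref{thm:jlmain} with $t=O(r^2\log r)$ rows followed by a fast JL transform to $O(r\log^2 r)$ rows (rather than Theorem~\ref{thm:partition main} plus SRHT), but this is a cosmetic difference; your correctness argument for $\norm{(AR)_{i,*}}^2 = (1\pm O(\eps))u_i$ is the same one the paper cites from \cite{dmmw11}.

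There is one genuine omission: you implicitly assume $A$ already has $r$ columns (you write $R'\in\R^{r\times r}$ and cost the QR at $O(t_2 r^2)$), but the theorem is stated for $A\in\R^{n\times d}$ of rank $r$ with $d$ possibly much larger than $r$. Without first reducing to $r$ linearly independent columns, applying the SRHT to $\Pi_1 A\in\R^{t_1\times d}$ costs $\tilde O(r^2 d)$ rather than $\tilde O(r^3)$, and the QR step costs $O(t_2 d^2)$, which breaks the stated additive bound. The paper fixes this by first invoking the algorithm of \cite{ckl12} (their Theorems~2.6 and~2.7) to find $r$ linearly independent columns of $A$ in $O(\nnz(A)\log d)+O(r^3)$ time, after which one may assume $d=r$; you should add this preprocessing step.
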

\STOComitedproof{
\begin{proof}
We first run the algorithm of Theorem 2.6 and Theorem 2.7 of \cite{ckl12}. The first theorem 
gives an algorithm which outputs the rank $r$ of $A$, while the second theorem gives an algorithm
which also outputs the indices $i_1, \ldots, i_r$ of linearly independent columns of $A$. The algorithm
takes $O(\nnz(A)\log d) + O(r^3)$ time and succeeds with probability at least $1-O(\log d)/d^{1/3}$.
Hence, in what follows, we can assume that $A$ has full rank. 

We follow the same procedure as Algorithm 1 in \cite{dmmw11}, using our improved subspace embedding.
The proof of \cite{dmmw11} proceeds by choosing a subspace embedding $\Pi_1$, computing $\Pi_1 A$,
then computing a change of basis matrix $R$ so that $\Pi_1 A R$ has orthonormal columns. The analysis
there then shows that the row norms $\|(AR)_{i, *}\|_2^2$ are equal to $u_i(1 \pm \eps)$. To obtain
these row norms quickly, an $r \times O(\log n)$ Johnson-Lindenstrauss matrix $\Pi_2$ is sampled,
and one first computes $R\Pi_2$, followed by $A(R\Pi_2)$. Using a fast Johnson-Lindenstrauss transform
$\Pi_1$, one can compute $\Pi_1 A$ in $O(nr \log n)$ time. $\Pi_1$ has $O(r \log n \log r)$ rows, and
one can compute the $r \times r$ matrix $R$ in $O(r^3 \log n \log r)$ time by computing a 
QR-factorization. Computing $R\Pi_2$
can be done in $O(r^2 \log n)$ time, and computing $A (R \Pi_2)$ can be done in $O(\nnz(A) \log n)$ time.

Our only change to this procedure is to use a different matrix $\Pi_1$, which is the composition
of our subspace embedding matrix $S$ of Theorem \ref{thm:jlmain} with parameter $t = O(r^2 \log r)$, together with a fast
Johnson Lindenstrauss transform $F$. That is, we set $\Pi_1 = F \cdot S$. Here, $F$ is an
$O(r \log^2 r) \times t$ matrix, see Section 2.3 of \cite{dmmw11} for an instantiation of $F$. 
Then, $S \cdot A$ can be computed in $O(\nnz(A) \log r)$ time by Lemma \ref{lem:time}. Moreover,
$F \cdot (SA)$ can be computed in $O(t \cdot r \log r) = O(r^3 \log^2 r)$ time. One can then 
compute the matrix $R$ above in $O(r^3 \log^2 r)$ time by computing a QR-factorization of $FSA$. Then
one can compute $R \Pi_2$ in $O(r^2 \log n)$ time, and computing $A (R \Pi_2)$ can be done in 
$O(\nnz(A) \log n)$ time.
Hence, the total time is $O(\nnz(A)\log n + r^3 \log^2 r + r^2 \log n)$ time. 

Notice that
by Theorem \ref{thm:jlmain}, with probability at least $4/5$, 
$\|Sy\|_2 = (1 \pm \eps)\|y\|_2$ for all $y \in C(A)$, and by
Lemma 3 of \cite{dmmw11}, with probability at least $9/10$, 
$\|FSy\|_2 = (1\pm \eps)\|Sy\|_2$ for all $y \in C(A)$. Hence,
$\|FSAx\|_2 = (1 \pm \eps)^2 \|Ax\|_2$ for all $x \in \mathbb{R}^d$ 
with probability at least $7/10$. There is
also a small $1/n$ probability of failure that 
$\|(AR\Pi_2)_{i, *}\|_2 \neq (1 \pm \eps)\|(AR)_{i, *}\|_2$ for
some value of $i$. Hence, the overall success probability is at least $2/3$. 

The rest of the correctness proof is identical to the analysis in \cite{dmmw11}. 
\end{proof}
} 

\section{Least Squares Regression}\label{sec:regression}
Let $A \in \mathbb{R}^{n \times d}$ and $b \in \mathbb{R}^n$ be a matrix and
vector for the regression problem: $\min_x \norm{Ax-b}_2$. We assume $n > d$. Again,
let $r$ be the rank of $A$. 
We show that with probability at least $2/3$, 
we can find an $x'$ for which
$$\norm{Ax'-b}_2 \leq (1+\eps)\min_x \norm{Ax-b}_2.$$

We will give several different algorithms.
First, we give an algorithm showing that the dependence on $\nnz(A)$ can be linear.
Next we shift to the generalized case, with multiple right-hand-sides,
and after some analytical preliminaries, give an algorithm based on sampling
using leverage scores. Finally, we discuss affine embeddings, constrained regression,
and iterative methods.

\begin{theorem}\label{thm:lin reg}
The $\ell_2$-regression problem can be solved up to a $(1+\eps)$-factor with probability at least 
$2/3$ in $O(\nnz(A) + O(d^3 \eps^{-2} \log^7(d/\eps))$ time.
\end{theorem}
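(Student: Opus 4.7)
The plan is a two-stage sketching approach. I would first apply a sparse embedding in input-sparsity time to reduce the problem to one with $\tilde O(d^2)$ rows, then compose with a Subsampled Randomized Hadamard Transform (SRHT) to shrink further to $\tilde O(d/\eps^2)$ rows, and finally solve the resulting small least-squares problem by a direct method such as QR. The reason for the second stage is that a single sparse embedding alone gives $t_1 = \tilde O(d^2)$ rows with constant distortion (Theorem~\ref{thm:partition main}), which would leave an $\tilde O(d^4)$ direct-solve step at the end; the SRHT polishes this down to the target $\tilde O(d^3/\eps^2)$.

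Concretely, let $S$ be a sparse embedding from Theorem~\ref{thm:partition main} with $P=1$ and a sufficiently small absolute-constant distortion, so $S$ has $t_1 = O(d^2 \log^6 d)$ rows and, with probability at least $9/10$, is a constant-factor subspace embedding for the $(d{+}1)$-dimensional space $C([A\mid b])$. Computing $SA$ and $Sb$ takes $O(\nnz(A))$ time. Next, take an SRHT $F$ with $t_2 = O(\eps^{-2} d \log d)$ rows; applied column by column via the FFT to the $t_1 \times d$ matrix $SA$, it produces $FSA$ in $O(t_1 d \log t_1) = O(d^3 \log^7 d)$ time, and $FSb$ analogously. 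Because the image $S\cdot C([A\mid b])$ is a subspace of $\R^{t_1}$ of dimension at most $d+1$, the standard SRHT subspace-embedding guarantee makes $F$ a $(1\pm\eps)$-subspace embedding on it, so $FS$ is a $(1\pm O(\eps))$-subspace embedding for $C([A\mid b])$. Then $x' = \argmin_x \norm{FSAx - FSb}_2$ can be computed by a QR factorization of $FSA$ in $O(t_2 d^2) = O(\eps^{-2} d^3 \log d)$ time.

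Correctness is the standard subspace-embedding argument applied to the augmented space: writing $x^*$ for the true optimum, $Ax'-b$ lies in $C([A\mid b])$, so
\[
\norm{Ax'-b}_2 \le (1{+}O(\eps))\norm{FS(Ax'-b)}_2 \le (1{+}O(\eps))\norm{FS(Ax^*-b)}_2 \le (1{+}O(\eps))\norm{Ax^*-b}_2,
\]
using optimality of $x'$ in the middle and the embedding at the ends. Rescaling $\eps$ by a constant absorbs the $O(\cdot)$, and a union bound over the two sketching stages gives overall success probability at least $2/3$. Summing the running times yields $O(\nnz(A)) + O(d^3 \eps^{-2}\, \log^7(d/\eps))$ once polylogs are collected. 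The main point to check is the second-stage embedding guarantee: one must verify that the SRHT, applied in ambient dimension $t_1 = \poly(d)$, succeeds with the required probability as a $(1\pm\eps)$-subspace embedding on the specific $(d{+}1)$-dimensional subspace carved out by $S$; this follows from the standard high-probability SRHT analysis, where the polylogarithmic dependence on the ambient dimension $t_1$ is what ultimately produces the $\log^7$ factor in the final running time.
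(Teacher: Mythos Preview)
Your proposal is correct and follows essentially the same two-stage strategy as the paper: apply a sparse embedding to $C([A\mid b])$ in $O(\nnz(A))$ time, then reduce further via a fast JL-type transform before solving the small problem directly. The only cosmetic difference is that the paper places the $(1\pm\eps)$ accuracy in the first stage (taking $t=O(d^2\eps^{-2}\log^6(d/\eps))$ rows for the sparse embedding and then invoking Sarl\'os's regression algorithm as a black box for the second stage), whereas you use a constant-distortion sparse embedding first and push the $\eps$-dependence into the SRHT stage; both yield the same final running time.
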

\begin{proof}
By Theorem \ref{thm:main} applied to the column space $C(A\circ b)$,
where $A\circ b$ is $A$ adjoined with the vector $b$,
it suffices to compute $\Phi D A$ and $\Phi D b$ and output 
argmin$_x \norm{\Phi D Ax- \Phi D b}_2$. We use the fact that $d \geq r$, and apply
Theorem \ref{thm:partition main} with $t = O(d^2 \eps^{-2} \log^6(d/\eps))$. 

The theorem implies that with probability at least $9/10$, 
all vectors $y$ in the
space spanned by the columns of $A$ and $b$ 
have their norms preserved up to a $(1+\eps)$-factor. 
Notice that $\Phi D A$ and $\Phi D b$ can be computed in 
$O(\nnz(A))$ time. Now we have a regression problem with $d' = O(d^2 \eps^{-2} \log^6(d/\eps))$ rows
and $d$ columns. 
Using the Fast Johnson-Lindenstrauss
transform, this can be solved in $O(d' d \log (d/\eps) + d^3 \eps^{-1} \log d)$ time, see, Theorem 12 of \cite{s06}. The
success probability is at least $9/10$. This is $O(d^3 \eps^{-2} \log^7(d/\eps))$ time. 
\end{proof}

Our remaining algorithms will be stated for generalized regression.

\subsection{Generalized Regression and Affine Embeddings}

The regression problem can be slightly generalized to
\[
\min_X \normF{AX-B},
\]
where $X$ and $B$ are matrices rather than vectors. This problem,
also called \emph{multiple-response} regression,
is important in the analysis of our low-rank approximation algorithms,
and also of independent interest. Moreover, while an analysis involving
the embedding of $A\circ b$ is not significantly different than for an embedding
involving $A$ alone, this is not true for $A\circ B$: different techniques must 
be considered. This subsection gives the needed theorems
needed for analyzing algorithms for generalized regression,
and also gives a general result for \emph{affine embeddings}.

Another form of sketching matrix relies on sampling based on leverage scores; it
will be convenient to define it using sampling with replacement: for 
given sketching dimension $t$, for $m\in [t]$ let $S\in\R^{t\times n}$ have
$S_{m, z_m} \gets 1/\sqrt{tp_{z_m}}$, where $p_i \ge u_i/2r$,
and $z_m = i$ with probability $p_i$.

The following fact is due to Rudelson\cite{Rudelson}, but has since seen many proofs,
and follows readily from Noncommutative Bernstein inequalities \cite{Recht},
which are very similar to matrix Bernstein inequalities \cite{Zouzias}.

\begin{fact}\label{fact:lev embed}
For rank-$r$ $A\in\R^{n\times d}$ with row leverage scores $u_i$,
there is $t=O(r\eps^{-2}\log r)$ such that leverage-score sketching matrix $S\in \R^{t\times n}$ is
an $\epsilon$-embedding matrix for $A$.
\end{fact}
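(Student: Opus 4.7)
The plan is to reduce the embedding condition to a spectral-norm bound on a random sum, and then apply the matrix Bernstein inequality quoted just before Lemma~\ref{lem:collide norm}. Let $U \in \R^{n \times r}$ be an orthonormal basis of $C(A)$. Since $A = U(\Sigma V^\top)$, the condition $\|SAx\|_2 = (1\pm\eps)\|Ax\|_2$ for all $x$ is equivalent to $\|SUy\|_2 = (1\pm\eps)\|y\|_2$ for all $y\in\R^r$, which in turn is equivalent (up to rescaling $\eps$ by a constant) to $\|U^\top S^\top S U - I_r\|_2 \le \eps$. So the goal is to control this operator norm.

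Next, I would write $U^\top S^\top S U = \sum_{m=1}^t Y_m$, where $Y_m \equiv \frac{1}{tp_{z_m}} U_{z_m,*}^\top U_{z_m,*}$. A direct calculation gives $\E[Y_m] = \frac{1}{t}\sum_i p_i \cdot \frac{1}{p_i}U_{i,*}^\top U_{i,*} = \frac{1}{t}U^\top U = \frac{1}{t}I_r$, so that $\sum_m \E[Y_m] = I_r$. Set $H_m \equiv Y_m - \E[Y_m]$; I want to apply the matrix Bernstein fact from Section~\ref{subsec:spectral} to $\sum_m H_m$.

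For the uniform bound, note that $Y_m$ has rank one with $\|Y_m\|_2 = \frac{\|U_{z_m,*}\|^2}{tp_{z_m}} = \frac{u_{z_m}}{tp_{z_m}} \le 2r/t$, using the hypothesis $p_i \ge u_i/2r$. Hence $M \equiv \max_m \|H_m\|_2 = O(r/t)$. For the variance, $\E[H_m^2] \preceq \E[Y_m^2]$, and
\[
\E[Y_m^2] = \sum_i p_i \frac{u_i}{t^2 p_i^2} U_{i,*}^\top U_{i,*} \preceq \frac{2r}{t^2}\sum_i U_{i,*}^\top U_{i,*} = \frac{2r}{t^2} I_r,
\]
so $\sum_m \|\E[H_m^2]\|_2 \le 2r/t$. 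Plugging these parameters into the matrix Bernstein bound yields
\[
\Pr\bigl[\|{\textstyle\sum_m} H_m\|_2 > \eps\bigr] \le 2r\exp\!\Big(\tfrac{-\eps^2/2}{2r/t + (2r/t)\eps/3}\Big),
\]
which falls below any target constant once $t = \Omega(r\eps^{-2}\log r)$. Finally, applying Lemma~\ref{lem:subspaceBound} (or equivalently unpacking the operator-norm bound directly) gives $\|SUy\|_2 = (1\pm\eps)\|y\|_2$ for all $y\in\R^r$, hence $S$ is an $\eps$-embedding for $A$.

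The main obstacle, such as it is, is keeping the two parameters in the Bernstein bound straight: the uniform bound $M$ and the variance sum $\sum_m \|\E[H_m^2]\|_2$ both scale like $r/t$ precisely because the leverage-score oversampling probabilities $p_i \ge u_i/2r$ cancel the $u_i$-weight in the summands, after which the orthonormality identity $\sum_i U_{i,*}^\top U_{i,*} = I_r$ does the rest. Everything else is routine.
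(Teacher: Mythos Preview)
Your proof is correct and is exactly the route the paper indicates: the paper states this as a \emph{Fact} without proof, attributing it to Rudelson and remarking that it ``follows readily from Noncommutative Bernstein inequalities \cite{Recht}.'' You have filled in precisely those details, using the matrix Bernstein inequality quoted in \S\ref{subsec:spectral}. One small remark: the final appeal to Lemma~\ref{lem:subspaceBound} is superfluous, since the operator-norm bound $\|U^\top S^\top S U - I_r\|_2 \le \eps$ already gives the uniform statement directly (as your parenthetical notes); you may simply drop that reference.
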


\subsection{Preliminaries}

We collect a few standard lemmas and facts in this
\ifSTOC
subsection, where the main lemma needed is the following,
which gives well-known bounds for approximate matrix multiplication
using sketches.
\else
subsection.
\fi 
\begin{lemma}\label{lem:tail}{\bf (Approximate Matrix Multiplication})
For $A$ and $B$ matrices with $n$ rows, where $A$ has $n$ columns, and given
$\epsilon>0$, there is $t=\Theta(\epsilon^{-2})$,
so that for 
a $t \times n$ generalized sparse embedding matrix $\ZZ $,
or $t\times n$ fast JL matrix, or $t\log(nd)\times n$ subsampled randomized Hadamard matrix,
or leverage-score sketching matrix for $A$ under the condition
that $A$ has orthonormal columns,
\[
\Pr[\normF{A^\top \ZZ ^\top \ZZ B - A^\top B}^2 < \epsilon^2\normF{A}^2\normF{B}^2] \ge 1-\delta,
\]
for any fixed $\delta >0$.
\end{lemma}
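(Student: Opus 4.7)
Proof proposal for Lemma~\ref{lem:tail}. The plan is to reduce the Frobenius-norm bound to a pairwise inner-product bound via the usual polarization trick, then verify the needed second-moment (JL-moment) inequality separately for each of the four sketch families. Concretely, I will show that for each sketch type under consideration there is an absolute constant $C$ such that for any fixed $x,y\in\R^n$,
\[
\E\bigl[\bigl(\langle\ZZ x,\ZZ y\rangle-\langle x,y\rangle\bigr)^2\bigr]\le \frac{C}{t}\,\norm{x}^2\norm{y}^2.
\]
Given this, writing $M\equiv A^\top\ZZ^\top\ZZ B-A^\top B$ and summing over entries,
\[
\E\bigl[\normF{M}^2\bigr]=\sum_{i,j}\E\bigl[(M_{ij})^2\bigr]\le \frac{C}{t}\sum_{i,j}\norm{A_{:,i}}^2\norm{B_{:,j}}^2=\frac{C}{t}\normF{A}^2\normF{B}^2,
\]
and Markov's inequality delivers the claim with $t=\Theta(\epsilon^{-2}/\delta)=\Theta(\epsilon^{-2})$ for any fixed $\delta>0$.

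For the pairwise bound I would handle the four cases as follows. For the generalized sparse embedding matrix $\ZZ$ of \S\ref{sec:generalized} and the ordinary sparse embedding $\Phi D$, the inner product $\langle\ZZ x,\ZZ y\rangle$ is a sum over a hash bucket of products of independent $\pm1$ signs times coordinates; expanding and using independence of the signs and of the hash function, the cross terms vanish in expectation and the second moment reduces, after a short calculation, to $\tfrac{1}{t}(\norm{x}^2\norm{y}^2+\langle x,y\rangle^2)$, which is the required bound. For the fast JL matrix and the subsampled randomized Hadamard transform, the standard $(\epsilon,\delta,2)$ JL-moment property (for the SRHT with the stated $t\log(nd)$ rows, after conditioning on the $\ell_\infty$ bound on $HDx$ that holds with high probability) gives $\E[(\norm{\ZZ x}^2-\norm{x}^2)^2]=O(\norm{x}^4/t)$, and applying this to $x+y$, $x$, and $y$ via polarization $\langle\ZZ x,\ZZ y\rangle-\langle x,y\rangle=\tfrac12\bigl[(\norm{\ZZ(x+y)}^2-\norm{x+y}^2)-(\norm{\ZZ x}^2-\norm{x}^2)-(\norm{\ZZ y}^2-\norm{y}^2)\bigr]$ yields the desired pairwise second-moment bound. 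For the leverage-score sketching matrix $\ZZ$ under the hypothesis that $A$ has orthonormal columns, I would argue directly: each row $(\ZZ B)_{m,:}$ is $\tfrac{1}{\sqrt{tp_{z_m}}}B_{z_m,:}$, so $(A^\top\ZZ^\top\ZZ B)_{ij}=\tfrac{1}{t}\sum_{m}\tfrac{A_{z_m,i}B_{z_m,j}}{p_{z_m}}$ is an unbiased estimator of $(A^\top B)_{ij}$, and its variance is at most $\tfrac{1}{t}\sum_k\tfrac{(A_{k,i}B_{k,j})^2}{p_k}\le\tfrac{2r}{t}\sum_k\tfrac{(A_{k,i}B_{k,j})^2}{u_k}$; since $A$ has orthonormal columns, $u_k=\norm{A_{k,:}}^2$, so $\tfrac{(A_{k,i})^2}{u_k}\le 1$ and summing over $i,j,k$ gives the Frobenius bound directly without even needing polarization.

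The main obstacle, as I see it, is the leverage-score case: the $p_i\ge u_i/2r$ sampling introduces an $r$ factor that must be absorbed by the orthonormality hypothesis to recover a $t=\Theta(\epsilon^{-2})$ bound independent of $r$. This is why the statement restricts the leverage-score sketch to the case where $A$ has orthonormal columns, and the argument above shows this hypothesis is exactly what cancels the $r$. The other three cases are essentially mechanical once the JL second-moment inequality is in hand, and the polarization reduction is standard.
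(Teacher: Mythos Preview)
Your overall approach matches the paper's: bound the entrywise second moment of $M\equiv A^\top\ZZ^\top\ZZ B-A^\top B$ and apply Chebyshev. The paper cites Thorup--Zhang for the sparse-embedding variance bound you compute (and handles the generalized case by observing that $X_{ij}$ is an average over the $\kv$ independent blocks, which you skip but is routine); for fast JL and SRHT it cites Sarl\'os and Drineas et al.\ directly, while noting, as you do, that the claim also follows from norm preservation via polarization; for leverage-score sampling it expands $\E[\normF{M}^2]$ directly via the trace.

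There is a gap in your leverage-score step. The pointwise inequality $(A_{k,i})^2/u_k\le 1$ you invoke is too weak: applying it termwise and then summing over the column index $i$ gives
\[
\E[\normF{M}^2]\le\frac{2r}{t}\sum_{i,j,k}(B_{k,j})^2=\frac{2r}{t}\cdot(\text{number of columns of }A)\cdot\normF{B}^2=\frac{2r}{t}\normF{A}^2\normF{B}^2,
\]
which is off by a factor of $r$. What actually cancels the $r$ is the \emph{equality} $\sum_i(A_{k,i})^2=u_k$ (this is exactly the identity $u_k=\norm{A_{k,:}}^2$ you correctly derive from orthonormality): summing over $i$ first yields
\[
\sum_i\sum_k\frac{(A_{k,i})^2(B_{k,j})^2}{u_k}=\sum_k\frac{u_k\,(B_{k,j})^2}{u_k}=\norm{B_{:,j}}^2,
\]
so that $\E[\normF{M}^2]\le\frac{2r}{t}\normF{B}^2=\frac{2}{t}\normF{A}^2\normF{B}^2$, using $\normF{A}^2=r$. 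This is precisely the paper's calculation (done there row-by-row, via $p_i\ge\norm{A_{i,*}}^2/2\normF{A}^2$, so that $\norm{A_{i,*}}^2/p_i\le 2\normF{A}^2$), just organized columnwise. Your diagnosis that orthonormality is what absorbs the $r$ is correct, but the mechanism is exact cancellation via the sum, not the termwise bound.
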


\ifSTOC\else
\begin{proof}
For a generalized sparse embedding
matrix with parameters $k$ and $v$, first suppose $v=1$,
so that $\ZZ $ is the embedding matrix of \S\ref{sec:sparse embed}.
Let $X = A^\top  \ZZ ^\top  \ZZ  B - AB$. Then $X_{i,j} = A_i^\top  \ZZ ^\top  \ZZ  B_j  - A_i^\top B_j$, where $A_i$ is the $i$-th column of $A$
and $B_j$ is the $j$-th column of $B$. Thorup and Zhang \cite{tz04} have shown that ${\bf E}[X_{i,j}] = 0$ 
and ${\bf Var}[X_{i,j}] = O(1/t) \norm{A_i}_2^2 \norm{B_j}_2^2.$ Consequently, 
${\bf E}[X_{i,j}^2] = {\bf Var}[X_{i,j}] = O(1/t) \cdot \norm{A_i}_2^2 \norm{B_j}_2^2,$ from which
for an appropriate $t = \Theta(\epsilon^{-2})$, the lemma follows by Chebyshev's inequality.
For $v>1$, $X_{i,j} = \frac{v}{t}\sum_{i\in [t/v]} \hat X_{i,j}$,
see \eqref{eq:blockTZ},
so that`
\[
\Var[X_{i,j}]
	= \frac{v^2}{t^2}\sum_i \Var[\hat X_{i,j}]
	\le \frac{v}{t^2}\norm{A_i}_2^2 \norm{B_j}_2^2
	\le \frac{1}{t}\norm{A_i}_2^2 \norm{B_j}_2^2,
\]
and similarly the lemma follows for the sparse embedding matrices.
The result for fast JL matrices was shown by Sarl{\'o}s\cite{s06},
and for subsampled Hadamard by Drineas et al.\cite{dmms11}, proof of Lemma~5.
(The claim also follows from norm-preserving properties of these
transforms, see \cite{kn10}.)

For leverage-score sampling, first note that
\[
A^\top \ZZ ^\top \ZZ B - A^\top B
	= \frac{1}{t} \sum_{\substack{i\in [n]\\m\in[t]}} A_{i,*}^\top B_{i,*} \left[\frac{\Ibr{z_m=i}}{p_i} - 1\right]
\]
we have $\E[A^\top \ZZ ^\top \ZZ B - A^\top B] = 0$,
and using the independence of the $z_m$,
the second moment of $\normF{A^\top \ZZ ^\top \ZZ B - A^\top B}$
is the expectation of
\begin{align*}
& \tr[ ( A^\top \ZZ ^\top \ZZ B - A^\top B)^\top (A^\top \ZZ ^\top \ZZ B - A^\top B)]
	\\ & = \frac{1}{t^2}\tr \sum_{\substack{i,i'\in [n]\\m\in[t]}}
		B_{i',*}^\top A_{i',*}A_{i,*}^\top B_{i,*}
			\left[\frac{\Ibr{z_m=i}}{p_i} - 1\right]\left[\frac{\Ibr{z_m=i'}}{p_{i'}} - 1\right],
\end{align*}
which is
\[
\frac{1}{t^2} \sum_{m\in [t]}
	\tr\left[ \left[\sum_{i\in [n]} B_{i,*}^\top A_{i,*}A_{i,*}^\top B_{i,*}\frac{1}{p_i}\right]
		- B^\top A A^\top B\right],
\]
or using the cyclic property of the trace, the fact that $p_i \ge \norm{A_{i,*}}^2/2\norm{A}^2$, and
the fact that $\tr[B^\top A A^\top B] = \norm{A^\top B}^2 \le \norm{A}^2\norm{B}^2$,
\[
\frac{1}{t} 
	\left[\sum_{i\in [n]} \norm{A_{i,*}}^2\norm{B_{i,*}}^2\frac{1}{p_i}
		- \tr[B^\top A A^\top B]\right]
	\le \frac{2}{t} \norm{A}^2\norm{B}^2,
\]
and so the lemma follows for large enough $t$ in $O(\eps^{-2})$,
by Chebyshev's inequality.
\end{proof}

\begin{fact}\label{fact:subspace jl}
Given $n\times d$ matrix $A$ of rank $k\le n^{1/2-\gamma}$ for
$\gamma>0$, and $\epsilon > 0$, an $m\times n$
fast JL matrix $\Pi$ with $m=\Theta(k/\epsilon^2)$
is a subspace embedding for $A$ with failure probability at most $\delta$,
for any fixed $\delta>0$,
and requires $O(nd\log n)$ time to apply to $A$.
\end{fact}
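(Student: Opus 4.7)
The plan is the standard combination of a distributional (pointwise) JL guarantee with the subspace net argument already developed in Lemma~\ref{lem:subspaceBound}. First, I would recall the pointwise concentration property of the fast JL construction (say $\Pi = P H D$, with $D$ a diagonal matrix of i.i.d.\ random signs, $H$ a normalized Hadamard-type matrix, and $P$ a subsampling matrix selecting $m$ coordinates): for any fixed $v \in \R^n$, $\|\Pi v\|_2^2 = (1 \pm \eps/6)\|v\|_2^2$ holds with failure probability at most $\exp(-c m \eps^2)$ for an absolute constant $c > 0$. The rank hypothesis $k \le n^{1/2-\gamma}$ is used at exactly this step: it ensures that the ``flattening'' step $HD$ produces vectors whose entrywise max is small enough that the sampling step $P$ of $m = \Theta(k/\eps^2)$ coordinates concentrates at this rate \emph{without} picking up an additive $\log n$ overhead in $m$.

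Second, I would apply Lemma~\ref{lem:subspaceBound} with $L = C(A)$, which has dimension at most $k$. Choosing $m = \Theta(k/\eps^2)$ with a sufficiently large hidden constant makes the per-vector failure probability bounded by $\delta_{sub} \equiv \delta / K_{sub}^{k}$, where $K_{sub}$ is the constant from Lemma~\ref{lem:subspaceBound}. The lemma then upgrades this pointwise $(\pm\eps/6)$ guarantee to a uniform $(\pm \eps)$ subspace embedding guarantee over all of $C(A)$, with total failure probability at most $\delta$, as required.

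Third, the running-time bound is immediate from the structure of $\Pi$. For each column of $A$, applying $D$ is $O(n)$, applying $H$ via the fast (Walsh-)Hadamard transform is $O(n \log n)$, and sampling down to $m$ coordinates via $P$ is a further $O(n)$. Summing over the $d$ columns of $A$ gives $O(nd \log n)$ total, as claimed.

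The main obstacle is the first step: verifying that the pointwise bound has the clean form $\exp(-\Omega(m\eps^2))$ with $m = \Theta(k/\eps^2)$, so that after invoking Lemma~\ref{lem:subspaceBound} we do not need to pay an extra $\log k$ or $\log n$ factor in $m$. This is exactly where the rank restriction $k \le n^{1/2-\gamma}$ is consumed — it bounds the $\ell_\infty$ norm of $HDv$ for unit $v \in C(A)$ tightly enough that the standard Bernstein-type analysis of the subsampling step $P$ gives the desired Gaussian-tail rate in $m\eps^2$. The net and runtime steps are then routine.
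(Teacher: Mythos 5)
The paper does not actually prove this statement: it is recorded as a \emph{Fact}, relying on the known analysis of the fast JL transform of Ailon--Chazelle/Ailon--Liberty, and your two-step architecture (pointwise JL guarantee with failure probability $\exp(-\Omega(m\eps^2))$, then Lemma~\ref{lem:subspaceBound} over the $k$-dimensional space $C(A)$) together with the $O(nd\log n)$ time accounting is exactly the right skeleton. Steps 2 and 3 of your proposal are fine.

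The genuine gap is in step 1, and it is not a detail you can wave through. You describe $\Pi = PHD$ with $P$ ``a subsampling matrix selecting $m$ coordinates'' --- but that is the subsampled randomized Hadamard transform, and for it the claimed pointwise rate $\exp(-c m\eps^2)$ at $m=\Theta(k/\eps^2)$ is false in the regime you need. The flattening step only gives $\|HDv\|_\infty^2 = O(\log(n/\delta_1)/n)$ with probability $1-\delta_1$; since Lemma~\ref{lem:subspaceBound} requires per-vector failure probability $\exp(-\Omega(k))$, you must take $\delta_1 = \exp(-\Omega(k))$, hence $\|HDv\|_\infty^2 = O((k+\log n)/n)$, and the Bernstein/Hoeffding analysis of uniform coordinate subsampling then yields failure probability $\exp(-\Omega(\eps^2 m/(k+\log n)))$, forcing $m = \Omega(k(k+\log n)\eps^{-2})$ rather than $\Theta(k/\eps^2)$. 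This is precisely why the paper quotes the separate, weaker SRHT dimension $O(\eps^{-2}(\log d)(\sqrt{d}+\sqrt{\log n})^2)$ elsewhere. Moreover, the rank hypothesis $k\le n^{1/2-\gamma}$ cannot rescue this: the $\ell_\infty$ bound on $HDv$ for a fixed unit vector depends only on $n$ and the target failure probability, not on the rank of $A$. To actually obtain $m=\Theta(k/\eps^2)$ one needs the genuine FJLT, in which $P$ is a sparse matrix with Gaussian (or sign) entries whose rows each aggregate many coordinates of $HDv$; the restriction that the target dimension be at most $n^{1/2-\gamma}$ is what allows $P$ to be simultaneously sparse enough to apply within the $O(n\log n)$ budget and dense enough per row to concentrate at the rate $\exp(-\Omega(m\eps^2))$. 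Establishing that rate is the main technical content of the FJLT papers, so your proof as written either asserts a false statement (for subsampling) or silently imports the entire result it is meant to prove.
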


A similar fact holds for subsampled Hadamard transforms.

\begin{fact}\label{fact:pyth}{\bf (Pythagorean Theorem)}
If $C$ and $D$ matrices with the same number of rows and columns,
then $C^\top D=0$ implies $\normF{C+D}^2 = \normF{C}^2 + \normF{D}^2$.
\end{fact}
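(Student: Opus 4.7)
The plan is to reduce the Frobenius-norm identity to a trace computation and then use the hypothesis $C^\top D = 0$ to eliminate the cross terms. First I would use the standard identity $\normF{M}^2 = \tr(M^\top M)$ to write
\[
\normF{C+D}^2 = \tr\bigl((C+D)^\top (C+D)\bigr),
\]
and then expand the product, which yields four trace terms:
\[
\tr(C^\top C) + \tr(C^\top D) + \tr(D^\top C) + \tr(D^\top D).
\]

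The hypothesis $C^\top D = 0$ immediately forces $\tr(C^\top D) = 0$. For the remaining cross term I would use the fact that $\tr(M) = \tr(M^\top)$ for any square matrix, so $\tr(D^\top C) = \tr\bigl((C^\top D)^\top\bigr) = \tr(0) = 0$. The two surviving trace terms are exactly $\normF{C}^2$ and $\normF{D}^2$, giving the stated equality.

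There is essentially no obstacle: the whole argument is a direct manipulation of the trace definition of the Frobenius norm, and the hypothesis kills the cross terms on sight. The only mild subtlety to flag is that $C^\top D = 0$ as a matrix (rather than merely $\tr(C^\top D) = 0$) is what makes the second cross term vanish automatically via transposition, but this is precisely what is assumed.
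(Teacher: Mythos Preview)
Your proof is correct; the paper states this as a Fact without proof, and your trace-expansion argument is the standard way to verify it.
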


\begin{fact}\label{fact:normal}{\bf (Normal Equations)}
Given $n\times d$ matrix $C$, and $n\times d'$ matrix
$D$ consider the problem
\[
\min_{X\in \mathbb{R}^{d\times d'}} \normF{CX-D}^2.
\]
The solution to this problem is $X^* = C^- D$, where $C^-$ is the Moore-Penrose
inverse of $C$.  Moreover, $C^\top (CX^*-D)=0$, and so
if $c$ is any vector in the column space of $C$,
then $c^\top (CX^*-D)=0$. Using Fact~\ref{fact:pyth}, for any $X$,
\[
\normF{CX-D}^2 = \normF{C(X-X^*)}^2 + \normF{CX^*-D}^2.
\]
\end{fact}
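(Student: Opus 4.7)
The plan is to establish the central orthogonality relation $C^\top(CX^* - D) = 0$ first, since the Pythagorean identity, the column-space claim, and optimality of $X^*$ all follow from it essentially mechanically. I would work column-by-column in $D$, since the Frobenius norm decouples as $\normF{CX-D}^2 = \sum_j \norm{CX_{*,j} - D_{*,j}}^2$ and the matrix product $C^\top(CX^*-D)$ has $j$-th column $C^\top(CX^*_{*,j} - D_{*,j})$, so the matrix statement reduces to the classical vector regression case applied to each column independently.

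For the orthogonality step, I would use the defining property $CC^-C = C$ of the Moore-Penrose pseudoinverse, together with the fact that $P_C \equiv CC^-$ is the orthogonal projector onto the column space of $C$ (it is symmetric and idempotent, with range $\mathrm{col}(C)$). Then $CX^* - D = (P_C - I)D$, and since $(I - P_C)D$ has columns in $\mathrm{col}(C)^\perp$, we get $C^\top(CX^* - D) = 0$. The column-space claim is immediate: for $c = Cz \in \mathrm{col}(C)$, $c^\top(CX^* - D) = z^\top C^\top(CX^* - D) = 0$.

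For the Pythagorean identity, I would decompose $CX - D = C(X - X^*) + (CX^* - D)$; the columns of $C(X-X^*)$ lie in $\mathrm{col}(C)$ while the columns of $CX^* - D$ are orthogonal to $\mathrm{col}(C)$, so their pairwise inner products vanish and Fact~\ref{fact:pyth} applies (column-by-column, then summed). Optimality of $X^*$ then follows because $\normF{C(X-X^*)}^2 \ge 0$, so $\normF{CX-D}^2 \ge \normF{CX^*-D}^2$ with equality when $X - X^*$ lies in the null space of $C$. There is really no hard step here; the only subtlety worth flagging is that in the rank-deficient case $X^*$ is not unique, but $C^- D$ picks out the minimum-Frobenius-norm solution, and uniqueness is not needed for any of the four assertions in the statement.
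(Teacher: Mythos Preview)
Your argument is correct. The paper does not actually prove this statement: it is recorded as a standard Fact (alongside the Pythagorean identity, Fact~\ref{fact:pyth}) and simply used as needed, so there is no paper proof to compare against. Your proof supplies exactly the details one would expect: the key computation $C^\top(CX^*-D)=C^\top(CC^- - I)D=0$ via the Moore--Penrose properties (in particular that $CC^-$ is the orthogonal projector onto $\mathrm{col}(C)$), from which the column-space orthogonality, the Pythagorean decomposition, and optimality all follow. One small remark: Fact~\ref{fact:pyth} as stated in the paper already applies to matrices under the hypothesis $C^\top D=0$, so once you have $(X-X^*)^\top C^\top(CX^*-D)=0$ you can invoke it directly without the column-by-column detour.
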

\fi 

\subsection{Generalized Regression: Conditions}

The main theorem in this subsection is the following. It could be regarded as a
generalization of Lemma 1 of \cite{dmms11}.
\begin{theorem}\label{thm:genReg}
Suppose $A$ and $B$ are matrices with $n$ rows,
and $A$ has rank at most $r$.
Suppose
$\ZZ $ is a $t \times n$ matrix,
and the event occurs that $\ZZ $ satisfies Lemma~\ref{lem:tail} with error
parameter $\sqrt{\epsilon/r}$,
and also that
$\ZZ $ is a subspace embedding for $A$ with 
error parameter $\epsilon_0 \le 1/\sqrt{2}$.
Then
if $\tilde{Y}$ is the solution to
\begin{equation}\label{eqn:approxls}
\min_{Y} \normF{\ZZ  (A Y - B)}^2,
\end{equation}
and $Y^*$ is the solution to
\begin{equation}\label{eqn:optls}
\min_{Y} \normF{A Y - B}^2,
\end{equation}
then
$$\normF{A\tilde{Y} - B} \le (1+\epsilon) \normF{A Y^* - B}.$$
\end{theorem}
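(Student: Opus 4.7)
The plan is to follow the standard ``sketch-and-solve'' template of Sarl\'os and Drineas--Mahoney--Muthukrishnan--Sarl\'os, adapted to the generalized/multiple-response setting. The first step is to exploit the normal equations for the original problem (Fact~\ref{fact:normal}), which give $A^\top(AY^* - B) = 0$; equivalently, letting $U \in \R^{n \times r}$ be an orthonormal basis for $C(A)$, we have $U^\top(B - AY^*) = 0$, so $B - AY^*$ is orthogonal to every column of $A$. Writing $A\tilde Y - B = A(\tilde Y - Y^*) + (AY^* - B)$ and applying the Pythagorean theorem (Fact~\ref{fact:pyth}), one obtains
\[
\normF{A\tilde Y - B}^2 = \normF{A(\tilde Y - Y^*)}^2 + \normF{AY^* - B}^2,
\]
so the theorem reduces to showing $\normF{A(\tilde Y - Y^*)}^2 \le O(\epsilon)\normF{AY^* - B}^2$.

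Next I would use the normal equations for the sketched problem, namely $A^\top \ZZ^\top \ZZ(A\tilde Y - B) = 0$, rewritten as $A^\top \ZZ^\top \ZZ A(\tilde Y - Y^*) = A^\top \ZZ^\top \ZZ(B - AY^*)$. Introducing $U$ as above and writing $A(\tilde Y - Y^*) = UZ$ for some $Z$, this becomes
\[
(U^\top \ZZ^\top \ZZ U)\, Z = U^\top \ZZ^\top \ZZ (B - AY^*).
\]
The subspace embedding hypothesis says that the singular values of $\ZZ U$ lie in $[\sqrt{1-\epsilon_0},\sqrt{1+\epsilon_0}]$, so $U^\top \ZZ^\top \ZZ U$ is invertible and its inverse has spectral norm at most $1/(1-\epsilon_0)$. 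Since $U$ has orthonormal columns, $\normF{A(\tilde Y - Y^*)} = \normF{Z}$, and we obtain
\[
\normF{A(\tilde Y - Y^*)} \le \frac{1}{1-\epsilon_0}\normF{U^\top \ZZ^\top \ZZ (B - AY^*)}.
\]

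For the right-hand side, I would apply the approximate matrix multiplication hypothesis (Lemma~\ref{lem:tail}) to the pair $(U, B - AY^*)$ with error parameter $\sqrt{\epsilon/r}$. This yields
\[
\normF{U^\top \ZZ^\top \ZZ (B - AY^*) - U^\top(B - AY^*)}^2 \le \frac{\epsilon}{r}\normF{U}^2\normF{B - AY^*}^2.
\]
The crucial cancellation is that $U^\top(B - AY^*) = 0$ by the first paragraph, and $\normF{U}^2 = r$, so this simplifies cleanly to $\normF{U^\top \ZZ^\top \ZZ(B - AY^*)}^2 \le \epsilon \normF{B - AY^*}^2$. Combining with the previous bound and the Pythagorean decomposition gives
\[
\normF{A\tilde Y - B}^2 \le \left(1 + \frac{\epsilon}{(1-\epsilon_0)^2}\right)\normF{AY^* - B}^2,
\]
and taking square roots and rescaling $\epsilon$ by a constant (absorbing the $(1-\epsilon_0)^{-2} \le (1-1/\sqrt 2)^{-2}$ factor) gives the desired $(1+\epsilon)$ bound.

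The main obstacle, and the reason the matrix-multiplication error parameter must be $\sqrt{\epsilon/r}$ rather than $\sqrt{\epsilon}$, is the $\normF{U}^2 = r$ factor appearing in the approximate matrix multiplication bound; this is the only place where the rank of $A$ enters, and it is essential that we apply the multiplication bound to $U$ (which has orthonormal columns and hence a controlled Frobenius norm) rather than to $A$ itself, whose Frobenius norm could be arbitrary. Everything else is bookkeeping with the Pythagorean theorem and the subspace-embedding-controlled smallest singular value of $\ZZ U$.
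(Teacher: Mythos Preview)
Your proposal is correct and follows essentially the same route as the paper: Pythagorean decomposition via the normal equations for the original problem, then the sketched normal equations $U^\top \ZZ^\top\ZZ U\,Z = U^\top\ZZZ^\top\ZZ(B-AY^*)$, with the right-hand side bounded by approximate matrix multiplication (using the key cancellation $U^\top(B-AY^*)=0$ and $\normF{U}^2=r$) and the left-hand side controlled by the subspace embedding. The only cosmetic difference is that the paper (via its Lemma~\ref{lem:betabound}) bounds $\normF{Z}$ using the triangle inequality $\normF{Z}\le \normF{U^\top\ZZ^\top\ZZ U\,Z} + \norm{U^\top\ZZ^\top\ZZ U - I}_2\normF{Z}$ rather than inverting $U^\top\ZZ^\top\ZZ U$ directly, but the two are equivalent.
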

\ifSTOC\else
Before proving Theorem \ref{thm:genReg}, we will need the following lemma.  
\begin{lemma}\label{lem:betabound}
For $\ZZ , A, B, Y^*$ and $\tilde{Y}$ as in Theorem \ref{thm:genReg},
assume that $A$ has orthonormal columns. Then
$$\normF{A(\tilde{Y}-Y^*)} \leq 2\sqrt{\eps}\normF{B-AY^*}.$$
\end{lemma}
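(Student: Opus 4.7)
The plan is to run the standard sketch-and-solve analysis for least squares, adapted to the matrix-valued (multiple-response) setting. Since $A$ has orthonormal columns, the unsketched optimum is $Y^* = A^\top B$, and the residual $e^*\equiv B - AY^*$ satisfies $A^\top e^* = 0$ (Fact~\ref{fact:normal}). The normal equations for the sketched problem, $(SA)^\top(SA\tilde Y - SB) = 0$, rearrange to
\[
A^\top S^\top S A(\tilde Y - Y^*) \;=\; A^\top S^\top S e^*.
\]
Writing $\Delta\equiv \tilde Y - Y^*$, orthonormality of $A$ reduces the goal to bounding $\|\Delta\|_F = \|A\Delta\|_F$.

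Two ingredients then drive the proof. First, I would invoke Lemma~\ref{lem:tail} (approximate matrix multiplication) with error parameter $\sqrt{\eps/r}$, applied to $A^\top$ on the left and $e^*$ on the right; since $A^\top e^* = 0$ and $\|A\|_F^2 = r$, this gives
\[
\|A^\top S^\top S e^*\|_F \;\le\; \sqrt{\eps/r}\cdot\sqrt r\cdot \|e^*\|_F \;=\; \sqrt\eps\,\|e^*\|_F.
\]
Second, the subspace embedding property $\|SAx\|_2 = (1\pm\eps_0)\|Ax\|_2$ with $\eps_0\le 1/\sqrt 2$ lower-bounds the smallest singular value of $SA$ by a constant, so that $\|SAw\|_F^2 \ge c\,\|Aw\|_F^2$ with $c\ge 1/2$ after a suitable accounting of the $(1\pm\eps_0)$ constants.

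To conclude, I use the trace identity
\[
\|SA\Delta\|_F^2 \;=\; \langle \Delta,\, A^\top S^\top S A\Delta\rangle_F \;=\; \langle \Delta,\, A^\top S^\top S e^*\rangle_F,
\]
where the second equality is the rearranged sketched normal equation. Cauchy--Schwarz on the right-hand side gives $\|SA\Delta\|_F^2 \le \|\Delta\|_F\cdot\sqrt\eps\,\|e^*\|_F = \|A\Delta\|_F\cdot\sqrt\eps\,\|e^*\|_F$, and combining with the subspace-embedding lower bound yields $c\,\|A\Delta\|_F \le \sqrt\eps\,\|e^*\|_F$, hence $\|A(\tilde Y - Y^*)\|_F \le 2\sqrt\eps\,\|e^*\|_F$ as desired.

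The main obstacle is bookkeeping rather than depth: one must pick the approximate-multiplication parameter to be $\sqrt{\eps/r}$ (not $\sqrt\eps$) so that the $\|A\|_F=\sqrt r$ factor from Lemma~\ref{lem:tail} is absorbed, and one must exploit the orthogonality $A^\top e^*=0$, which is exactly what turns the additive error of the matrix-multiplication bound into the clean $\sqrt\eps\,\|e^*\|_F$ estimate on $\|A^\top S^\top Se^*\|_F$. The rest is Cauchy--Schwarz and a constant-factor tightening from the subspace embedding parameter $\eps_0$.
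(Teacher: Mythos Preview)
Your proposal is correct and follows essentially the same route as the paper: both derive $A^\top S^\top S A\,\Delta = A^\top S^\top S\,e^*$ from the sketched normal equations, bound the right-hand side by $\sqrt{\eps}\,\|e^*\|_F$ via Lemma~\ref{lem:tail} with parameter $\sqrt{\eps/r}$ (using $A^\top e^*=0$ and $\|A\|_F=\sqrt{r}$), and then finish with the subspace-embedding property. The only cosmetic difference is in the last step: you use the trace identity $\|SA\Delta\|_F^2 = \langle\Delta,\,A^\top S^\top S e^*\rangle_F$ plus Cauchy--Schwarz and a lower bound on $\|SA\Delta\|_F$, whereas the paper writes $\|\Delta\|_F \le \|A^\top S^\top S A\,\Delta\|_F + \|(A^\top S^\top S A - I)\Delta\|_F$ and bounds $\|A^\top S^\top S A - I\|_2$ from the embedding.
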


\begin{proof} The proof is in the appendix.\end{proof}
\fi 

\begin{proofof}{of Theorem \ref{thm:genReg}}
\ifSTOC
Omitted in this version.
\else
Let $A$ have the thin SVD $A=U\Sigma V^\top$.
Since $U$ is a basis for $C(A)$,
there are $X^*$ and $\tilde X$ so that
$S(U\tilde X - B)= S(A\tilde Y-B)$ 
and $UX^* - B = A Y^* - B$, and therefore
$\norm{U \tilde X - B} \le (1+\eps) \norm{U X^* - B}$
implies the theorem: we can assume without loss of generality
that $A$ has orthonormal
columns. With this assumption,
and using the Pythagorean Theorem (Fact~\ref{fact:pyth})
with the normal equations (Fact~\ref{fact:normal}),
and then Lemma~\ref{lem:betabound},
\begin{align*}
\normF{A \tilde{Y} - B}^2
	   & = \normF{A Y^* - B}^2 +  \normF{A(\tilde{Y} - Y^*)}^2
	\\ & \le \normF{A Y^* - B}^2 + 4\eps\normF{AY^* - B}^2
	\\ & \le (1+4\eps)\normF{AY^* - B}^2,
\end{align*}
and taking square roots and adjusting $\eps$ by a constant factor
completes the proof.
\fi 
\end{proofof}

\subsection{Generalized Regression: Algorithm}

Our main algorithm for regression is given in the proof of the following theorem.

\begin{theorem}\label{thm:renRegAlg}
Given $A\in \R^{n\times d}$ of rank $r$, and $B\in\R^{n\times d'}$, the regression
problem $\min_Y \normF{AY-B}$ can be solved up to $\eps$ relative error
with probability at least $2/3$,
in time
\[
O(\nnz(A)\log n + r^2(r\eps^{-1} + rd' + r\log^2 r + d'\eps^{-1}  + \log n)),
\]
and obtaining a coreset of size $O(r(\eps^{-1} + \log r))$.

\end{theorem}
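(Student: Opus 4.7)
The plan is to combine the approximate leverage score computation of Theorem~\ref{thm:icml} with leverage-score sampling (Fact~\ref{fact:lev embed} and Lemma~\ref{lem:tail}) and appeal to the generalized-regression reduction in Theorem~\ref{thm:genReg}. First I would invoke \cite{ckl12} to find $r$ linearly independent columns of $A$ in $O(\nnz(A)\log d + r^3)$ time; this lets me assume $d=r$ without loss of generality, since any solution $\tilde Y'$ of the reduced problem extends to a solution $\tilde Y$ of the original by setting the rows of $\tilde Y$ corresponding to discarded columns to zero. Second, Theorem~\ref{thm:icml} with constant $\eps$ yields constant-factor approximations $\tilde u_i$ to all leverage scores in $O(\nnz(A)\log n + r^3\log^2 r + r^2\log n)$ time. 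I would then use the normalized $\tilde u_i$ as sampling probabilities $p_i \ge u_i/(2r)$, forming a leverage-score sampling matrix $S\in\R^{t\times n}$ with $t = O(r\log r + r\eps^{-1})$ rows.

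Correctness rests on two properties of $S$. By Fact~\ref{fact:lev embed}, the $O(r\log r)$ samples alone make $S$ a subspace embedding for $A$ with error $\eps_0 \le 1/\sqrt{2}$. By Lemma~\ref{lem:tail} applied to the orthonormal basis $U$ for $C(A)$ (leverage-score sampling of $A$ coincides with leverage-score sampling of $U$, which has orthonormal columns), the $O(r\eps^{-1})$ samples give
\[
\norm{U^\top S^\top S B - U^\top B}_F^2 \le \eps\norm{B}_F^2 = (\eps/r)\norm{U}_F^2\norm{B}_F^2,
\]
which is precisely the AMM hypothesis of Theorem~\ref{thm:genReg} with parameter $\sqrt{\eps/r}$, after its WLOG reduction to the orthonormal case. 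Theorem~\ref{thm:genReg} then gives that $\tilde Y = \argmin_Y \normF{S(AY-B)}$ satisfies $\normF{A\tilde Y - B}\le (1+\eps)\min_Y \normF{AY-B}$ with constant probability, and the $t$ sampled, weighted rows of $[A\;B]$ form the claimed coreset of size $O(r(\log r + \eps^{-1}))$.

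For the running time of solving the sampled problem, with $d=r$ the matrix $SA\in\R^{t\times r}$ can be formed in $O(tr)$ time (each row of $A$ has at most $r$ nonzeros) and $SB\in\R^{t\times d'}$ in $O(td')$ time. Computing a thin QR factorization $SA=QR$ takes $O(tr^2)$ time, after which $\tilde Y = R^{-1}(Q^\top SB)$ is obtained in $O(trd' + r^2d')$ time. Summing these with the leverage-score step, and using $\log r\le r$ to absorb $\log r$ factors into the $rd'$ term, yields the claimed bound $O(\nnz(A)\log n + r^2(r\eps^{-1} + rd' + r\log^2 r + d'\eps^{-1} + \log n))$. The main obstacle to fully formalizing this plan is reconciling the AMM hypothesis of Theorem~\ref{thm:genReg}, stated in terms of $A$, with the fact that Lemma~\ref{lem:tail} delivers AMM via leverage-score sampling only when the matrix in question has orthonormal columns; this is resolved by pushing both conditions through the intrinsic basis $U$, which is also the object that survives the proof of Theorem~\ref{thm:genReg}.
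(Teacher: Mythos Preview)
Your proposal is correct and follows essentially the same route as the paper: reduce to $d=r$ via \cite{ckl12}, compute constant-factor leverage scores via Theorem~\ref{thm:icml}, build a leverage-score sampler with $t=O(r\eps^{-1}+r\log r)$ rows, verify the subspace-embedding and AMM hypotheses of Theorem~\ref{thm:genReg} via Fact~\ref{fact:lev embed} and Lemma~\ref{lem:tail}, and solve the sketched problem by QR. The paper's proof is slightly terser (it folds the \cite{ckl12} step into the side-effects of Theorem~\ref{thm:icml}) but the argument and running-time accounting are the same.
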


\begin{proof}
We estimate the leverage scores of $A$ to relative error $1/2$,
using the algorithm of
Theorem~\ref{thm:icml}, which has the side effect of finding
$r$ independent columns of $A$, so that we can assume that $d=r$.

If $U$ is a basis for $C(A)$, then
for any $X$ there is a $Y$ so that $UX=AY$,
and vice versa, so that conditions satisfied by $UX$ are satisfied
by $AY$. That is, we can (and will hereafter)
assume that $A$ has $r$ orthonormal columns, when
considering products $AY$.

We construct a leverage-score sketching matrix $S$ for $A$
with $t=O(r/\eps + r\log r)$, so that Lemma~\ref{lem:tail}
is satisfied for error parameter at most $\sqrt{\eps/r}$. With this $t$,
$S$ will also be an $\eps$-embedding matrix with
$\eps<1/\sqrt{2}$, using Lemma~\ref{fact:lev embed}.
These conditions and Theorem~\ref{thm:genReg} imply that
the solution $\tilde Y$ to $\min_Y \norm{S(AY-B)}$
has
\[
\norm{A\tilde Y - B}\le (1+\eps)\min_Y \norm{AY-B}.
\]

The running time is that for computing the leverage scores, plus
the time needed for finding $\tilde Y$, which
can be done by computing a $QR$ factorization
of $SA$ and then computing $R^{-1}Q^\top SB$,
which requires $r^3(\eps^{-1} + \log r) + r^2(\eps^{-1}+\log r) d' + r^3d'$,
and the cost bound follows.
\end{proof}

\subsection{Affine Embeddings}\label{subsec:genAff}

We also use \emph{affine embeddings}  for which a stronger condition
than Theorem~\ref{thm:genReg} is satisfied.

\begin{theorem}\label{thm:genAff}
Suppose $A$ and $B$ are matrices with $n$ rows,
and $A$ has rank at most $r$.
Suppose
$\ZZ $ is a $t \times n$ matrix,
and the event occurs that $\ZZ $ satisfies Lemma~\ref{lem:tail} with error
parameter $\eps/\sqrt{r}$, and also that
$\ZZ $ is a subspace embedding for $A$ with 
error parameter $\eps$.
Let $X^*$ be the solution of $\min_X\norm{AX-B}$, and $\tilde B \equiv AX^*-B$.
For all $X$ of appropriate shape,
\[
\norm{\ZZ(AX-B)}^2 - \norm{\ZZ\tilde B}^2  = (1\pm 2\eps) \norm{AX-B}^2 - \norm{\tilde B}^2,
\]
for $\eps\le 1/2$.
So $S$ is an affine embedding with~$2\eps$ relative error up to an additive constant.
(That is, a \emph{weak} embedding.)
If also $\norm{S\tilde B}^2 = (1\pm\eps)\norm{\tilde B}^2$, then
\begin{equation}\label{eq:aff embed}
\norm{\ZZ(AX-B)}^2 = (1\pm 3\eps) \norm{AX-B}^2,
\end{equation}
and $\ZZ$ is a $3\eps$-affine embedding.
\end{theorem}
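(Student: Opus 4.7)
The plan is to combine the Pythagorean decomposition of $AX-B$ about the optimum $X^*$ with the two hypotheses on $S$. First I would reduce to the case that $A$ has orthonormal columns: using the thin SVD $A = U\Sigma V^\top$, the substitution $Y = \Sigma V^\top X$ gives $AX = UY$ ranging over all matrices of the appropriate shape, $\tilde B = UY^*-B$ with $Y^* = \Sigma V^\top X^*$, and both the sketched and unsketched objectives are unchanged. Since the AMM hypothesis with error $\eps/\sqrt r$ transfers to $(U,\tilde B)$ (using $\norm{U}_F = \sqrt r$), and the subspace embedding for $A$ is by definition one for its column space $C(U)$, all hypotheses carry over. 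So assume henceforth that $A$ has orthonormal columns.

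With this reduction, the normal equations (Fact~\ref{fact:normal}) give $A^\top\tilde B = 0$, and the Pythagorean theorem (Fact~\ref{fact:pyth}) gives $\norm{AX-B}^2 = \norm{A(X-X^*)}^2 + \norm{\tilde B}^2$. Expanding the sketched quantity analogously,
\[
\norm{S(AX-B)}^2 - \norm{S\tilde B}^2 = \norm{SA(X-X^*)}^2 + 2\langle SA(X-X^*),\, S\tilde B\rangle,
\]
so the task reduces to showing the right-hand side equals $\norm{A(X-X^*)}^2$ up to additive error at most $2\eps\norm{AX-B}^2 = 2\eps(\norm{A(X-X^*)}^2 + \norm{\tilde B}^2)$.

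The first summand is handled by the subspace embedding hypothesis applied column by column to $A(X-X^*)$, each column lying in $C(A)$, giving $\norm{SA(X-X^*)}^2 = (1\pm\eps)\norm{A(X-X^*)}^2$ (after absorbing the $(1\pm\eps)^2$ from squaring into $\eps$, using $\eps\le 1/2$). For the cross term, AMM applied to $(A,\tilde B)$ with $A^\top\tilde B = 0$ and $\norm{A}_F = \sqrt r$ yields
\[
\norm{A^\top S^\top S\tilde B}_F \le (\eps/\sqrt r)\,\norm{A}_F\norm{\tilde B}_F = \eps\norm{\tilde B}_F,
\]
and a Cauchy--Schwarz bound on the trace inner product gives
\[
|\langle SA(X-X^*),\, S\tilde B\rangle| = |\langle X-X^*,\, A^\top S^\top S\tilde B\rangle| \le \norm{X-X^*}_F\cdot\eps\norm{\tilde B}_F = \eps\norm{A(X-X^*)}\cdot\norm{\tilde B},
\]
using orthonormality to write $\norm{X-X^*}_F = \norm{A(X-X^*)}_F$. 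The inequality $2ab\le a^2+b^2$ then bounds twice the cross term by $\eps(\norm{A(X-X^*)}^2+\norm{\tilde B}^2)$. Summing the two error sources yields total deviation at most $2\eps\norm{A(X-X^*)}^2 + \eps\norm{\tilde B}^2 \le 2\eps\norm{AX-B}^2$, which is exactly the claimed weak embedding.

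The refinement \eqref{eq:aff embed} follows by adding the hypothesis $\norm{S\tilde B}^2 = (1\pm\eps)\norm{\tilde B}^2$ to both sides of the weak bound and invoking $\norm{\tilde B}^2 \le \norm{AX-B}^2$ (since $X^*$ is the minimizer); the extra $\eps\norm{\tilde B}^2$ error is absorbed into $\eps\norm{AX-B}^2$, upgrading $(1\pm 2\eps)$ to $(1\pm 3\eps)$. The main obstacle is the cross-term analysis: the orthogonality $A^\top\tilde B = 0$ is essential so that AMM produces a bound depending only on $\norm{\tilde B}$ rather than on the uncontrolled quantity $\norm{S^\top S\tilde B}$, and the subsequent AM-GM step is what massages the bilinear bound into the symmetric additive form $\eps(\norm{A(X-X^*)}^2 + \norm{\tilde B}^2)$ that matches the target error $2\eps\norm{AX-B}^2$.
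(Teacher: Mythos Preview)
Your proof is correct and follows essentially the same approach as the paper: reduce to orthonormal $A$, expand $\norm{S(AX-B)}^2-\norm{S\tilde B}^2$ as a quadratic plus cross term, bound the quadratic by the subspace embedding property and the cross term via approximate matrix multiplication plus Cauchy--Schwarz, then combine with an AM--GM step and the Pythagorean identity. The only cosmetic difference is in how the final error terms are grouped (you get $2\eps\norm{A(X-X^*)}^2+\eps\norm{\tilde B}^2$ directly, whereas the paper rewrites the combined error as $\eps(\norm{A(X-X^*)}+\norm{\tilde B})^2$ before applying $(a+b)^2\le 2(a^2+b^2)$), and both routes yield the same $2\eps\norm{AX-B}^2$ bound.
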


Note that even when only the weaker first statement holds, the sketch still can be used
for optimization, since adding a constant to the objective function of an
optimization does not change the solution. Note also that 

\STOComitedproof{
\begin{proof}
If $U$ is a basis for $C(A)$, then
for any $X$ there is a $Y$ so that $UX=AY$,
and vice versa, so that conditions satisfied by $UX$ are satisfied
by $AY$. That is, we can (and will hereafter)
assume that $A$ has $r$ orthonormal columns.

Using the fact that $\norm{W}^2 = \tr W^\top W$ for any $W$,
the embedding property, the fact that $\norm{A}\le\sqrt{r}$,
and the matrix product approximation condition
of Lemma~\ref{lem:tail},
\begin{align*}
& \norm{S(AX-B)}^2 -  \norm{S\tilde B}^2
	\\ & = \norm{SA(X-X^*)+ S(AX^* - B)}^2  - \norm{S\tilde B}^2
	\\ &  = \norm{SA(X-X^*)}^2 
		- 2\tr [(X-X^*)^\top A^\top S^\top S\tilde B]
	\\ & = \norm{A(X-X^*)}^2 
	\\ & \qquad
		\pm \eps(\norm{A(X-X^*)}^2 
			+ 2 \norm{X-X^*} \norm{\tilde B}).
\end{align*}
The normal equations (Fact~\ref{fact:normal}) imply
that $\norm{AX-B}^2 = \norm{A(X-X^*)}^2 + \norm{\tilde B}^2$,
and using the observation that $(a+b)^2\le 2(a^2+b^2)$ for $a,b\in\R$,
\begin{align*}
 \norm{S(AX-B)}^2  - & \norm{S\tilde B}^2 - (\norm{AX-B}^2 - \norm{\tilde B}^2)
 	   \\ & = \pm \eps(\norm{A(X-X^*)}^2
			+ 2 \norm{X-X^*} \norm{\tilde B})
	\\ &  \le \pm \eps(\norm{A(X-X^*)} + \norm{\tilde B} )^2
	\\ & \le \pm 2\eps(\norm{A(X-X^*)}^2 + \norm{\tilde B}^2)
	\\ & = \pm 2\eps \norm{AX-B}^2,
\end{align*}
and the first statement of the theorem follows. When $\norm{S\tilde B}^2 = (1\pm\eps)\norm{\tilde B}^2$,
the second statement follows, since then
\[
\norm{S(AX-B)}^2 = (1\pm 2\eps)\norm{AX-B}^2 \pm\eps \norm{\tilde B}^2
	=  (1\pm 3\eps)\norm{AX-B}^2,
\]
using $\norm{\tilde B}\le \norm{AX-B}$ for all $X$.
\end{proof}
} 

To apply this theorem to sparse embeddings, we will need the following lemma.

\begin{lemma}\label{lem:len fixed sparse}
Let $A$ be an $n \times d$ matrix. Let $S \in \mathbb{R}^{t \times n}$ be a 
randomly chosen sparse embedding matrix for an appropriate $t = \Omega(\eps^{-2})$. 
Then with probability at least $9/10$,
$$\|SA\|_F^2 = (1 \pm \eps)\|A\|_F^2.$$
\end{lemma}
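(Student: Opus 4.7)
The plan is to carry out a direct second-moment calculation. I would write $\|SA\|_F^2 = \sum_{j=1}^d \|SA_{*,j}\|_2^2$, expand each column-wise square, separate the diagonal contribution (which exactly reproduces $\|A\|_F^2$) from the off-diagonal contribution (the error term $E$), and then bound $\Var[E]$ via Chebyshev.

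Concretely, first I would unfold $\|SA_{*,j}\|_2^2 = \sum_{i\in[t]} \bigl(\sum_{k:\, h(k)=i} D_{kk} A_{kj}\bigr)^2$ and collect the $k=k'$ terms over all columns to recover $\sum_{j,k} A_{kj}^2 = \|A\|_F^2$. The remaining off-diagonal part telescopes to
\[
E \;\equiv\; \|SA\|_F^2 - \|A\|_F^2 \;=\; \sum_{k\ne k'} \Ibr{h(k)=h(k')}\, D_{kk} D_{k'k'}\, \langle A_{k,*}, A_{k',*}\rangle,
\]
which has $\E[E]=0$ since $\E[D_{kk} D_{k'k'}]=0$ for $k\ne k'$.

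The main computation is $\E[E^2]$. Expanding the square, the $D$ factors force the pairing $\{k,k'\} = \{l,l'\}$ for a term to survive (since $D_{ii}^2=1$ and the $D_{ii}$ are independent mean-zero signs), contributing a factor of $2$ from the two orderings; the collision indicator then satisfies $\E[\Ibr{h(k)=h(k')}] = 1/t$. This yields
\[
\E[E^2] \;=\; \frac{2}{t}\sum_{k\ne k'} \langle A_{k,*}, A_{k',*}\rangle^2 \;\le\; \frac{2}{t}\|A^\top A\|_F^2 \;\le\; \frac{2}{t}\|A\|_F^4,
\]
using $\|A^\top A\|_F \le \|A\|_2 \|A\|_F \le \|A\|_F^2$. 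Chebyshev's inequality then gives $\Pr[|E|\ge \eps\|A\|_F^2] \le 2/(t\eps^2)$, so any $t \ge 20/\eps^2$ suffices.

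The main obstacle is nothing deep: just keeping the combinatorics of the Wick-style pairing of $D$ factors straight and verifying that the cross-column sums collapse into $\langle A_{k,*}, A_{k',*}\rangle^2$. I note that one cannot simply invoke Lemma~\ref{lem:tail} with $B=A$ and read off the trace, because $|\tr M| \le \sqrt{d}\,\|M\|_F$ would cost an extra $\sqrt{d}$ factor in the error and force $t = \Omega(d/\eps^2)$; this direct moment calculation is what allows $t$ to be independent of $d$.
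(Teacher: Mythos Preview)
Your proposal is correct and follows essentially the same strategy as the paper: compute $\E[\|SA\|_F^2]=\|A\|_F^2$, bound the variance by $O(\|A\|_F^4/t)$, and apply Chebyshev. The only difference is organizational: the paper expands column by column, separately bounding $\E[\|SA_i\|_2^4]$ and the cross terms $\E[\|SA_i\|_2^2\|SA_j\|_2^2]$, whereas you first collapse the column sum into row inner products $\langle A_{k,*},A_{k',*}\rangle$ and then square. Your arrangement is cleaner and yields a slightly smaller constant ($2/t$ versus the paper's $6/t$), but the mathematical content is the same. (One cosmetic note: the matrix whose Frobenius norm equals $\sum_{k,k'}\langle A_{k,*},A_{k',*}\rangle^2$ is $AA^\top$, not $A^\top A$, though of course $\|AA^\top\|_F=\|A^\top A\|_F$, so your bound is unaffected.)
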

\begin{proof}
Please see the appendix.
\end{proof}

\begin{lemma}\label{lem:len fixed srht}
Let $A$ be an $n \times d$ matrix. Let $S \in \mathbb{R}^{t \times n}$ be an
SRHT matrix for an appropriate $t = \Omega(\eps^{-2} (\log n)^2)$. 
Then with probability at least $9/10$,
$$\|SA\|_F^2 = (1 \pm \eps)\|A\|_F^2.$$
\end{lemma}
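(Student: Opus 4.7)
The plan is to reduce to the standard single-vector norm preservation property of the SRHT and apply a union bound over the $d$ columns of $A$. Writing $\|SA\|_F^2 = \sum_{j=1}^d \|SA_j\|_2^2$ where $A_j$ denotes the $j$-th column of $A$, and $\|A\|_F^2 = \sum_{j=1}^d \|A_j\|_2^2$, it suffices to preserve each column norm to within a $(1\pm\eps)$ factor.

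First I would invoke the standard JL-style guarantee for the SRHT: for any fixed $x\in\R^n$, an SRHT matrix $S\in\R^{t\times n}$ satisfies $\|Sx\|_2^2 = (1\pm\eps)\|x\|_2^2$ with failure probability at most $\delta$, provided $t = \Omega(\eps^{-2}\log(n/\delta)\log(1/\delta))$. (This follows from the two-step analysis: the randomized Hadamard $HD$ flattens $x$ so that $\|HDx\|_\infty \le O(\sqrt{\log(n/\delta)/n})\|x\|_2$ with probability $1-\delta/2$, after which uniform subsampling concentrates the norm via a Bernstein or Hoeffding bound.)

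Next, I would apply this bound to each column $A_j$ with failure probability set to $\delta_j = 1/(10d)$, so that on this event $\|SA_j\|_2^2 = (1\pm\eps)\|A_j\|_2^2$. A union bound over the $d \le n$ columns gives that simultaneously for all $j$, $\|SA_j\|_2^2 = (1\pm\eps)\|A_j\|_2^2$, with failure probability at most $1/10$. Summing over $j$ then yields
\[
\|SA\|_F^2 = \sum_j \|SA_j\|_2^2 = (1\pm\eps)\sum_j \|A_j\|_2^2 = (1\pm\eps)\|A\|_F^2,
\]
as desired. With this choice of $\delta$, the required sketching dimension becomes $t = \Omega(\eps^{-2}\log(nd)\log(d)) = \Omega(\eps^{-2}(\log n)^2)$, matching the lemma statement.

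The argument is essentially routine once the correct single-vector SRHT bound is in hand; the only mild subtlety is choosing the right form of that bound (namely one with a $\log(n/\delta)\log(1/\delta)$ dependence) so that, combined with the union bound over columns, the final $t$ scales as $\eps^{-2}(\log n)^2$ rather than some worse power of $\log n$.
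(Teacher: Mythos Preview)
Your argument is correct, but it follows a different route than the paper's, and there is one point where you silently use an assumption not in the lemma statement.

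The paper does not work column-by-column. Instead it argues directly on $\norm{SA}_F^2$ as a scalar random variable: first it invokes a row-flattening result for SRHT (Lemma~11 of \cite{BG}) to conclude that, after applying $HD$, every row of $HDA$ has squared norm at most $(\alpha/n)\norm{A}_F^2$ with $\alpha=O(\log n)$. Conditioned on this, $\norm{SA}_F^2$ is (up to scaling) a sum of $t$ i.i.d.\ sampled row-norms, each bounded by $(\alpha/t)\norm{A}_F^2$, with mean $\norm{A}_F^2$; Hoeffding's inequality then gives the $(1\pm\eps)$ bound for $t=\Omega(\eps^{-2}\alpha^2)=\Omega(\eps^{-2}\log^2 n)$.

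Your approach---apply the single-vector SRHT guarantee to each column with failure probability $1/(10d)$ and union bound---is clean and valid, but the resulting requirement is $t=\Omega(\eps^{-2}\log(nd)\log d)$. Your claimed equality $\Omega(\eps^{-2}\log(nd)\log d)=\Omega(\eps^{-2}\log^2 n)$ holds only when $d\le\poly(n)$. The lemma statement places no such restriction on $d$, so as written your proof does not establish the $d$-independent bound that is claimed. In every application in the paper (in particular, the matrix playing the role of $A$ is $\tilde B\in\R^{n\times d'}$ with $d'\le n$), this assumption does hold, so your argument suffices for the paper's purposes; but you should make the dependence on $d$ explicit rather than absorb it silently.

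The tradeoff: your approach is more elementary and reuses the standard vector-level SRHT bound as a black box, at the cost of a $\log d$ factor. The paper's approach treats the matrix globally via row-norm flattening and yields a bound independent of the number of columns, which is the sharper statement.
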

\begin{proof}
Please see the appendix.
\end{proof}

\begin{theorem}\label{thm:genAff res}
Let $A$ and $B$ be matrices with $n$ rows, and $A$ has rank
at most $r$.
The following conditions hold with fixed nonzero
probability.
If $S$ is a $t\times n$ sampled randomized Hadamard transform (SRHT) matrix, there is
$t = O(\eps^{-2} [\log^2 n + (\log r)(\sqrt{r}+\sqrt{\log n})^2])$
such that $S$ is an $\eps$-affine embedding for $A$ and $B$. If $S$ is a $t\times n$
sparse embedding, there is $t=O(\eps^{-2}r^2\log^6(r/\eps))$
such that $S$ is an $\eps$-affine embedding.
If $S$ is a $t\times n$ leverage-score sampling matrix, there is $t=O(\eps^{-2} r\log r)$
such that $S$ is a weak
$\eps$-affine embedding. If the row norms of $\tilde B$ are available,
a modified leverage-score sampler is an $\eps$-embedding. (Here $\tilde B$
is as in Theorem~\ref{thm:genAff}.)
\end{theorem}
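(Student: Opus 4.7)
The plan is to reduce each case to the sufficient conditions given by Theorem~\ref{thm:genAff}: for $S$ to be an $\eps$-affine embedding it suffices that (i) $S$ is a subspace embedding for $A$ with error $\eps$, (ii) $S$ satisfies the approximate matrix product bound of Lemma~\ref{lem:tail} applied to $A$ and $\tilde B$ with error parameter $\eps/\sqrt{r}$, and (iii) $\norm{S\tilde B}^2 = (1\pm\eps)\norm{\tilde B}^2$. Only (i) and (ii) are needed for the weak version. Throughout, by passing to a basis of $C(A)$, we may assume that $A$ has $r$ orthonormal columns, so that $\normF{A}^2 = r$ and the matrix product guarantee of Lemma~\ref{lem:tail} with error $\eps/\sqrt{r}$ yields $\normF{A^\top S^\top S\tilde B - A^\top \tilde B}\le \eps\normF{\tilde B}$. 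In each case I will choose $t$ large enough that all three events occur simultaneously with constant probability (union bound), and the stated bound arises as the maximum of the three row requirements.

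For the SRHT case, the subspace embedding requirement (i) is supplied by the standard SRHT subspace embedding bound (Lemma~6 of \cite{BG}), contributing a term $O(\eps^{-2}(\log r)(\sqrt{r}+\sqrt{\log n})^2)$. The matrix product requirement (ii) follows from the SRHT clause of Lemma~\ref{lem:tail} at error $\eps/\sqrt{r}$, which is absorbed in the subspace embedding term since it produces a bound of the same order up to polylogarithmic factors. Finally, the Frobenius norm preservation for the single fixed matrix $\tilde B$ is supplied by Lemma~\ref{lem:len fixed srht}, which contributes $O(\eps^{-2}\log^2 n)$. Taking the maximum yields the claimed $t$.

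For the sparse embedding case, condition (i) is supplied by Theorem~\ref{thm:partition main} (set the tuning parameters to give failure probability a constant and error $\eps$), at cost $O(\eps^{-2}r^2\log^6(r/\eps))$ rows, which dominates the $O(r/\eps^2)$ rows needed for (ii) via Lemma~\ref{lem:tail} and the $O(\eps^{-2})$ rows needed for (iii) via Lemma~\ref{lem:len fixed sparse}. For the leverage-score sampling case, Fact~\ref{fact:lev embed} supplies the subspace embedding (i) at $t=O(r\eps^{-2}\log r)$, and applying Lemma~\ref{lem:tail} to the pair $(A,\tilde B)$ with $A$ orthonormal gives (ii) at the same order, yielding only the \emph{weak} affine embedding. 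To upgrade to a full $\eps$-affine embedding when the row norms of $\tilde B$ are available, modify the sampling distribution so that row $i$ is chosen with probability proportional to $\max\{u_i/2r,\ \norm{\tilde B_{i,*}}^2/\normF{\tilde B}^2\}$; this leaves (i) and (ii) intact (the probabilities still dominate the leverage scores) while adding the standard Frobenius estimator for $\tilde B$, which gives $\norm{S\tilde B}^2 = (1\pm\eps)\normF{\tilde B}^2$ for the same order of $t$.

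The main obstacle is the bookkeeping for SRHT: the matrix product bound of Lemma~\ref{lem:tail} with error parameter $\eps/\sqrt{r}$ naively appears to contribute $O(r\log(nd)/\eps^2)$ rows, and one has to exploit the fact that $A$ can be taken to have orthonormal columns (so that $A^\top \tilde B=0$ by the normal equations) to see that the cross-term bound sits inside the stated subspace-embedding term rather than dominating. Once this observation is in hand, the rest is a routine union bound over the three events (i)--(iii) together with a constant rescaling of $\eps$.
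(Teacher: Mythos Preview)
Your approach is essentially the paper's: verify the three hypotheses of Theorem~\ref{thm:genAff} (subspace embedding, approximate matrix product with error $\eps/\sqrt{r}$, and preservation of $\normF{\tilde B}$) for each sketch type, citing the same supporting lemmas (Theorem~\ref{thm:partition main}, Fact~\ref{fact:lev embed}, Lemmas~\ref{lem:tail}, \ref{lem:len fixed sparse}, \ref{lem:len fixed srht}), and take the maximum of the resulting row requirements. Your treatment of the sparse-embedding and leverage-score cases matches the paper; for the modified leverage-score sampler you use a max of the two distributions where the paper uses the average $(p_i+q_i)/2$, but either device works for the same reason.

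There is one genuine gap. Your final paragraph claims that, for SRHT, the identity $A^\top\tilde B=0$ (from the normal equations) is what allows the matrix-product row requirement to be absorbed into the subspace-embedding term. This is not a valid argument: the row count needed for Lemma~\ref{lem:tail} to deliver error parameter $\eps/\sqrt{r}$ is determined by that error parameter alone, not by whether the exact product $A^\top\tilde B$ happens to vanish. The normal-equations identity is already used inside the proof of Theorem~\ref{thm:genAff} (to make the cross term equal to $\tr[(X-X^*)^\top A^\top S^\top S\tilde B]$) and gives no further savings in $t$. The paper does not invoke any such cancellation; it simply records the SRHT matrix-product contribution as $O(r\eps^{-2}\log^2 n)$ alongside the subspace-embedding term $O(\eps^{-2}(\log r)(\sqrt{r}+\sqrt{\log n})^2)$ and the Frobenius-preservation term $O(\eps^{-2}\log^2 n)$, and takes the maximum. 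You should do the same rather than asserting an absorption that Lemma~\ref{lem:tail} does not support.
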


Note that none of the dimensions $t$ depend on the number of columns of $B$.

\begin{proof}
To apply Theorem~\ref{thm:genAff}, we need each given sketching matrix
to satisfy conditions on multiplicative error, subspace embedding, and
preservation of $\norm{\tilde B}$. As in that theorem, we can
assume without loss of generality that $A$ has $r$ orthonormal columns.

Regarding the multiplicative error bound of $\epsilon/\sqrt{r}$,
Lemma~\ref{lem:tail} tells us that SRHT achieves this bound for $t=O(\log(n)^2 \eps^{-2}r)$,
and the other two need $t=O(\eps^{-2}r)$.

Regarding subspace embedding, as noted in the introduction,
an SRHT matrix achieves this
for $t = O(\eps^{-2} (\log r)(\sqrt{r}+\sqrt{\log n})^2)$.
A sparse embedding requires $t=O(\eps^{-2}r^2\log^6(r/\eps))$,
as in Theorem~\ref{thm:partition main}, and leverage score samplers
need $t=O(\eps^{-2} r\log r)$, as mentioned in Fact~\ref{fact:lev embed}.

Regarding preservation of the norm of $\tilde B$, Lemma~\ref{lem:len fixed srht}
gives the claim for SRHT matrices, and %
Lemma~\ref{lem:len fixed sparse} gives the claim for
sparse embeddings, where the ``$A$'' of those lemmas
is $\tilde B$.

Thus the conditions are satisfied for Theorem~\ref{thm:genAff}
to yield the the claims for SRHT and for sparse embeddings, and for the weak condition
for leverage score samplers.

We give only a terse version of the argument for the last statement of
the theorem.
When the squared row norms $b_i \equiv \norm{\tilde B_{i,*}}^2$ of $\tilde B$ are available,
a sampler which picks row $i$ with probability $p_i = \min\{1, t b_i/\norm{\tilde B}^2\}$,
and scales that row with $1/\sqrt{tp_i}$, will yield a matrix whose
Frobenius norm will be $(1\pm 1/\sqrt{t})\norm{\tilde B}$ with high probability.
If the leverage score sampler picks rows with probability $q_i$,
create a new sampler that picks rows with probability $p'_i \equiv (p_i + q_i)/2$,
and scales by $1/\sqrt{t p'_i}$. The resulting sampler will satisfy
the norm preserving property for $\tilde B$, and also
satisfy the same properties as the leverage score sampler, up to a constant
factor. The resulting sampler is thus an $O(\eps)$-affine embedding.
\end{proof}

\subsection{Affine Embeddings and Constrained Regression}\label{subsec:constrained}

From the condition \eqref{eq:aff embed}, an affine embedding
can be used to reduce the work needed to achieve small
error in regression problems, even when there are constraints
on $X$. We consider the constraint $X\ge 0$, that the entries
of $X$ are nonnegative. The problem
$\min_{X\ge 0} \norm{AX - B}^2$,
for $B\in\R^{n\times n}$ and $A\in \R^{n\times d}$,
arises among other places as a subroutine in finding
a nonnegative approximate factorization of $B$.

For an affine embedding $S$,
\[
\min_{X\ge 0} \norm{S(AX - B)}^2
	= (1\pm \eps)\min_{X\ge 0} \norm{AX - B}^2,
\]
yielding an immediate reduction yielding a solution with relative
error $\eps$: just solve the sketched version of the problem.

From Theorem~\ref{thm:genAff res}, suitable
sketching matrices for constrained regression
include a sparse embedding, 
an SRHT matrix, or a leverage
score sampler. (The latter may not need the condition
of preserving the norm of $\tilde B$ if a high-accuracy
solver is used for the sketched solution, or if otherwise
the additive constant is not an obstacle for that solver.)

Since it's immediate that affine embeddings can be composed
to obtain an affine embedding (with a constant factor loss),
the most efficient approach might be use a sketch
that first applies a sparse embedding, and then
applies an SRHT matrix, resulting
in a sketched problem with $O(\eps^{-2}r\log(r/\eps)^2)$ rows,
and where computing the
sketch takes $O(\nnz(A) + \nnz(B)) + \tO(\eps^{-2}r^2 (d+d'))$ time,
for $B\in \R^{n\times d'}$. When $r$ is unknown, the upper
bound $r\le d$ can of course be used.

For low-rank approximation, discussed in \S\ref{sec:low rank},
we require $X$ to satisfy a rank condition; the same techniques apply.

\subsection{Iterative Methods for Regression}\label{subsec:iterative}

\ifSTOC
We use the matrix $R$ obtained for leverage score approximation
in \S\ref{sec:leverage} as a pre-conditioner for standard
iterative, conjugate-gradient methods; such iterative methods
have a running time dependent on the condition number
of the input matrix. Our pre-conditioner reduces that method to 
constant, and the resulting algorithm implies the following
theorem, whose proof is given in the full paper.
\else

A classical approach to finding $\min_{X} \norm{AX-B}$
is to solve the normal equations (Fact~\ref{fact:normal})
$A^\top A X = A^\top B$ via Gaussian elimination;
for $A\in\R^{n\times r}$ and $B\in\R^{n\times d'}$,
this requires $O(\min\{r\nnz(B), d'\nnz(A)\}$ to
form $A^\top B$, $O(r\nnz(A)\}$ to form $A^\top A$,
and $O(r^3 + r^2d')$ to solve the resulting linear systems.
(Another method is to factor $A=QW$,
where $Q$ has orthonormal columns and $W$ is
upper triangulation; this typically trades a slowdown for a higher-quality solution.)

Another approach to regression is to apply an iterative method
(from the general class of Krylov, CG-like methods) to a 
pre-conditioned version of the problem.  In such methods,
an estimate $x^{(m)}$ of a solution is maintained,
for iterations $m=0,1\ldots$,
using data obtained from previous iterations.
The convergence of these methods depends
on the \emph{condition number}
$\kappa(A^\top A)= \frac{\sup_{x,\norm{x}=1} \norm{Ax}^2}{\inf_{x,\norm{x}=1} \norm{Ax}^2}$
from the input matrix.
A classical result (\cite{Luen} via \cite{MSM} or Theorem 10.2.6,\cite{GvL}),
is that
\begin{equation}\label{eq:CG accuracy}
\frac{\norm{A(x^{(m)} - x^*)}^2}{\norm{A(x^{(0)} - x^*)}^2}
	\le 2\left(\frac{\sqrt{\kappa(A^\top A)} - 1}{\sqrt{\kappa(A^\top A)} + 1}\right)^m.
\end{equation}
Thus the running time of CG-like methods, such as {\tt CGNR} \cite{GvL},
depends on the (unknown)
condition number. The running time per iteration is the time needed
to compute matrix vector products $Ax$ and $A^\top v$,
plus $O(n+d)$ for vector arithmetic, or $O(\nnz(A))$.

Pre-conditioning reduces the number of iterations needed for a given accuracy:
suppose
for non-singular matrix $R$, the condition number $\kappa(R^\top A^\top AR)$
is small. Then a CG-like method applied to $AR$ would converge quickly,
and moreover for iterate $y^{(m)}$ that has error $\alpha^{(m)} \equiv \norm{ARy^{(m)} - b}$
small, the corresponding $x\gets Ry^{(m)}$ would have $\norm{Ax-b} = \alpha^{(m)}$.
The running time per iteration would have an
additional $O(d^2)$ for computing products involving $R$.

Consider the matrix $R$ obtained for leverage score approximation
in \S\ref{sec:leverage}, where a subspace embedding matrix $\Pi_1$
is applied to $A$, and $R$ is computed so that $\Pi_1 A R$  has
orthonormal columns. Since $\Pi_1$ is a subspace
embedding matrix to constant accuracy $\eps_0$,
for all unit $x\in\R^d$,
$\norm{ARx}^2=(1\pm\eps_0)\norm{\Pi_1 ARx}^2= (1\pm \eps_0)^2$.
It follows that the condition number
\[
\kappa(R^\top A^\top A R)
	\le \frac{(1+\eps_0)^2}{(1-\eps_0)^2}.
\]
That is, $AR$ is very well-conditioned. Plugging this
bound into \eqref{eq:CG accuracy}, after $m$ iterations
$\norm{AR(x^{(m)} - x^*)}^2$ is at most $2\eps_0^m$
times its starting value.

Thus starting with a solution $x^{(0)}$ with 
relative error at most 1, and applying $1+\log(1/\eps)$ iterations
of a CG-like method with $\eps_0 = 1/e$, the relative error is reduced to $\eps$
and the work is $O((\nnz(A)+ r^2)\log(1/\eps))$
(where we assume $d$ has been reduced to $r$, as in the leverage computation),
plus the work to find $R$. We have
\fi

\begin{theorem}\label{thm:it reg}
The $\ell_2$-regression problem can be solved up to a $(1+\eps)$-factor with probability at least 
$2/3$ in
\[
O(\nnz(A)\log (n/\eps) + r^3 \log^2 r + r^2\log(1/\eps))
\]
time.
\end{theorem}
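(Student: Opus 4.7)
The plan is to combine a sparse subspace embedding with a classical preconditioned iterative (CG-like) method. First I would reduce to the case that $A\in\R^{n\times r}$ has full column rank $r$, using the algorithm of \cite{ckl12} invoked in the proof of Theorem~\ref{thm:icml}, in time $O(\nnz(A)\log n + r^3)$. Next, following the construction in Theorem~\ref{thm:icml}, I would form $\Pi_1 \equiv F\cdot S$, where $S$ is a sparse embedding (via Theorem~\ref{thm:partition main}) with $t=O(r^2\polylog(r))$ rows applied in $O(\nnz(A))$ time, and $F$ is an $O(r\log^2 r)\times t$ fast JL matrix. This makes $\Pi_1$ a subspace embedding for $A$ with some small constant distortion $\eps_0\le 1/e$. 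I would then compute a QR factorization $\Pi_1 A = QW$ and set $R\equiv W^{-1}$, so that $\Pi_1 AR$ has orthonormal columns; this costs $O(r^3\log^2 r)$.

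The key observation is that since $\Pi_1$ has constant distortion on $C(A)$, for every unit $x\in\R^r$,
\[
\norm{ARx}^2 = (1\pm\eps_0)^2\norm{\Pi_1 ARx}^2 = (1\pm\eps_0)^2,
\]
and hence $\kappa(R^\top A^\top AR)\le ((1+\eps_0)/(1-\eps_0))^2 = O(1)$. Thus $AR$ is extremely well-conditioned, and I would run a CG-like method (e.g.\ CGNR) on the equivalent problem $\min_y\norm{ARy-b}$. Starting from an iterate $y^{(0)}$ with constant relative error---obtained, for instance, by solving the small sketched problem $\min_y\norm{\Pi_1(ARy-b)}$ in $\tO(r^3)$ time---the classical bound~\eqref{eq:CG accuracy} gives $\norm{AR(y^{(m)}-y^*)}^2\le 2\eps_0^m\,\norm{AR(y^{(0)}-y^*)}^2$, so $m=O(\log(1/\eps))$ iterations suffice for $(1+\eps)$ accuracy. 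Returning $x\equiv Ry^{(m)}$ then solves the original problem since the map $y\mapsto Ry$ is a bijection of $\R^r$.

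For the cost accounting, each CG iteration performs one product with $R$ ($O(r^2)$ time on a length-$r$ vector) and products with $A$ and $A^\top$ ($O(\nnz(A))$ each), so the iterative phase costs $O((\nnz(A)+r^2)\log(1/\eps))$. Summing with the rank-finding step $O(\nnz(A)\log n+r^3)$, the sparse-embedding-plus-QR preprocessing $O(\nnz(A)+r^3\log^2 r)$, and the sketch-solve for $y^{(0)}$, yields the claimed bound $O(\nnz(A)\log(n/\eps) + r^3\log^2 r + r^2\log(1/\eps))$.

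The main obstacle is really bookkeeping: one must verify that the constant-distortion subspace embedding really does give a bounded condition number (immediate from the embedding property, but one must be careful that the failure events for $\Pi_1$ being an embedding, for the sketched-solve starting point, and for the rank-identification all combine to give overall success probability at least $2/3$). There are no new analytic ingredients beyond the subspace embedding of Theorem~\ref{thm:partition main}, the QR-based preconditioner of Theorem~\ref{thm:icml}, and the textbook CGNR convergence rate~\eqref{eq:CG accuracy}.
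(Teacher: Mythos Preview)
Your proposal is correct and follows essentially the same approach as the paper: reduce to $r$ independent columns via \cite{ckl12}, build the preconditioner $R$ from the QR factorization of $\Pi_1 A$ (with $\Pi_1=F\cdot S$ exactly as in the leverage-score machinery of Theorem~\ref{thm:icml}), observe that the constant-distortion embedding forces $\kappa(R^\top A^\top AR)=O(1)$, and then run $O(\log(1/\eps))$ iterations of a CG-like method at cost $O(\nnz(A)+r^2)$ each. Your suggestion to obtain the starting iterate by solving the sketched problem is a natural instantiation of the paper's ``starting with a solution $x^{(0)}$ with relative error at most~1,'' and your cost accounting matches the paper's.
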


\ifSTOC\else

Note that only the matrix $R$ from the leverage score computation is needed, not
the leverage scores, so the $\nnz(A)$ term in the running time 
need not have a $\log(n)$
factor; however, since reducing $A$ to $r$ columns requires that factor,
the resulting running time without that factor
is $O(\nnz(A)\log(1/\eps) + d^3\log^2 d + d^2\log(1/\eps))$,
depends on $d$.

The matrix $AR$ is so well-conditioned that a simple iterative improvement scheme
has the same running time up to a constant factor. Again start with a solution $x^{(0)}$ with 
relative error at most 1, and for $m\ge 0$,
let $x^{(m+1)} \gets x^{(m)} + R^\top A^\top (b - ARx^{(m)})$.
Then using the normal equations,
\begin{align*}
AR(x^{(m+1)} - x^*)
	  & = AR(x^{(m)} + R^\top A^\top (b - AR x^{(m)}) - x^*)
	\\ & = (AR - ARR^\top A^\top AR ) (x^{(m)} - x^*)
	\\ & = U(\Sigma - \Sigma^3)V^\top (x^{(m)} - x^*),
\end{align*}
where $AR=U \Sigma V^\top$ is the SVD of $AR$.

For all unit $x\in\R^d$,
$\norm{ARx}^2 = (1\pm \eps_0)^2$, and so
we have that all singular values $\sigma_i$ of $AR$ are $1\pm\eps_0$,
and the diagonal entries of $\Sigma - \Sigma^3$
are all at most $\sigma_i(1- (1 - \epsilon_0)^2) \le \sigma_i 3\epsilon_0$ for 
$\epsilon_0\le 1$. Hence 
\begin{align*}
\norm{AR(x^{(m+1)} - x^*)}
	 & \le 3\eps_0 \norm{AR(x^{(m)} - x^*)},
\end{align*}
and by choosing $\epsilon_0= 1/2$, say, $O(\log(1/\eps))$ iterations suffice for this
scheme also to attain $\eps$ relative error.

This scheme can be readily extended to generalized (multiple-response)
regression, using the iteration
$X^{(m+1)} \gets X^{(m)} + R^\top A^\top (B - ARX^{(m)})$.
The initialization cost then includes that of computing
$A^\top B$, which is $O(\min\{ r\nnz(B), d'\nnz(A)\})$,
where again $B\in \R^{n\times d'}$. The product $A^\top A$,
used implicitly per iteration, could be computed in $O(r\nnz(A))$,
and then applied per iteration in time $d'r^2$,
or applied each iteration in time $d'\nnz(A)$. 

That is, this method is never much worse than CG-like methods, but
comparable in running time when $d'< r$; when $d'>r$,
it is a little worse in asymptotic running time than solving the normal equations.
\fi 

\section{Low Rank Approximation}\label{sec:low rank}

This section gives algorithms for low-rank approximation, understood
using generalized regression analysis, as in earlier work such as \cite{s06,cw09}.
Let $\Delta_k \equiv \normF{A-[A]_k}$, where $[A]_k$ denotes
the best rank-$k$ approximation to $A$.
We seek low-rank matrices whose distance to $A$ is within $1+\eps$ of $\Delta_k$.

While Theorem \ref{thm:main} and Theorem \ref{thm:jlmain} 
are stated in terms of specific constant probability of success,
they can be re-stated and proven so that the failure probabilities are arbitrarily small, but still
constant. In the following we'll assume that adjustments have been done, so that the sum
of a fixed number of such failure probabilities is at most $1/5$.

We will apply embedding matrices composed of products of such matrices, so we need to check
that this operation preserves the properties we need.

\begin{fact}\label{fact:prod compose}
If $\ZZ \in \R^{t\times n}$ approximates matrix products and is a subspace
embedding with error $\epsilon$
and failure probability $\delta_\ZZ $,
and $\Pi\in \R^{\hat t\times t}$ approximates matrix products with error $\epsilon$
and failure probability $\delta_\Pi$,
then $\Pi \ZZ $ approximates matrix products with error $O(\epsilon)$
and failure probability at most $\delta_\ZZ  + \delta_\Pi$.
\end{fact}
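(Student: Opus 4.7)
The plan is to insert the cross term $A^\top \ZZ^\top \ZZ B$ and handle each resulting piece with one of the two hypotheses. By the triangle inequality,
\[
\normF{A^\top \ZZ^\top \Pi^\top \Pi \ZZ B - A^\top B} \le \normF{(\ZZ A)^\top \Pi^\top \Pi (\ZZ B) - (\ZZ A)^\top (\ZZ B)} + \normF{A^\top \ZZ^\top \ZZ B - A^\top B}.
\]
The second summand is bounded by $\eps\normF{A}\normF{B}$ by the approximate-matrix-product property of $\ZZ$ applied to the fixed pair $A,B$ (event of probability at least $1-\delta_\ZZ$). The first summand is handled by applying the approximate-matrix-product property of $\Pi$ to the fixed pair $\ZZ A$, $\ZZ B$, giving at most $\eps\normF{\ZZ A}\normF{\ZZ B}$ with failure probability $\delta_\Pi$. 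Union-bounding the two events gives total failure probability at most $\delta_\ZZ + \delta_\Pi$, as claimed.

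To close the argument I would use the subspace-embedding portion of the hypothesis on $\ZZ$ to bound $\normF{\ZZ A}$ and $\normF{\ZZ B}$ by $O(1)$ multiples of $\normF{A}$ and $\normF{B}$. Since each column of $A$ lies in the subspace for which $\ZZ$ is an $\eps$-embedding, $\norm{\ZZ A_{*,j}}^2 = (1\pm\eps)^2\norm{A_{*,j}}^2$ for each column $j$, and summing over $j$ gives $\normF{\ZZ A} \le (1+\eps)\normF{A}$; the same bound holds for $B$. Substituting back, the first summand is at most $\eps(1+\eps)^2\normF{A}\normF{B}$, which combined with the bound on the second summand totals $O(\eps)\normF{A}\normF{B}$ for bounded $\eps$, as required.

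The main obstacle is merely ensuring we can control $\normF{\ZZ A}$ and $\normF{\ZZ B}$; here the subspace-embedding half of the hypothesis is essential. Without it, the approximate-matrix-product property alone only bounds $\normF{A^\top \ZZ^\top \ZZ A - A^\top A}$, and converting this into a bound on $\bigl|\normF{\ZZ A}^2 - \normF{A}^2\bigr|$ via a trace-versus-Frobenius inequality introduces a $\sqrt{\rank A}$ factor that would ruin the $O(\eps)$ conclusion.
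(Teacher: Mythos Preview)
Your approach is essentially the same as the paper's: insert the cross term $A^\top \ZZ^\top \ZZ B$, bound each piece by one application of approximate matrix multiplication, and use the subspace-embedding hypothesis to control $\normF{\ZZ A}$ (the paper phrases this last step as ``$\norm{\ZZ Ax}=(1\pm\eps)\norm{Ax}$ for basis vectors $x$ implies $\norm{\ZZ A}=(1\pm\eps)\norm{A}$'').

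One point to flag: your line ``the same bound holds for $B$'' assumes $\ZZ$ is also a subspace embedding for $C(B)$, which the stated hypothesis does not actually guarantee (and the paper's terse proof only explicitly handles $A$). In the paper's applications the sketches in question (sparse embeddings, SRHT) separately satisfy $\normF{\ZZ B}=(1\pm\eps)\normF{B}$ for arbitrary fixed $B$ via Lemmas~\ref{lem:len fixed sparse} and~\ref{lem:len fixed srht}, so the conclusion goes through; but strictly speaking this Frobenius-norm preservation is an additional property beyond ``subspace embedding for $A$,'' and your closing paragraph correctly identifies that the approximate-matrix-product hypothesis alone would not suffice.
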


\begin{proof}
This follows from two applications of Lemma~\ref{lem:tail}, together with
the observation that $\norm{\ZZ Ax} = (1\pm\epsilon)\norm{Ax}$ for basis vectors
$x$ implies that $\norm{\ZZ A} = (1\pm\epsilon)\norm{A}$.
\end{proof}

\begin{fact}\label{fact:embed compose}
If $\ZZ \in \R^{t\times n}$ is a subspace embedding with error $\epsilon$
and failure probability $\delta_\ZZ $,
and $\Pi\in \R^{\hat t\times t}$ is a subspace embedding with error $\epsilon$
and failure probability $\delta_\Pi$,
then $\Pi \ZZ $ is a subspace embedding with error $O(\epsilon)$
and failure probability at most $\delta_\ZZ  + \delta_\Pi$.
\end{fact}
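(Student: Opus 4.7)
The plan is to condition on the success of $S$ and then use the obliviousness of $\Pi$ on the (now fixed) image $SA$. Concretely, let $\cE_S$ denote the event that $S$ is a subspace embedding for $A$ with error $\epsilon$; by hypothesis $\Pr[\cE_S]\ge 1-\delta_S$. Conditioned on $\cE_S$, the matrix $SA$ is some fixed matrix of rank at most $r$ (equal to the rank of $A$), and its column space $C(SA)\subseteq\R^t$ is a fixed $r$-dimensional subspace that depends only on the randomness of $S$.

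Since $\Pi$ is drawn independently of $S$ from a distribution over oblivious subspace embeddings with error $\epsilon$ and failure probability $\delta_\Pi$, we can apply that guarantee to the fixed subspace $C(SA)$: with probability at least $1-\delta_\Pi$ (over the randomness of $\Pi$, conditioned on $S$), for every $y\in C(SA)$ we have $\norm{\Pi y} = (1\pm\epsilon)\norm{y}$. Call this event $\cE_\Pi$. By a union bound, both $\cE_S$ and $\cE_\Pi$ hold with probability at least $1-\delta_S-\delta_\Pi$.

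Assuming both events, for any $x\in\R^d$ the vector $y\equiv SAx$ lies in $C(SA)$, so
\[
\norm{\Pi SAx} = (1\pm\epsilon)\norm{SAx} = (1\pm\epsilon)^2\norm{Ax} = (1\pm O(\epsilon))\norm{Ax},
\]
using $\cE_\Pi$ for the first equality and $\cE_S$ for the second. This is exactly the subspace embedding property for $\Pi S$ with error $O(\epsilon)$, and it holds with failure probability at most $\delta_S+\delta_\Pi$. The only point worth emphasizing is the independence of $S$ and $\Pi$, which lets us treat $C(SA)$ as a fixed subspace when invoking the oblivious guarantee for $\Pi$; there is no obstacle beyond this bookkeeping.
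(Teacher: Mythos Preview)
Your proof is correct and is exactly the standard argument: condition on the success of $S$, treat $C(SA)$ as a fixed subspace, apply the oblivious guarantee of $\Pi$ to that subspace, and union bound. The paper itself does not supply a proof for this fact (it is stated as a bare Fact immediately after Fact~\ref{fact:prod compose}), so there is nothing to compare against beyond noting that your argument is the intended one.
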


The following lemma implies a regression algorithm that is
linear in $\nnz(A)$, but has a worse dependence in
its additive term.

\begin{lemma}\label{lem:genReg sparse}
Let $A\in \R^{n\times d}$ of rank $r$, $B\in \R^{n\times d'}$, 
and $c\equiv d+d'$.
For $\hat R\in \R^{t\times n}$ a sparse embedding matrix,
$\Pi\in \R^{t'\times t}$ a sampled randomized Hadamard matrix,
there is $t=O(r^2\log^6(r/\epsilon) + r\eps^{-1})$ and
$t'=O(r\eps^{-1}\log(r/\eps))$ such that
for $R\equiv \Pi \hat R$,
$\tilde X \equiv \argmin_X\norm{R(AX-B)}$ has
$\norm{A\tilde X - B}\le (1+\eps)\min_X\norm{AX-B}$.
The operator $R$ can be applied in $O(\nnz(A)+\nnz(B) + tc\log t)$ time.
\end{lemma}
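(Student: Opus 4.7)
The plan is to verify that $R=\Pi\hat R$ satisfies the hypotheses of Theorem~\ref{thm:genReg}, namely (i) $R$ is a subspace embedding for $A$ with error at most $1/\sqrt 2$, and (ii) $R$ satisfies the approximate matrix product bound of Lemma~\ref{lem:tail} with error parameter $\sqrt{\eps/r}$. Once both are in place, Theorem~\ref{thm:genReg} immediately yields $\norm{A\tilde X-B}\le(1+\eps)\min_X\norm{AX-B}$.

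First I would choose $t$ so that $\hat R$ alone satisfies both properties. Theorem~\ref{thm:partition main} makes $\hat R$ a constant-error subspace embedding for $C(A)$ once $t=\Omega(r^{2}\log^{6}(r/\eps))$, while Lemma~\ref{lem:tail} (applied to a sparse embedding) gives the matrix-product guarantee at error $\sqrt{\eps/r}$ once $t=\Omega(r/\eps)$; taking the maximum gives the stated $t=O(r^{2}\log^{6}(r/\eps)+r\eps^{-1})$. Next I choose $t'$ so that $\Pi$ is, with constant failure probability, both a constant-error subspace embedding for the rank-$r$ column space $C(\hat R A)\subset\R^{t}$ and an $\sqrt{\eps/r}$-error matrix-product sketch on matrices with $t$ rows. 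The standard SRHT subspace-embedding bound needs $t'=O(r\log r+\log^{2}t)$, and the SRHT version of Lemma~\ref{lem:tail} gives the product guarantee with $t'=O((r/\eps)\log t)$; since $t$ is $\poly(r/\eps)$ these combine to $t'=O(r\eps^{-1}\log(r/\eps))$. Then Facts~\ref{fact:prod compose} and~\ref{fact:embed compose} let me compose: $R=\Pi\hat R$ inherits the subspace-embedding property with constant error and the product-approximation property with error $O(\sqrt{\eps/r})$, so by rescaling $\eps$ by a constant the hypotheses of Theorem~\ref{thm:genReg} hold with failure probability bounded by the sum of the (constant) failure probabilities of $\hat R$ and $\Pi$.

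For the running time, computing $\hat RA$ and $\hat RB$ costs $O(\nnz(A)+\nnz(B))$ because $\hat R$ has one nonzero per column. The matrices $\hat RA\in\R^{t\times d}$ and $\hat RB\in\R^{t\times d'}$ together have $tc$ entries; applying the SRHT $\Pi$ to each column via the fast Walsh--Hadamard transform, followed by the subsampling to $t'$ rows, costs $O(tc\log t)$, which dominates the sum and gives the claimed $O(\nnz(A)+\nnz(B)+tc\log t)$ bound to form $RA$ and $RB$.

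The main obstacle I expect is bookkeeping the log factors cleanly: the SRHT matrix-product bound of Lemma~\ref{lem:tail} carries a $\log(nd)$ factor, and here ``$n$'' for that inner application is $t$ (not the original $n$) and ``$d$'' is $c=d+d'$; one must check that, after the sparse-embedding reduction, this log factor collapses to $\log(r/\eps)$ (or is absorbed into the target error by a constant rescaling). Assuming $c$ is not super-polynomial in $r/\eps$ this is straightforward; otherwise one can shave the dependence on $c$ by first applying a further sparse embedding on the $c$-side before invoking the SRHT product bound, preserving the stated $t'$.
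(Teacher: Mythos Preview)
The paper does not give an explicit proof of this lemma; it is stated immediately after Facts~\ref{fact:prod compose} and~\ref{fact:embed compose}, which are evidently provided precisely so that this lemma follows by composing the sparse embedding with the SRHT and then invoking Theorem~\ref{thm:genReg}. Your proposal carries out exactly this intended argument, with the correct parameter choices: constant-error subspace embedding for $\hat R$ via Theorem~\ref{thm:partition main} plus $\sqrt{\eps/r}$ matrix-product error via Lemma~\ref{lem:tail} determine $t$, and the analogous SRHT bounds on the $t$-row intermediate matrices determine $t'$.

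One small remark on your reading of Fact~\ref{fact:prod compose}: as literally stated it uses a single error parameter $\epsilon$ for both the subspace-embedding and matrix-product hypotheses on $S$, which would force $\hat R$ to be a $\sqrt{\eps/r}$-embedding and blow up $t$. But the one-line proof of that fact shows that the embedding property is used only to control $\normF{SA}$ and $\normF{SB}$ up to a constant factor, so a constant-error embedding (together with Lemma~\ref{lem:len fixed sparse} for $\normF{\hat R B}$) suffices. Your parameter accounting implicitly relies on this, and it is correct. The logarithmic bookkeeping you flag for the SRHT product bound is likewise benign here, since after the sparse embedding the row dimension is $t=\poly(r/\eps)$, so $\log t = O(\log(r/\eps))$ as you note.
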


%

\begin{theorem}\label{thm:SVD}
For $A\in\R^{n\times n}$, there is an algorithm that with failure
probability $1/10$ finds matrices $L,W\in \R^{n\times k}$
with orthonormal columns,
and diagonal $D \in\R^{k\times k}$, so that
$\norm{A-LDW^\top} \le (1+\eps)\Delta_k$.
The algorithm runs in time
\[
O(\nnz(A)) + \tilde O(nk^2\eps^{-4} + k^3\eps^{-5}).
\]
\end{theorem}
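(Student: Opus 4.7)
The plan is to use two-sided sparse-embedding sketches that serve as affine embeddings (Theorem~\ref{thm:genAff res}) to reduce the low-rank approximation problem to one on small matrices solvable by direct SVD, then lift the answer back to an orthogonal decomposition on $\R^n$. The core observation is that $\min_{X:\rank(X)\le k}\|XSA-A\|_F$ is a rank-constrained regression to which the machinery of \S\ref{subsec:constrained} applies.

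First, apply a left sketch $S\in\R^{t_L\times n}$, composed of a sparse embedding (Theorem~\ref{thm:partition main}) and possibly an SRHT, chosen via Theorem~\ref{thm:genAff res} to be an affine embedding for the pair $([A]_k,\ A-[A]_k)$. By the constrained-regression discussion in~\S\ref{subsec:constrained} applied with the rank-$\le k$ constraint, the optimum $X^*$ of $\min_{X:\rank(X)\le k}\|XSA-A\|_F$ satisfies $\|X^*SA-A\|_F\le(1+\eps)\Delta_k$; equivalently, the row space of $SA$ contains a good rank-$k$ approximation to $A$. Compute $SA$ in $O(\nnz(A))$ time.

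To solve this regression without paying $O(\nnz(A)\,t_L)$, apply a right sketch $R\in\R^{t_R\times n}$. Viewing the problem as regression in $X^\top$ with coefficient matrix $(SA)^\top$ of rank $\le t_L$, Theorem~\ref{thm:genAff res} yields an affine embedding $R$ so that the solution of $\min_X\|X(SAR^\top)-AR^\top\|_F$ is still $(1+\eps)$-optimal. Set $G\equiv SAR^\top$, $H\equiv AR^\top$, and compute the SVD $G=U\Sigma V^\top$. The change of variable $Y\equiv XU\Sigma$ turns the sketched problem into $\min_{Y:\rank(Y)\le k}\|YV^\top-H\|_F^2=\min_{Y:\rank(Y)\le k}\|Y-HV\|_F^2+\|H(I-VV^\top)\|_F^2$, so the optimum is $Y^*=[HV]_k$. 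Compute $HV$ and its top-$k$ SVD $Y^*=L'D'M'^\top$. Back-substituting, $X^*SA=L'D'\,(M'^\top\Sigma^{-1}U^\top SA)$; to produce the required $LDW^\top$ form with orthonormal $L,W\in\R^{n\times k}$, compute $N\equiv (M'^\top\Sigma^{-1}U^\top SA)^\top\in\R^{n\times k}$, QR-factorize $N=W_0R_Z$ with $W_0\in\R^{n\times k}$, take the $k\times k$ SVD $D'R_Z^\top=U_1D_1V_1^\top$, and output $L=L'U_1$, $D=D_1$, $W=W_0V_1$.

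The main obstacle will be fitting all sketch applications and linear-algebra steps within $\tO(nk^2\eps^{-4}+k^3\eps^{-5})$. The bottleneck step is computing $HV$, which is $\tO(n\cdot t_R\cdot \rank(G))$; this forces a careful composition of sketches (e.g., sparse embedding followed by SRHT on each side) to achieve $t_L=\tO(k/\eps^2)$ and $t_R=\tO(k^2/\eps^4)$, so that this product fits in $\tO(nk^2/\eps^4)$, while the pure $k$-dependent SVDs and QR contribute the $\tO(k^3/\eps^5)$ term. The correctness analysis must combine the affine-embedding errors from the left and right sketches via Facts~\ref{fact:prod compose} and~\ref{fact:embed compose}, together with the Pythagorean step used in the reduced regression, all under a joint failure-probability budget kept constant by adjusting the sketch parameters.
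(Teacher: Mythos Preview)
Your overall plan—two-sided sketching, reducing to a small rank-constrained least-squares problem, then lifting back to an orthogonal decomposition—is essentially the paper's approach, up to a left/right transpose and deferring orthonormalization to the end. The final lift you describe is correct.

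There is, however, a genuine gap in the parameters that breaks the running-time claim. You take the first sketch $S$ to be an $\eps$-affine embedding for a rank-$k$ pair, which via Theorem~\ref{thm:genAff res} forces $t_L=\tO(k/\eps^2)$. The second sketch $R$ must then be an affine embedding for a matrix of rank $t_L$, so its sparse-embedding stage has dimension $\tO(t_L^2/\eps^2)=\tO(k^2/\eps^6)$; forming $A\hat R^\top$ and then applying the SRHT already costs $\tO(nk^2/\eps^6)$, overshooting the $\tO(nk^2/\eps^4)$ budget. Your own bottleneck arithmetic is also off: with your stated $t_L,t_R$, the product $HV$ costs $\tO(n\cdot t_R\cdot\rank(G))=\tO(n\cdot(k^2/\eps^4)\cdot(k/\eps^2))=\tO(nk^3/\eps^6)$, not $\tO(nk^2/\eps^4)$.

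The paper avoids this by using, for the first sketch, only the \emph{weaker} generalized-regression condition of Theorem~\ref{thm:genReg} via Lemma~\ref{lem:genReg sparse}, not a full affine embedding: a constant-error subspace embedding together with a $\sqrt{\eps/k}$ matrix-product bound suffices, and this lands at dimension $\tO(k/\eps)$—one power of $\eps$ better. With that first dimension, the second (affine-embedding) sketch has sparse stage $\tO((k/\eps)^2/\eps^2)=\tO(k^2/\eps^4)$ and SRHT stage $\tO(k/\eps^3)$, and all products then fit in $\tO(nk^2/\eps^4)$. Relatedly, your justification of step~1 via ``affine embedding for $([A]_k,A-[A]_k)$'' and \S\ref{subsec:constrained} does not actually yield $\min_{\rank X\le k}\|XSA-A\|\le(1+\eps)\Delta_k$; that inequality follows from Theorem~\ref{thm:genReg} applied with $A\mapsto[A]_k$, $B\mapsto A$ (the witness is $[A]_k(S[A]_k)^- SA$), which is exactly the paper's inequality~\eqref{eq:AR good} in transposed form.
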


\begin{proof}
The algorithm is as follows:
\begin{enumerate}
\item Compute $AR^\top$ and an orthonormal basis $U$ for $C(AR^\top)$,
	where $R$ is as in Lemma~\ref{lem:genReg sparse} with $r=k$;
\item Compute $SU$ and $SA$ for $S$ the product of a $v'\times v$
SRHT matrix with a $v\times n$ sparse embedding,
where $v = \Theta(\eps^{-4}k^2\log^6(k/\eps))$
and $v' = \Theta(\eps^{-3}k\log^2(k/\eps))$.
(Instead of this affine embedding construction,
an alternative might use leverage score sampling,
where even the weaker claim of Theorem~\ref{thm:genAff res}
would be enough.)
\item Compute the SVD of $SU = \tilde U\tilde \Sigma \tilde V^\top$;
\item Compute the SVD $\hat U D W^\top$ of $\tilde V\tilde \Sigma^- [\tilde U^\top SA]_k$,
	where again $[Z]_k$ denotes the best rank-$k$ approximation to matrix $Z$;
\item Return $L=U\hat U$, $D$, and $W$.
\end{enumerate}

{\bf Running time.}
Computing $AR^\top$ in the first step takes
$O(\nnz(A) + \tO(nk(k+\eps^{-1}))$ time,
and then $\tilde O(n(k/\eps)^2)$ to compute the $n\times O(\eps^{-1}k\log(k/\eps))$
matrix $U$. 
Computing $SU$ and $SA$ requires
$O(\nnz(A)) + \tO(n k^2\eps^{-4})$ time.
Computing the SVD of
the $\tilde O(k\eps^{-3}) \times \tilde O(k\eps^{-1})$  matrix $SU$
requires $\tilde O(k^3\eps^{-5})$. Computing $\tilde U^\top SA$
requires $\tilde O(nk^2\eps^{-4})$ time.
Computing the SVD of the
$\tilde O(k\eps^{-1})\times n$ matrix of the next step requires
$\tilde O(nk^2\eps^{-2})$ time, as does computing $U\hat U$.

\paragraph{Correctness}
Apply Lemma~\ref{lem:genReg sparse} with $A$ of that lemma
mapping to $A_k^\top$ and $B$ mapping to $A^\top$.
Taking transposes, this implies that with small fixed failure probability,
$\tilde Y \equiv AR^\top (A_kR^\top)^-$
has
\[
\norm{\tilde Y A_k - A} \le (1+\epsilon) \min_Y\norm{ YA_k - A} = (1+\epsilon)\Delta_k,
\]
and so
\begin{align}\label{eq:AR good}
\min_{X, \rank X=k} \norm{AR^\top X - A}
	  & \le \norm{AR^\top (A_kR^\top)^-A_k - A} \nonumber
	  \\ & \le (1+\epsilon)\Delta_k.
\end{align}
Since $U$ is a basis for $C(AR^\top)$,
\[
(1+\eps)\min_{X, \rank X=k} \norm{UX - A}
	\le (1+\eps) \min_{X, \rank X=k} \norm{AR^\top X - A}.
\]
With the given construction of $S$,
Theorem~\ref{thm:genAff res} applies (twice), with $AR^\top$ taking the role
of $A$, and $A$ taking the role of $B$, so that $S$ is
an $\eps$-affine embedding, after adjusting constants.
It follows that
for $\tilde X\equiv\argmin_{X, \rank X=k} \norm{S(UX - A)}$,
\begin{align*}
\norm{U\tilde X - A}
	  & \le (1+\eps)\min_{X, \rank X=k} \norm{UX - A}
	\\ & \le (1+\eps) \min_{X, \rank X=k} \norm{AR^\top X - A}
	\\ & \le (1+\eps)^2\Delta_k,
\end{align*}
using \eqref{eq:AR good}.
From lemma 4.3 of \cite{cw09},
the solution to
\[
\min_{X,\rank X=k}\norm{\tilde U X - SA}
\]
is $\hat X = [\tilde U^\top SA]_k$, where this denotes the best rank-$k$
approximation to $\tilde U^\top SA$.
It follows that $\tilde X = \tilde V\tilde \Sigma^- \hat X$ is a solution to
to $\min_{X, \rank X=k} \norm{S(UX - A)}$.
Moreover, the rank-$k$ matrix $U\tilde X = LDW^\top$
has $\norm{LDW^\top- A}\le (1+\eps)^2\Delta_k$,
and $L$, $D$, and $W$ have the properties promised.
\end{proof}

\section{$\ell_p$-Regression for any $1 \leq p < \infty$} \label{sec: ell_p}
Let $A \in \mathbb{R}^{n \times d}$ and $b \in \mathbb{R}^n$ be a matrix and
vector for the regression problem: $\min_x \norm{Ax-b}_p$. We assume $n > d$. 
Let $r$ be the rank of $A$. 
We show that with probability at least $2/3$, 
we can quickly find an $x'$ for which
$$\norm{Ax'-b}_p \leq (1+\eps)\min_x \norm{Ax-b}_p.$$
Here $p$ is any constant in $[1, \infty)$. 

This theorem is an immediate corollary of Theorem \ref{thm:jlmain} 
and the construction given in 
section 3.2 of \cite{CDMMMW}, which shows how to solve $\ell_p$-regression given a subspace embedding
(for $\ell_2$) as a black box.
\ifSTOC
The details are omitted in this version.
\else
We review the construction of \cite{CDMMMW} below for completeness. 

As in the proof of Theorem \ref{thm:icml}, in $O(\nnz(A) \log d) + O(r^3)$ time we can 
replace the input matrix $A$ with a new matrix with the same column space of $A$ and 
full column rank,
where $r$ is rank of $A$. We therefore assume $A$ has full rank in what follows. 

Let $w = \Theta(r^6 \log n (r + \log n))$ and assume $w \mid n$. 
Split $A$ into $n/w$ matrices $A_1, \ldots, A_{n/w}$,
each $w \times r$, so that $A_i$ is the submatrix of $A$ indexed by the $i$-th block of 
$w$ rows. 

We invoke Theorem \ref{thm:jlmain} with the parameters $n = w$, $r$, $\eps = 1/2$, 
and $\delta = 1/(100n)$, choosing a generalized sparse embedding matrix matrix $S$ with 
$t = O(r \log n(r + \log n))$ rows. Theorem \ref{thm:jlmain} has the guarantee that for each fixed $i$,
$SA_i$ is a subspace embedding with probability at least $1-\delta$. It follows by a union
bound that with probability at least $1- 1/(100w)$, for all $i \in [n/w]$, $SA_i$ 
is a subspace embedding. We condition on this event occurring. 

%
%
%
%
%
Consider the matrix
$F \in\R^{nt/w \times n}$, which 
is a block-diagonal matrix comprising $n/w$ blocks along the diagonal.
Each block is the $t \times w$ matrix $S$ given above. 
\[
F \equiv \left[
\begin{matrix}
S & &\\
& S &&\\
&& \ddots & \\
&&& S \\
\end{matrix}
\right]
\]
We will need the following theorem.
\begin{theorem}[Theorem 5 of \cite{ddhkm09}, restated]
\label{thm:basisOld}
Let $A$ be an $n \times r$ matrix, and let $p \in [1, \infty)$.
Then there exists an $(\alpha, \beta, p)$-well-conditioned basis for
the column space of $A$ such that if $p < 2$, then $\alpha = r^{1/2 + 1/p}$ and $\beta = 1$; 
if $p = 2$, then $\alpha = r^{1/2}$ and $\beta = 1$, and if $p > 2$ then 
$\alpha = r^{1/2 + 1/p}$ and $\beta = r^{1/2 - 1/p}$. An $r \times r$ change of basis matrix
$U$ for which $A \cdot U$ is a well-conditioned basis can be
computed in $O(nr^5 \log n)$ time. 
\end{theorem}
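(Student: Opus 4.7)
The plan is to invoke Löwner–John ellipsoidal rounding on the centrally symmetric convex body
\[
K \equiv \{y \in \mathbb{R}^r : \norm{Ay}_p \le 1\}.
\]
By John's theorem, there is an ellipsoid $E \subseteq K$ with $K \subseteq \sqrt{r}\,E$. Write $E = \{y : y^\top M y \le 1\}$ for positive definite $M$, let $U \equiv M^{-1/2}$, and take the columns of $B \equiv A U$ as the proposed basis. After this change of variables, the body $\{z : \norm{Bz}_p \le 1\}$ is sandwiched between the Euclidean unit ball and its $\sqrt{r}$-dilate, which gives, for every $z \in \mathbb{R}^r$, the two-sided estimate $\norm{z}_2/\sqrt{r} \le \norm{Bz}_p \le \norm{z}_2$.

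To extract the stated $(\alpha,\beta,p)$-parameters, I would translate these $\ell_2$-sandwich bounds into $\ell_p/\ell_q$ bounds using elementary norm comparisons in $\mathbb{R}^r$: for $p \le 2$ one uses $\norm{z}_p \le r^{1/p-1/2}\norm{z}_2$ and $\norm{z}_q \le \norm{z}_2$, while for $p \ge 2$ one uses $\norm{z}_p \le \norm{z}_2$ and $\norm{z}_q \le r^{1/q-1/2}\norm{z}_2$. Combining each appropriate conversion factor with the rounding factor $\sqrt{r}$ yields the claimed $\beta$. For $\alpha = \normF{B}_p \equiv (\sum_{i,j}|B_{ij}|^p)^{1/p}$, I would bound each column via $\norm{B_{*,j}}_p \le (\text{conversion})\cdot \norm{e_j}_2$ using the right-hand inequality above, then sum over the $r$ columns and convert to $p$-powers; the product of the two dimensional factors is precisely $r^{1/2+1/p}$ in the stated regimes.

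For the running-time claim, I would first compute a QR factorization $A = QR$ in $O(nr^2)$, reducing to the problem of rounding the body $\{y : \norm{Qy}_p \le 1\}$, which already admits a trivial $\sqrt{r}$-approximation via $\ell_2$ to $\ell_p$ norm equivalence applied to the orthonormal $Q$. Then an iterative refinement of the ellipsoid (an ellipsoid-method-style cutting plane, or equivalently a rank-one update of $M$ along the direction maximizing $\norm{Qy}_p/\sqrt{y^\top M y}$) improves the rounding geometrically. Each iteration requires $O(r^2)$ evaluations of $\norm{Q\cdot}_p$ (costing $O(nr)$ each) plus $O(r^3)$ for the ellipsoid update, giving $O(nr^4)$ per iteration; $O(\log n)$ iterations suffice to drive the rounding accuracy to the required tolerance, yielding the $O(nr^5\log n)$ total.

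The main obstacle is justifying that the \emph{approximate} John's ellipsoid is enough: the exact rounding is NP-hard in general, so one must argue that a polynomial-time approximation (losing only constant factors in the sandwich ratio) still yields the tight polynomial-in-$r$ bounds for $\alpha$ and $\beta$ in each of the three regimes. This boils down to a careful perturbation analysis of the norm-conversion step, verifying that constant-factor slack in the Löwner–John inequalities is absorbed into constants in $\alpha, \beta$ and does not degrade the exponents of $r$.
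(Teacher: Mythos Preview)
The paper does not prove this statement; it is quoted as Theorem~5 of \cite{ddhkm09} and invoked as a black box in the {\sf Condition} procedure of \S\ref{sec: ell_p}. There is therefore no proof in this paper to compare against. That said, your outline via L\"owner--John ellipsoidal rounding of the symmetric body $\{y:\norm{Ay}_p\le 1\}$ is exactly the construction used in \cite{ddhkm09}, so you have recovered the intended argument.

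One point to tighten: with $U=M^{-1/2}$ your sandwich reads $\norm{z}_2/\sqrt r\le\norm{Bz}_p\le\norm{z}_2$, and combining the lower bound with $\norm{z}_q\le\norm{z}_2$ (valid for $p\le 2$, $q\ge 2$) yields $\beta=\sqrt r$, not the claimed $\beta=1$. The fix is to absorb the rounding factor into the basis by taking $U=\sqrt r\,M^{-1/2}$, so that $\norm{z}_2\le\norm{Bz}_p\le\sqrt r\,\norm{z}_2$; then $\beta=1$ for $p\le 2$ and $\beta=r^{1/q-1/2}=r^{1/2-1/p}$ for $p>2$, while the column-sum bound for $\alpha$ picks up the extra $\sqrt r$ per column and lands on $r^{1/2+1/p}$ as stated. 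Your phrase ``product of the two dimensional factors'' suggests you had this rescaling in mind, but it should be made explicit since without it the $(\alpha,\beta)$ pair you derive is $(r^{1/p},\sqrt r)$ rather than $(r^{1/2+1/p},1)$.
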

\noindent
The specific conditions satisfied by a well-conditioned basis are given (and used)
in the proof of the theorem below. We use the following algorithm {\sf Condition}($A$) 
given a matrix $A \in \mathbb{R}^{n \times r}$:
\begin{enumerate}
\item Compute $FA$;
\item Apply Theorem \ref{thm:basisOld} to $FA$ to obtain an $r \times r$ change of basis
matrix $U$ so that $FAU$ is an $(\alpha, \beta, p)$-well-conditioned basis of the
column space of matrix $FA$;
\item Output $AU/(r\gamma_p)$,
where $\gamma_p \equiv \sqrt{2} t^{1/p-1/2}$ for $p\le 2$,
and $\gamma_p \equiv \sqrt{2} w^{1/2-1/p}$ for $p\ge 2$.
\end{enumerate}
The following lemma is the analogue of that in \cite{CDMMMW} proved for the Fast Johnson Lindenstauss
Transform. However, the proof in \cite{CDMMMW} only used that the Fast Johnson Lindenstrauss Transform is a subspace
embedding. We state it here with our new parameters,
\ifSTOC
omitting the proof in this version.
\else %
and give the analogous proof in the Appendix for completeness.
\fi
\begin{lemma}\label{lem:lpl2}
With probability at least $1-1/(100w)$, the output $A U/(r\gamma_p)$ of {\sf Condition}$(A)$
is guaranteed to be a basis that is $(\alpha, \beta \sqrt{3} r (tw)^{|1/p-1/2|} , p)$-well-conditioned,
that is, an $(\alpha, \beta \cdot \poly(\max(r, \log n)) , p)$-well-conditioned basis.
The time to compute $U$ is $O(\nnz(A) \log n) + \poly(r \eps^{-1})$. 
\end{lemma}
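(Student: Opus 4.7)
The plan is to show that the block-diagonal sketch $F$ acts as an approximate $\ell_p$-subspace embedding for $C(A)$, with distortion bounded by a $\poly(r,\log n)$ factor, and then to transfer the well-conditioning of $FAU$ back to $AU/(r\gamma_p)$.

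First I would union-bound Theorem~\ref{thm:jlmain} over the $n/w$ blocks. With $\delta=1/(100n)$ in the choice of $S$, with probability at least $1-1/(100w)$ every block $S$ is simultaneously an $\ell_2$-subspace embedding of $C(A_i)$ with distortion $1/2$; condition on this event. Second, I would transfer this $\ell_2$ guarantee to $\ell_p$ per block via the standard inequalities $k^{-|1/p-1/2|}\|v\|_2\le\|v\|_p\le k^{|1/p-1/2|}\|v\|_2$ for $v\in\R^k$, applied once with $k=t$ on the sketched side and once with $k=w$ on the original side. This yields, for every $z\in\R^r$ and every block $i$, bounds of the form
\[
c_1\, w^{-\eta}\|A_iUz\|_p \;\le\; \|SA_iUz\|_p \;\le\; c_2\, t^{\eta}\|A_iUz\|_p
\]
(for $p\le 2$; swap the roles of $t$ and $w$ for $p\ge 2$), where $\eta\equiv|1/p-1/2|$ and $c_1,c_2$ are absolute constants. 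Raising to the $p$-th power and summing over blocks, using the block-diagonality of $F$, yields the analogous distortion between $\|FAUz\|_p$ and $\|AUz\|_p$; taking $z=e_j$ and summing over columns gives the same distortion between the entrywise norms $\|FAU\|_p$ and $\|AU\|_p$.

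Third, I would combine this distortion with the fact that $FAU$ is an $(\alpha,\beta,p)$-well-conditioned basis of $C(FA)$, supplied by Theorem~\ref{thm:basisOld}. The scaling $r\gamma_p$ is chosen so as to absorb the lower distortion factor in $\|AU\|_p$, yielding $\|AU/(r\gamma_p)\|_p\le\alpha$ and preserving the first parameter. For the second parameter, combining the upper distortion $\|FAUz\|_p\le O(1)(tw)^{\eta}\|AUz\|_p$ with $\|z\|_q\le\beta\|FAUz\|_p$ and the scaling by $r\gamma_p$ yields $\|z\|_q\le \beta\sqrt{3}\,r\,(tw)^{\eta}\|AUz/(r\gamma_p)\|_p$, matching the claim. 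The main technical care is unifying the two cases $p<2$ and $p\ge 2$, which is the purpose of the piecewise definition of $\gamma_p$.

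Finally, the running time decomposes as follows. Computing $FA$ takes $O(\nnz(A)\log n)$ time, since by Lemma~\ref{lem:time} each block product $SA_i$ (with failure probability $1/(100n)$ and constant $\eps$) costs $O(\nnz(A_i)\log n)$. The sketched matrix $FA$ has only $nt/w=O(n/\poly(r))$ rows, so invoking Theorem~\ref{thm:basisOld} on it contributes only $(nt/w)\cdot r^5\log n=\poly(r\eps^{-1})$ beyond that, as $w$ was chosen precisely large enough to drive this cost into the low-order term. The main obstacle in the argument is the careful tracking of dimension exponents and constants in the second step, where the worst-case $\ell_p$-to-$\ell_2$ conversion is loose; once that bookkeeping is pinned down, the well-conditioning property transfers by direct substitution into the definition.
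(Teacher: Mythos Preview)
Your approach is essentially the same as the paper's: union-bound the per-block $\ell_2$ embedding guarantee, convert $\ell_2$ to $\ell_p$ on each block via the dimension-dependent norm comparisons (with $t$ on the sketched side and $w$ on the original side), sum over blocks, and then substitute into the two defining inequalities of a well-conditioned basis using the scaling $r\gamma_p$. The paper carries out exactly this argument.

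One correction on the running-time accounting: you write $(nt/w)\cdot r^5\log n=\poly(r\eps^{-1})$, but this is false. With $t=\Theta(r\log n(r+\log n))$ and $w=\Theta(r^6\log n(r+\log n))$ we have $nt/w=\Theta(n/r^5)$, so the cost of Theorem~\ref{thm:basisOld} on $FA$ is $\Theta(n\log n)$, which still depends on $n$. The point is not that $w$ drives this term into $\poly(r\eps^{-1})$; rather, $n\log n\le \nnz(A)\log n$ (since $\nnz(A)\ge n$), so this cost is absorbed into the \emph{leading} $O(\nnz(A)\log n)$ term, not the additive one. Your stated final bound is still correct, but the justification for where that cost lands should be fixed.
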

The following text is from \cite{CDMMMW}, we state it here for completeness. 
A well-conditioned basis can be used 
to solve $\ell_p$ regression problems, via an algorithm based on sampling the rows
of $A$ with probabilities proportional to the norms of the rows of the corresponding
well-conditioned basis. This entails
using for speed a second projection $\Pi_2$ applied to $AU$ on the right
to estimate the row norms, where $\Pi_2$ can be an $O(r) \times O(\log n)$ matrix of 
i.i.d. normal random variables, which is the same as is done in \cite{dmmw11}. 
This allows fast estimation of the $\ell_2$ norms of the rows of $AU$;
however, we need the $\ell_p$ norms of those rows, which we thus know up
to a factor of $r^{|1/2 - 1/p|}$.
We use these norm estimates
in the sampling algorithm of \cite{ddhkm09}; as discussed for
the running time bound of that paper, Theorem 7, this algorithm samples
a number of rows proportional to $r(\alpha\beta)^p$,
when an $(\alpha, \beta, p)$-well-conditioned
basis is available. This factor, together with
a sample complexity increase of $r^{p|1/2-1/p|} = r^{|p/2-1|}$ needed to compensate
for error due to using $\Pi_2$, gives a sample complexity increase for our algorithm
over that of \cite{ddhkm09} of a factor
of
\[
[r^{|p/2 - 1|}]r^{p+1}(tw)^{|p/2 - 1|} = \max(r, \log n)^{O(p)},
\]
while the leading term in the complexity (for $n\gg r$)
is reduced from $O(nr^5\log n)$ to $O(\nnz(A)\log n)$. 

Observe that if $r < \log n$,
then $\poly(r \eps^{-1} \log n)$ is less than $n \log n$, which is $O(\nnz(A) \log n)$. 
Hence, the overall time complexity is $O(\nnz(A) \log n) + \poly(r \eps^{-1})$. 

We adjust Theorem 4.1 of \cite{ddhkm09} 
and obtain the following.
\fi 

\begin{theorem}\label{thm:lp-running}
Given $\epsilon\in(0,1)$, a constant $p \in [1, \infty)$, 
$A\in\R^{n\times d}$ and $b\in\R^{n}$,
there is a sampling algorithm for $\ell_p$ regression that
constructs a coreset specified by a diagonal sampling matrix $D$, and 
a solution vector $\hat{x} \in \R^d$ that minimizes the weighted regression objective
$\|D(Ax-b)\|_p$.  The solution $\hat{x}$ satisfies, with probability
at least $1/2$, the relative error bound that
$\|A \hat{x}-b\|_p\le (1+\epsilon)\|Ax-b\|_p$
for all $x\in\R^d$.
Further, with probability
$1-o(1)$, the entire algorithm to construct $\hat{x}$  runs in time
$$O\left(\nnz(A)\log n \right ) + \poly(r \eps^{-1}).$$
\end{theorem}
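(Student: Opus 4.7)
The plan is to invoke the $\ell_p$ sampling framework of \cite{ddhkm09}, but with the slow well-conditioning step replaced by the fast construction in Lemma~\ref{lem:lpl2}, so that the dominant cost becomes the cost of applying our generalized sparse embedding once to $A$. First, using Theorems~2.6 and~2.7 of \cite{ckl12}, in $O(\nnz(A)\log d)+O(r^3)$ time we reduce to the case where $A$ has full column rank $r$, so that $d=r$; this does not change the column space or the regression objective.

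Next I would run {\sf Condition}$(A)$ from the paragraph preceding Lemma~\ref{lem:lpl2}. The embedding $F$ is block-diagonal with $n/w$ independent copies of the generalized sparse embedding $S$ from Theorem~\ref{thm:jlmain}, each applied to a block of $w$ rows of $A$; choosing $\delta=1/(100n)$ in Theorem~\ref{thm:jlmain} and union bounding over the $n/w$ blocks guarantees that every block $SA_i$ is simultaneously a constant-factor subspace embedding with high probability. The time to form $FA$ is $O(\nnz(A)\log n)$ by Theorem~\ref{thm:jlmain}, and the $QR$/well-conditioning step of Theorem~\ref{thm:basisOld} is applied to the small matrix $FA$ of $\poly(r,\log n)$ rows, costing $\poly(r,\log n)$. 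Lemma~\ref{lem:lpl2} then certifies that $AU/(r\gamma_p)$ is an $(\alpha,\beta\cdot\poly(\max(r,\log n)),p)$-well-conditioned basis of $C(A)$.

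Given the well-conditioned basis, I would follow the sampling procedure of \cite{ddhkm09}: sample rows of $[A\ b]$ with probabilities proportional to the $p$-th powers of the row norms of $AU$, solve the resulting weighted $\ell_p$ regression exactly (using, e.g., convex programming), and invoke Theorem~4.1 of \cite{ddhkm09} to conclude the $(1+\eps)$-relative error bound. To avoid spending $\Omega(nr)$ time computing row norms of $AU$ exactly, I would estimate them via a second projection $\Pi_2\in\R^{r\times O(\log n)}$ of i.i.d.\ normals, as in \cite{dmmw11}, so that all $n$ norm estimates can be read off from $A(U\Pi_2)$ in $O(\nnz(A)\log n)+\poly(r)$ time. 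Because $\Pi_2$ only approximates $\ell_2$ row norms while we need $\ell_p$ ones, the sampling probabilities are distorted by a factor of $r^{|1/2-1/p|}$, so I inflate the sample size by $r^{p|1/2-1/p|}=r^{|p/2-1|}$ to compensate; this increases the coreset size only by a $\poly(r)$ factor, keeping the sampled subproblem of size $\poly(r\eps^{-1})$.

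The main obstacle is the correctness of the overall pipeline rather than any single step: one has to verify that the high-probability guarantee of Theorem~\ref{thm:jlmain} holds simultaneously across all $n/w$ blocks (which is why we explicitly need embeddings with failure probability $1/\poly(n)$, produced by the generalized construction of \S\ref{sec:generalized} rather than the basic one), and that the $\ell_2$-to-$\ell_p$ distortion in row-norm estimation inflates the well-conditioning parameters only polynomially in $r$ and $\log n$, so that Theorem~4.1 of \cite{ddhkm09} still yields a $\poly(r\eps^{-1})$-size coreset. Once both are in place, the running time decomposes as $O(\nnz(A)\log n)$ for applying $F$ and computing $A(U\Pi_2)$, plus $\poly(r\eps^{-1})$ for everything done on sketched matrices, yielding the claimed bound. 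Adjusting constants on the various failure events (the block union bound, the sampling concentration, and the norm-estimation step) gives overall success probability at least $1/2$, and in fact $1-o(1)$ for producing the coreset itself.
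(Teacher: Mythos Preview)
Your approach is essentially the same as the paper's: reduce to full column rank via \cite{ckl12}, build the block-diagonal embedding $F$ from Theorem~\ref{thm:jlmain} with $\delta=1/(100n)$, invoke Lemma~\ref{lem:lpl2} to obtain a well-conditioned basis, estimate row norms with a Gaussian $\Pi_2$, and feed everything into the sampling algorithm of \cite{ddhkm09}. You also correctly identify the crucial point that the generalized sparse embedding of \S\ref{sec:generalized} is needed here precisely because the union bound over blocks requires $1/\poly(n)$ failure probability.

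There is one slip in your running-time accounting. The matrix $FA$ does \emph{not} have $\poly(r,\log n)$ rows: with $t=O(r\log n(r+\log n))$ and $w=\Theta(r^6\log n(r+\log n))$, it has $nt/w=O(n/r^5)$ rows, which is still linear in $n$. The reason the well-conditioning step of Theorem~\ref{thm:basisOld} is affordable is not that $FA$ is small, but that its $O(nr^5\log n)$ cost, applied to a matrix with $O(n/r^5)$ rows, becomes $O((n/r^5)\cdot r^5\log n)=O(n\log n)=O(\nnz(A)\log n)$. So the cancellation of the $r^5$ factor is the actual mechanism, and this step contributes to the leading $O(\nnz(A)\log n)$ term rather than to the additive $\poly$ term. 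With that correction, your argument matches the paper's.
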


\ifSTOC\else
\section{Preliminary Experiments}\label{sec:exper}

Some preliminary experiments show that a low-rank approximation technique that is a simplified
version of these algorithms is promising, and in practice may perform much better
than the general bounds of our results.

Here we apply the algorithm of Theorem~\ref{thm:SVD}, except that we skip the
randomized Hadamard and simply use a sparse embedding $\hat R$ 
and leverage score sampling. We compare the Frobenius error of the resulting $LDW^\top$
with that of the best rank-$k$ approximation.

%
%
%
In our experiments, the matrices tested
are $n\times d$.

The resulting low-rank approximation
was tested for $t_R$ (the number of columns of $\hat R$)
taking values of the form$\lfloor 1.6^z - 0.5\rfloor$, for integer
$z\ge 1$, while $t_R\le d/5$.
The number $t_S$ of rows of $S$ was chosen such that the condition number of $SU$ was at most $1.2$. (Since $U$ has orthogonal columns, its condition number is 1, so a large enough leverage
score sample will have this property.)
For such $t_R$ and $t_S$, we took the ratio $R_e$ of the Frobenius norm of the error to the Frobenius norm of the error of the
best rank-$k$ approximation. The resulting points $(k/t_R, R_e-1)$ were generated,
for all test matrices, for three independent trials, resulting in a set of points $P$.

The test matrices are from the University of Florida 
Sparse Matrix Collection,
essentially most of those with at most $10^5$ nonzero entries, and with $n$ up to about  7000.
There were 1155 matrices tested, from 70 sub-collections of matrices, each such
sub-collection representing a particular application area.

The curve in Figure 1 represents the results of these tests, where for a particular point
$(x,y)$ on the curve, at most one percent of points $(t/k_R, R_e-1)\in P$ gave a result where
$k/t_R < x$ but $R_e-1 > y$.

Figure~2 shows a similar curve for the points $(t_R/t_S, \cond(SU)-1)$; thus the necessary
ratio $t_R/t_S$, so that $\cond(SU)\le 1.2$, as for the results in FIgure~1, need be no smaller than
about $1/110$.

%
%
%
%

\begin{figure*}
\begin{center}
  \includegraphics[scale=0.8]{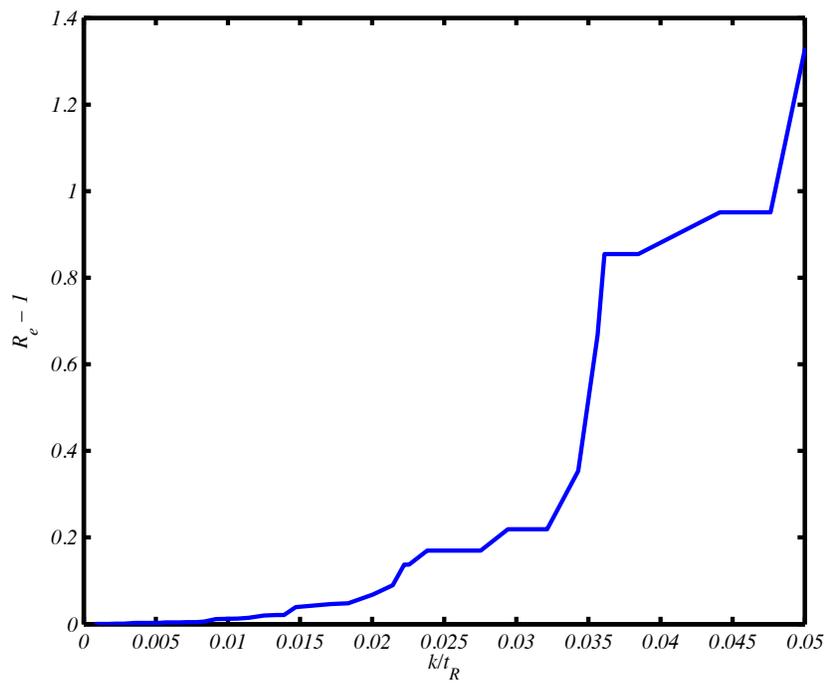} 
 \end{center}
\caption{A ``1\%-Pareto'' curve of error as a function of the size of $\hat R$}
\end{figure*}

\begin{figure*}
\begin{center}
  \includegraphics[scale=0.8]{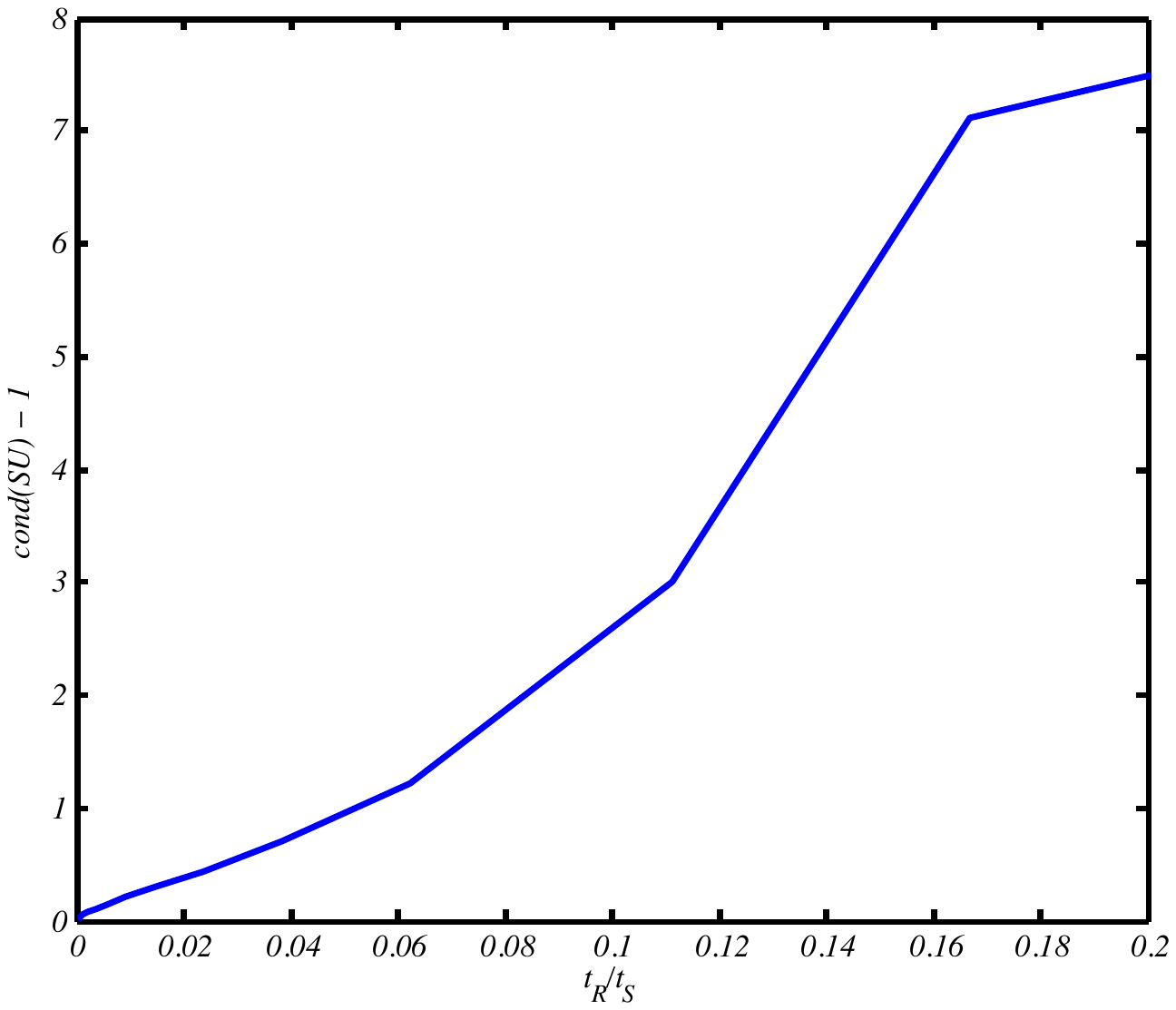} 
 \end{center}
\caption{A 1\%-Pareto curve of $\cond(SU)-1$  as a function of the size of $\hat S$ relative to $\hat R$}
\end{figure*}


\fi

%
%
%

\section*{Acknowledgements}
We acknowledge the support from XDATA program of the Defense Advanced Research Projects Agency (DARPA), administered through Air Force Research Laboratory contract FA8750-12-C-0323.
We thank Jelani Nelson and the anonymous STOC referees for helpful comments.

\bibliographystyle{plain}
\bibliography{main}

\begin{thebibliography}{10}

\bibitem{afkm01}
Dimitris Achlioptas, Amos Fiat, Anna~R. Karlin, and Frank McSherry.
\newblock Web search via hub synthesis.
\newblock In {\em FOCS}, pages 500--509, 2001.

\bibitem{am05}
Dimitris Achlioptas and Frank McSherry.
\newblock On spectral learning of mixtures of distributions.
\newblock In {\em COLT}, pages 458--469, 2005.

\bibitem{am07}
Dimitris Achlioptas and Frank McSherry.
\newblock Fast computation of low-rank matrix approximations.
\newblock {\em J. ACM}, 54(2), 2007.

\bibitem{ahk06}
Sanjeev Arora, Elad Hazan, and Satyen Kale.
\newblock A fast random sampling algorithm for sparsifying matrices.
\newblock In {\em APPROX-RANDOM}, pages 272--279, 2006.

\bibitem{afkms01}
Yossi Azar, Amos Fiat, Anna~R. Karlin, Frank McSherry, and Jared Saia.
\newblock Spectral analysis of data.
\newblock In {\em STOC}, pages 619--626, 2001.

\bibitem{BG}
C.~{Boutsidis} and A.~{Gittens}.
\newblock {Improved matrix algorithms via the Subsampled Randomized Hadamard
  Transform}.
\newblock {\em ArXiv e-prints}, March 2012.

\bibitem{ccf04}
Moses Charikar, Kevin Chen, and Martin Farach-Colton.
\newblock Finding frequent items in data streams.
\newblock {\em Theor. Comput. Sci.}, 312(1):3--15, 2004.

\bibitem{ckl12}
Ho~Yee Cheung, Tsz~Chiu Kwok, and Lap~Chi Lau.
\newblock Fast matrix rank algorithms and applications.
\newblock In {\em STOC}, pages 549--562, 2012.

\bibitem{CDMMMW}
K.~Clarkson, P.~Drineas, Malik Magdon-Ismail, M.~Mahoney, Xiangrui Meng, and
  David~P. Woodruff.
\newblock The fast {C}auchy transform and faster robust linear regression.
\newblock In {\em SODA}, 2013.

\bibitem{cw09}
Kenneth~L. Clarkson and David~P. Woodruff.
\newblock Numerical linear algebra in the streaming model.
\newblock In {\em STOC}, pages 205--214, 2009.

\bibitem{ddhkm09}
Anirban Dasgupta, Petros Drineas, Boulos Harb, Ravi Kumar, and Michael~W.
  Mahoney.
\newblock Sampling algorithms and coresets for $\ell_p$ regression.
\newblock {\em SIAM J. Comput.}, 38(5):2060--2078, 2009.

\bibitem{dks10}
Anirban Dasgupta, Ravi Kumar, and Tam{\'a}s Sarl{\'o}s.
\newblock A sparse {J}ohnson-{L}indenstrauss transform.
\newblock In {\em STOC}, pages 341--350, 2010.

\bibitem{drvw06}
Amit Deshpande, Luis Rademacher, Santosh Vempala, and Grant Wang.
\newblock Matrix approximation and projective clustering via volume sampling.
\newblock In {\em SODA}, pages 1117--1126, 2006.

\bibitem{dv06}
Amit Deshpande and Santosh Vempala.
\newblock Adaptive sampling and fast low-rank matrix approximation.
\newblock In {\em APPROX-RANDOM}, pages 292--303, 2006.

\bibitem{dfkvv04}
Petros Drineas, Alan~M. Frieze, Ravi Kannan, Santosh Vempala, and V.~Vinay.
\newblock Clustering large graphs via the singular value decomposition.
\newblock {\em Machine Learning}, 56(1-3):9--33, 2004.

\bibitem{dkm06}
Petros Drineas, Ravi Kannan, and Michael~W. Mahoney.
\newblock Fast {M}onte {C}arlo algorithms for matrices {I}: Approximating
  matrix multiplication.
\newblock {\em SIAM J. Comput.}, 36(1):132--157, 2006.

\bibitem{dkm06a}
Petros Drineas, Ravi Kannan, and Michael~W. Mahoney.
\newblock Fast {M}onte {C}arlo algorithms for matrices {II}: Computing a
  low-rank approximation to a matrix.
\newblock {\em SIAM J. Comput.}, 36(1):158--183, 2006.

\bibitem{dkm06b}
Petros Drineas, Ravi Kannan, and Michael~W. Mahoney.
\newblock Fast {M}onte {C}arlo algorithms for matrices {III}: Computing a
  compressed approximate matrix decomposition.
\newblock {\em SIAM J. Comput.}, 36(1):184--206, 2006.

\bibitem{dkr02}
Petros Drineas, Iordanis Kerenidis, and Prabhakar Raghavan.
\newblock Competitive recommendation systems.
\newblock In {\em STOC}, pages 82--90, 2002.

\bibitem{dmmw11}
Petros Drineas, Malik Magdon-Ismail, Michael~W. Mahoney, and David~P. Woodruff.
\newblock Fast approximation of matrix coherence and statistical leverage.
\newblock {\em CoRR}, abs/1109.3843, 2011.

\bibitem{DMMW12}
Petros Drineas, Michael Mahoney, Malik Magdon-Ismail, and David~P. Woodruff.
\newblock Fast approximation of matrix coherence and statistical leverage.
\newblock In {\em ICML}, 2012.

\bibitem{dm05}
Petros Drineas and Michael~W. Mahoney.
\newblock Approximating a {G}ram matrix for improved kernel-based learning.
\newblock In {\em COLT}, pages 323--337, 2005.

\bibitem{dmm06}
Petros Drineas, Michael~W. Mahoney, and S.~Muthukrishnan.
\newblock Sampling algorithms for $\ell_2$ regression and applications.
\newblock In {\em SODA}, pages 1127--1136, 2006.

\bibitem{dmm06b}
Petros Drineas, Michael~W. Mahoney, and S.~Muthukrishnan.
\newblock Subspace sampling and relative-error matrix approximation:
  Column-based methods.
\newblock In {\em APPROX-RANDOM}, pages 316--326, 2006.

\bibitem{dmm06c}
Petros Drineas, Michael~W. Mahoney, and S.~Muthukrishnan.
\newblock Subspace sampling and relative-error matrix approximation:
  Column-row-based methods.
\newblock In {\em ESA}, pages 304--314, 2006.

\bibitem{dmms11}
Petros Drineas, Michael~W. Mahoney, S.~Muthukrishnan, and Tam{\'a}s Sarl{\'o}s.
\newblock Faster least squares approximation.
\newblock {\em Numerische Mathematik}, 117(2):217--249, 2011.

\bibitem{fkv04}
Alan~M. Frieze, Ravi Kannan, and Santosh Vempala.
\newblock Fast {M}onte-{C}arlo algorithms for finding low-rank approximations.
\newblock {\em J. ACM}, 51(6):1025--1041, 2004.

\bibitem{GvL}
Gene~H. Golub and Charles~F. van Loan.
\newblock {\em Matrix computations (3. ed.)}.
\newblock Johns Hopkins University Press, 1996.

\bibitem{HMT}
N.~{Halko}, P.-G. {Martinsson}, and J.~A. {Tropp}.
\newblock {Finding structure with randomness: Probabilistic algorithms for
  constructing approximate matrix decompositions}.
\newblock {\em ArXiv e-prints}, September 2009.

\bibitem{HW}
D.L. Hanson and F.T. Wright.
\newblock A bound on tail probabilities for quadratic forms in independent
  random variables.
\newblock {\em The Annals of Mathematical Statistics}, 42(3):1079--1083, 1971.

\bibitem{kn10}
Daniel~M. Kane and Jelani Nelson.
\newblock A sparser {J}ohnson-{L}indenstrauss transform.
\newblock {\em CoRR}, abs/1012.1577, 2010.

\bibitem{KN12}
Daniel~M. Kane and Jelani Nelson.
\newblock Sparser {J}ohnson-{L}indenstrauss transforms.
\newblock In {\em SODA}, pages 1195--1206, 2012.

\bibitem{knpw11}
Daniel~M. Kane, Jelani Nelson, Ely Porat, and David~P. Woodruff.
\newblock Fast moment estimation in data streams in optimal space.
\newblock In {\em STOC}, pages 745--754, 2011.

\bibitem{ksv08}
Ravindran Kannan, Hadi Salmasian, and Santosh Vempala.
\newblock The spectral method for general mixture models.
\newblock {\em SIAM J. Comput.}, 38(3):1141--1156, 2008.

\bibitem{k99}
Jon~M. Kleinberg.
\newblock Authoritative sources in a hyperlinked environment.
\newblock {\em J. ACM}, 46(5):604--632, 1999.

\bibitem{Luen}
D.G. Luenberger and Y.~Ye.
\newblock {\em Linear and nonlinear programming}, volume 116.
\newblock Springer Verlag, 2008.

\bibitem{mz11}
Avner Magen and Anastasios Zouzias.
\newblock Low rank matrix-valued {C}hernoff bounds and approximate matrix
  multiplication.
\newblock In {\em SODA}, pages 1422--1436, 2011.

\bibitem{m01}
Frank McSherry.
\newblock Spectral partitioning of random graphs.
\newblock In {\em FOCS}, pages 529--537, 2001.

\bibitem{MengMahoney}
X.~{Meng} and M.~W. {Mahoney}.
\newblock {Low-distortion Subspace Embeddings in Input-sparsity Time and
  Applications to Robust Linear Regression}.
\newblock {\em ArXiv e-prints}, October 2012.

\bibitem{MSM}
X.~{Meng}, M.~A. {Saunders}, and M.~W. {Mahoney}.
\newblock {LSRN: A Parallel Iterative Solver for Strongly Over- or
  Under-Determined Systems}.
\newblock {\em ArXiv e-prints}, September 2011.

\bibitem{MP}
Gary~L. Miller and Richard Peng.
\newblock Iterative approaches to row sampling.
\newblock {\em CoRR}, abs/1211.2713, 2012.

\bibitem{NN}
Jelani Nelson and Huy~L. Nguyen.
\newblock {OSNAP}: Faster numerical linear algebra algorithms via sparser
  subspace embeddings.
\newblock {\em CoRR}, abs/1211.1002, 2012.

\bibitem{nw10}
Jelani Nelson and David~P. Woodruff.
\newblock Fast {M}anhattan sketches in data streams.
\newblock In {\em PODS}, pages 99--110, 2010.

\bibitem{ndt09}
Nam~H. Nguyen, Thong~T. Do, and Trac~D. Tran.
\newblock A fast and efficient algorithm for low-rank approximation of a
  matrix.
\newblock In {\em STOC}, pages 215--224, 2009.

\bibitem{prtv00}
Christos~H. Papadimitriou, Prabhakar Raghavan, Hisao Tamaki, and Santosh
  Vempala.
\newblock Latent semantic indexing: A probabilistic analysis.
\newblock {\em J. Comput. Syst. Sci.}, 61(2):217--235, 2000.

\bibitem{sbmd}
Saurabh Paul, Christos Boutsidis, Malik Magdon-Ismail, and Petros Drineas.
\newblock Random projections for support vector machines.
\newblock {\em CoRR}, abs/1211.6085, 2012.

\bibitem{Recht}
Benjamin Recht.
\newblock A simpler approach to matrix completion.
\newblock {\em CoRR}, abs/0910.0651, 2009.

\bibitem{Rudelson}
M.~Rudelson.
\newblock Random vectors in the isotropic position.
\newblock {\em Journal of Functional Analysis}, 164(1):60--72, 1999.

\bibitem{rv07}
Mark Rudelson and Roman Vershynin.
\newblock Sampling from large matrices: An approach through geometric
  functional analysis.
\newblock {\em J. ACM}, 54(4), 2007.

\bibitem{s06}
Tam{\'a}s Sarl{\'o}s.
\newblock Improved approximation algorithms for large matrices via random
  projections.
\newblock In {\em FOCS}, pages 143--152, 2006.

\bibitem{tz04}
Mikkel Thorup and Yin Zhang.
\newblock Tabulation based 4-universal hashing with applications to second
  moment estimation.
\newblock In {\em SODA}, pages 615--624, 2004.

\bibitem{tb_nla}
Lloyd~N. Trefethen and David Bau.
\newblock {\em Numerical linear algebra}.
\newblock SIAM, 1997.

\bibitem{Zouzias}
Anastasios Zouzias.
\newblock A matrix hyperbolic cosine algorithm and applications.
\newblock {\em CoRR}, abs/1103.2793, 2011.

\bibitem{zf12}
Anastasios Zouzias and Nikolaos~M. Freris.
\newblock Randomized extended {K}aczmarz for solving least-squares.
\newblock {\em CoRR}, abs/1205.5770, 2012.

\end{thebibliography}

\ifSTOC\else
\appendix

\section{Deferred proofs}

\begin{proofof}{of Lemma \ref{lem:betabound}}
Since by assumption $A$ has orthonormal columns,
$\normF{A(\tilde X - X^*)} = \normF{A^\top A(\tilde X - X^*)}$,
so it suffices to bound the latter, or $\normF{\beta}$ where
$\beta\equiv A^\top A(\tilde X - X^*)$.
By Fact \ref{fact:normal}, we have 
\begin{equation}\label{eq:approxlsnormal}
A^\top  \ZZ ^\top  \ZZ  (A \tilde{X} - B) = 0.
\end{equation}
To bound $\normF{\beta}$, we bound
$\normF{A^\top  \ZZ ^\top  \ZZ  A \beta}$, and then show that
this implies that $\normF{\beta}$ is small.
Using that $A A^\top  A = A$ and (\ref{eq:approxlsnormal}), we have
\begin{align*}
A^\top  \ZZ ^\top  \ZZ  A \beta
        & = A^\top  \ZZ ^\top  \ZZ  A A^\top  A (\tilde{X} - X^*)
        \\ & = A^\top  \ZZ ^\top  \ZZ  A (\tilde{X} - X^*)
        \\ & = A^\top  \ZZ ^\top  \ZZ  A(\tilde{X} - X^*) + A^\top  \ZZ ^\top  \ZZ  (B-A\tilde{X})
        \\ & = A^\top  \ZZ ^\top  \ZZ  (B - AX^*).
\end{align*}
Using the hypothesis of the theorem, 
\begin{align*}
\normF{A^\top  \ZZ ^\top  \ZZ  A \beta}
        = \normF{A^\top  \ZZ ^\top  \ZZ  (B - A X^*)}
        \le \sqrt{\epsilon/r}\normF{A}\normF{B-A X^*}
        \le \sqrt{\epsilon}\normF{B - A X^*}.
\end{align*}
To show that this bound implies that $\normF{\beta}$
is small, we use the subadditivity of $\normF{}$ and
the property of any conforming matrices $C$ and $D$,
that $\normF{CD}\le \norm{C}_2\normF{D}$,
to obtain
\begin{align*}
\normF{\beta}
       \le \normF{A^\top  \ZZ ^\top  \ZZ  A \beta} +  \normF{A^\top  \ZZ ^\top  \ZZ  A \beta- \beta}
       \le \sqrt{\epsilon}\normF{B - A X^*} + \norm{A^\top  \ZZ ^\top  \ZZ  A - I }_2 \normF{\beta}.
\end{align*}
By hypothesis,
$\norm{\ZZ A x}^2 = (1\pm\epsilon_0)\norm{x}^2$ for all $x$,
so that $A^\top  \ZZ ^\top  \ZZ  A - I$ has eigenvalues bounded in magnitude by $\epsilon_0^2$,
which implies singular values with the same bound, so that 
$\norm{A^\top  \ZZ ^\top  \ZZ  A - I }_2\le \epsilon^2_0$.
Thus
$\normF{\beta}\le \sqrt{\epsilon}\normF{B - AX^*} + \epsilon^2_0\normF{\beta}$,
or
\begin{equation*}
\normF{\beta} \le \sqrt{\epsilon}\normF{B - A X^*} / (1-\epsilon^2_0) \le 2\sqrt{\epsilon}\normF{B - A X^*},
\end{equation*}
since $\epsilon^2_0\le 1/2$. This bounds $\normF{\beta}$,
and so proves the lemma.
\end{proofof}

\begin{proofof}{of Lemma~\ref{lem:len fixed sparse}}
Let $S = \Phi D$ with associated hash function $h:[n] \rightarrow [t]$. 
For $A_i$ denoting the $i$-th column of $A$, 
let $A_i(b)$ denote the column vector whose $\ell$-th coordinate is $0$ if 
$h(\ell) \neq b$, and whose $\ell$-th coordinate is $A_{\ell,i}$ if $h(\ell) = b$.
We use the second moment method to bound $\|SA\|_F^2$. For the expectation,
\begin{eqnarray}\label{eqn:exp}
{\bf E}_{D,h}[\|SA\|_F^2] = \sum_{i \in [d]} {\bf E}_{D,h}[\|SA_i\|_2^2]
= \sum_{i \in [d]} \sum_{b \in [t]} {\bf E}_{D,h}[(\sum_{\ell \mid h(\ell) = b} A_{\ell, i} D_{\ell, \ell} )^2]
= {\bf E}_h \left [\sum_{i \in [d]} \sum_{b \in [t]} \|A_i(b)\|_2^2 \right ]
= \|A\|_F^2.
\end{eqnarray}
For the second moment, 
\begin{eqnarray}\label{eqn:var}
{\bf E}_{D,h}[\|SA\|_F^4] = \sum_{i \in [d]} {\bf E}_{D,h}[\|SA_i\|_2^4]
+ \sum_{i \neq j \in [d]} {\bf E}_{D,h}[\|SA_i\|_2^2 \cdot \|SA_j\|_2^2].
\end{eqnarray}
We handle the first term in (\ref{eqn:var}) as follows:
\begin{eqnarray*}
{\bf E}_{D,h} [\|SA_i\|_2^4] 
& = & {\bf E}_h \left [\sum_{b,b' \in [t]} {\bf E}_D [(SA_i)_b^2 \cdot (SA_i)_{b'}^2] \right ]\\
& = & {\bf E}_h \left [\sum_{b \in [t]} {\bf E}_D [(SA_i)_b^4] + \sum_{b \neq b' \in [t]} {\bf E}_D [(SA_i)_b^2] \cdot {\bf E}_D [(SA_i)_{b'}^2] \right ]\\
& = & {\bf E}_h \left [\sum_{b \in [t]} {\bf E}_D [(\sum_{\ell \mid h(\ell) = b} A_{\ell, i} D_{\ell, \ell})^4]
+ \sum_{b \neq b' \in [t]} {\bf E}_D [(\sum_{\ell \mid h(\ell) = b} A_{\ell, i} D_{\ell, \ell})^2] \cdot
{\bf E}_D [(\sum_{\ell \mid h(\ell) = b'} A_{\ell, i} D_{\ell, \ell})^2]\right ]\\
& \leq & {\bf E}_h \left [\sum_{b \in [t]} \left (\sum_{\ell \mid h(\ell) = b} A^4_{\ell, i}  +
{4 \choose 2} \sum_{\ell < \ell' \mid h(\ell) = h(\ell') = b} A^2_{\ell, i} A^2_{\ell',i} \right )
+ \sum_{b \neq b' \in [t]} \|A_i(b)\|_2^2 \cdot \|A_i(b')\|_2^2 \right ]\\
& \leq & {\bf E}_h \left [\|A_i\|_4^4 \right ]+ \frac{6}{t} \|A_i\|_2^4 + {\bf E}_h \left [\sum_{b \neq b' \in [t]} \|A_i(b)\|_2^2 \cdot \|A_i(b')\|_2^2 \right ]\\
& \leq & {\bf E}_h \left [\sum_{b \in [t]} \|A_i(b)\|_2^4 \right ]+ \frac{6}{t} \|A_i\|_2^4 + 
{\bf E}_h \left [\sum_{b \neq b' \in [t]} \|A_i(b)\|_2^2 \cdot \|A_i(b')\|_2^2 \right ]\\
& \leq & \frac{6}{t} \|A_i\|_2^4 + \|A_i\|_2^4.
\end{eqnarray*}
For the second term in (\ref{eqn:var}), for $i \neq j \in [d]$,
\begin{eqnarray*}
{\bf E}_{D,h}[\|SA_i\|_2^2 \cdot \|SA_j\|_2^2]
& = & {\bf E}_{D,h} \left [\sum_{b \in [t]} \left (\sum_{\ell \mid h(\ell) = b} A_{\ell, i} D_{\ell, \ell} \right )^2
\left (\sum_{\ell' \mid h(\ell') = b} A_{\ell', j} D_{\ell', \ell'} \right )^2 \right ]\\
&& + {\bf E}_{D,h} \left [\sum_{b \neq b' \in [t]} \left (\sum_{\ell \mid h(\ell) = b} A_{\ell, i} D_{\ell, \ell} \right )^2
\left (\sum_{\ell' \mid h(\ell') = b'} A_{\ell', j} D_{\ell', \ell'} \right )^2 \right ]\\
& = & {\bf E}_h \left [\sum_{b \in [t]} 
\left (\sum_{\ell, \ell' \mid h(\ell) = h(\ell') = b} A_{\ell, i} A_{\ell',i} D_{\ell, \ell} D_{\ell', \ell'} \right )
\left (\sum_{\ell, \ell' \mid h(\ell) = h(\ell') = b} A_{\ell, j} A_{\ell',j} D_{\ell, \ell} D_{\ell', \ell'} \right ) \right ]\\
&& + {\bf E}_h \left [\sum_{b \in [t]} \|A_i(b)\|_2^2 \cdot \|A_j(b)\|_2^2 
+ \sum_{b \neq b' \in [t]} \|A_i(b)\|_2^2 \cdot \|A_j(b')\|_2^2 \right ]\\
& = & \|A_i\|_2^2 \cdot \|A_j\|_2^2 + {\bf E}_h \left [\sum_{b \in [t]} 4 \sum_{\ell < \ell' \mid h(\ell) = h(\ell') = b} 
A_{\ell, i} A_{\ell', i} A_{\ell, j} A_{\ell', j} \right ],
\end{eqnarray*}
where the constant $4$ arises because if we choose indices $\ell < \ell'$ from
$\left (\sum_{\ell, \ell' \mid h(\ell) = h(\ell') = b} A_{\ell, i} A_{\ell',i} D_{\ell, \ell} D_{\ell', \ell'} \right )$
we need to choose the same $\ell$ and $\ell'$ from
$\left (\sum_{\ell, \ell' \mid h(\ell) = h(\ell') = b} A_{\ell, j} A_{\ell',j} D_{\ell, \ell} D_{\ell', \ell'} \right )$
in order to have a non-zero expectation, and there are $4$ ways of doing this for distinct $\ell, \ell'$.
Continuing, 
\begin{eqnarray*}
\|A_i\|_2^2 \cdot \|A_j\|_2^2 + {\bf E}_h \left [\sum_{b \in [t]} 4 \sum_{\ell < \ell' \mid h(\ell) = h(\ell') = b} 
A_{\ell, i} A_{\ell', i} A_{\ell, j} A_{\ell', j} \right ]
& \leq & \|A_i\|_2^2 \cdot \|A_j\|_2^2 + 
{\bf E}_h \left [4\sum_{b \in [t]}\langle A_i(b), A_j(b) \rangle^2 \right ]\\
& \leq &  \|A_i\|_2^2 \cdot \|A_j\|_2^2 + 
{\bf E}_h \left [4\sum_{b \in [t]} \|A_i(b)\|_2^2 \cdot \|A_j(b')\|_2^2 \right ]\\
& = & \|A_i\|_2^2 \cdot \|A_j\|_2^2 + 
\frac{4}{t} \sum_{\ell, \ell' \in [n]} A_{\ell, i}^2 A_{\ell, j}^2\\
& = & \left(1 + \frac{4}{t} \right )\|A_i\|_2^2 \cdot \|A_j\|_2^2.
\end{eqnarray*}
Combining (\ref{eqn:exp}) with (\ref{eqn:var}) and the bounds on the terms in (\ref{eqn:var}) above,
\begin{eqnarray*}
{\bf Var}[\|SA\|_F^2]
& \leq & \left (\sum_{i \in [d]} \frac{6}{t} \|A_i\|_2^4 + \|A_i\|_2^4 \right ) 
+ \sum_{i \neq j \in [d]} \left(1 + \frac{4}{t} \right )\|A_i\|_2^2 \cdot \|A_j\|_2^2 - \|A\|_F^2\\
& \leq & \frac{6}{t} \|A\|_F^2\\
& = & \frac{6}{t} {\bf E}[\|SA\|_F^2].
\end{eqnarray*}
The lemma now follows by Chebyshev's inequality, for appropriate $t = \Omega(\eps^{-2})$. 
\end{proofof}

\begin{proofof}{of Lemma~\ref{lem:len fixed srht}}
Lemma 15 of \cite{BG} shows that $\norm{SA} \le (1+\eps)\norm{A}$
with arbitrarily low failure probability,
and the other direction follows from a similar argument. Briefly:
the expectation of $\norm{SA}^2$ is $\norm{A}^2$, by construction,
and Lemma 11 of \cite{BG} implies that with arbitrarily small
failure probability, all rows of $SA$ will have squared norm
at most $\beta \equiv \frac{\alpha}{t}\norm{A}^2$, where $\alpha$ is a value
in $O(\log n)$. Assuming that this bound holds,
it follows from Hoeffding's inequality
that the probability that $| \norm{SA}^2 - \norm{A}^2 | \ge \eps \norm{A}^2$
is at most $2\exp(-2[\eps\norm{A}^2]^2 / t\beta^2)$, or
$2\exp(-2\eps^2t/\alpha^2)$, so that $t = \Theta(\eps^{-2}(\log n)^2 )$ suffices
to make the failure probability at most $1/10$.
\end{proofof}

\begin{proofof}{of Lemma \ref{lem:lpl2}}
This is almost exactly the same as in \cite{CDMMMW}, we simply adjust notation and parameters. 
Applying Theorem \ref{thm:jlmain}, we have that
with probability at least $1-1/(100w)$, for all $x \in \mathbb{R}^r$, if we consider $y = Ax$ and
write $y^T = [z_1^T, z_2^T, \ldots, z_{n/w}^T]$, then for all $i \in [n/w]$,  
\begin{eqnarray*}
\sqrt{\textstyle\frac12}\norm{z_i}_2\le\norm{Sz_i}_2\le
\sqrt{\textstyle\frac32}\norm{z_i}_2
\end{eqnarray*}
By relating the $2$-norm and the $p$-norm, for $1 \leq p \leq 2$, we have
$$\norm{Sz_i}_p
	\le t^{1/p-1/2}\norm{Sz_i}_2
	\le t^{1/p-1/2} \sqrt{\textstyle\frac32} \norm{z_i}_2
	\le t^{1/p-1/2} \sqrt{\textstyle\frac32} \norm{z_i}_p,$$
and similarly,
$$\norm{Sz_i}_p
	\ge \norm{Sz_i}_2
	\ge \sqrt{\textstyle\frac12}\norm{z_i}_2
	\ge \sqrt{\textstyle\frac12} w^{1/2-1/p}\norm{z_i}_p.
$$
If $p > 2$, then 
$$
\norm{Sz_i}_p
	\le \norm{Sz_i}_2
	\le \sqrt{\textstyle\frac32} \norm{z_i}_2
	\le \sqrt{\textstyle\frac32} w^{1/2 - 1/p} \norm{z_i}_p,$$
and similarly,
$$
\norm{Sz_i}_p
	\ge t^{1/p-1/2}\norm{Sz_i}_2
	\ge t^{1/p-1/2} \sqrt{\textstyle\frac12} \norm{z_i}_2
 	\ge t^{1/p-1/2} \sqrt{\textstyle\frac12}  \norm{z_i}_p.
$$
Since $\norm{Ax}_p^p = \norm{y}_p^p = \sum_i \norm{z_i}^p$
and $\norm{FAx}_p^p = \sum_i \norm{Sz_i}_p^p$,
for $p\in [1,2]$ we have with probability $1-1/(100w)$
\[
\sqrt{\textstyle\frac12} w^{1/2-1/p}\norm{Ax}_p
	\le \norm{FAx}_p \le \sqrt{\textstyle\frac32} t^{1/p-1/2} \norm{Ax}_p,
\]
and for $p\in [2,\infty)$ with probability $1-1/(100w)$
\[
\sqrt{\textstyle\frac12} t^{1/p-1/2} \norm{Ax}_p
	\le \norm{FAx}_p \le \sqrt{\textstyle\frac32} w^{1/2 - 1/p} \norm{Ax}_p.
\]
In either case,
\begin{equation}\label{eqn:first}
\norm{Ax}_p \le \gamma_p \norm{FAx}_p \le \sqrt{3} (tw)^{|1/p-1/2|}\norm{Ax}_p.
\end{equation}


Applying Theorem \ref{thm:basisOld}, we have, from the definition
of a $(\alpha,\beta,p)$-well-conditioned basis, that 
\begin{eqnarray}\label{eqn:second}
\|FA U \|_p \leq \alpha 
\end{eqnarray}
and for all $x \in \mathbb{R}^d$,
\begin{eqnarray}\label{eqn:third}
\|x\|_q \leq \beta \|FAU\|_p.
\end{eqnarray}

Combining (\ref{eqn:first}) and (\ref{eqn:second}), we have that with probability at least $1-1/(100w)$,
\begin{eqnarray*}
\|A U/(r\gamma_p) \|_p \leq \sum_i \|A U_i/r\gamma_p \|_p
	\leq \sum_i  \|F A U_i/r\|_p
	\leq \alpha.
\end{eqnarray*}
Combining (\ref{eqn:first}) and (\ref{eqn:third}), we have that with probability at least $1-1/(100w)$,
for all $x \in \mathbb{R}^r$,
\begin{eqnarray*}
\|x\|_q \leq \beta \|F A U x\|_p
	\leq \beta \sqrt{3} r (tw)^{|1/p-1/2|} \|A U\frac{1}{r\gamma_p}  x\|_p.
\end{eqnarray*}
Hence $A U/(r\gamma_p)$ is an 
$(\alpha, \beta \sqrt{3} r (tw)^{|1/p-1/2|} , p)$-well-conditioned
basis. The time to compute
$F A$ is $O(\nnz(A) \log n)$ by Theorem \ref{thm:jlmain}. Notice
that $FA$ is an $nt/w \times n$ matrix, which is $O(n/r^5) \times r$, 
and so the time to 
compute $U$ from $F A$ is $O((n/r^5) r^5 \log n) = O(\nnz(A) \log n)$, 
since $\nnz(A) \geq n$.  
\end{proofof}
\fi

\end{document}